\documentclass[12pt]{article}

\usepackage[english]{babel}

\usepackage[margin=1in]{geometry}  
\usepackage{setspace}              
\usepackage{times}                 
\usepackage[authoryear]{natbib}

\usepackage{graphicx}
\usepackage{float}
\usepackage{verbatim}
\usepackage{multirow}
\usepackage{booktabs} 
\usepackage{threeparttable}
\usepackage{amsmath,amssymb, amsbsy}
\usepackage{mathtools}
\usepackage{tikz-cd}
\usepackage{bbm}
\usepackage{tikz}
\usepackage{changepage}
\usepackage{makecell}
\usepackage{algorithm,algpseudocode}
\usepackage{xcolor}
\definecolor{evergreen}{RGB}{0,102,76}

\usepackage{amsthm}
\usepackage[shortlabels]{enumitem}
\usepackage[colorlinks=true, allcolors=blue]{hyperref}
\newtheoremstyle{roman}
  {\topsep}   
  {\topsep}   
  {\normalfont} 
  {}          
  {\normalfont\bfseries} 
  {.}         
  { }         
  {}          
\newtheorem{thm}{Theorem}
\newtheorem{cond}{Condition}

\newtheorem{lemma}{Lemma}
\newtheorem{prop}{Proposition}
\newtheorem{coro}{Corollary}

\newtheorem{assumption}{Assumption} 

\theoremstyle{remark}

\theoremstyle{definition}
\newtheorem{definition}{Definition}

\theoremstyle{remark}
\newtheorem*{remark}{Remark}

\DeclareMathOperator*{\ep}{\mathbb{E}}

\DeclareMathOperator*{\ind}{\mathbbm{1}}

\newcommand{\indep}{\perp \!\!\! \perp}

\title{Conformalized Lee Inference: Distribution-Free Prediction Sets for Individual Treatment Effect under Monotone Sample Selection}
\author{Jung Hyub Lee \footnote{Graduate School of Economics, University of Tokyo, 7-3-1 Hongo, Bunkyo-ku, Tokyo, Japan 113-0033, Email: \href{mailto:jhlee@e.u-tokyo.ac.jp}{jhlee@e.u-tokyo.ac.jp}.}  }
\date{\today}
\begin{document}

\maketitle
\vspace{-2em}
\begin{center}
\end{center}
\begin{abstract}
Treatment can change which outcomes are observed, so treated-selected and selected-control units need not represent the same latent population. This paper proposes conformalized Lee inference for counterfactual prediction under randomized treatment and monotone sample selection. The procedure uses treated-selected observations to fit and calibrate an arbitrary prediction rule and replaces the usual $(1-\alpha)$ score quantile with the adjusted $(1-\alpha\pi)$ quantile, where $\pi$ is the identified share of always-selected units among treated-selected units. The resulting prediction set has finite-sample, distribution-free marginal coverage over the sharp Lee ambiguity set. For a selected-control unit, subtracting the observed untreated outcome yields a marginal prediction set for the realized individual treatment effect. The adjusted population cutoff is minimax optimal over the reduced-information identification region. Simulations show that ordinary split-conformal prediction can under-cover under distribution shifts induced by selection, whereas the adjusted procedures restore coverage. Empirical analysis uses the National Job Corps Study data to illustrate prediction sets for individual wage effects of assignment to program access.
\end{abstract}

\noindent 
\textbf{Keywords:} Conformal prediction; Lee bounds; sample selection; individual treatment effects; counterfactual prediction; partial identification; distribution-free inference. \\

\noindent 
\textbf{JEL Classification:} C14, C21, C24, C53. \\

\section{Introduction}

Sample-selection problems become more difficult when treatment changes whether an outcome is observed. In a randomized study, treated-selected and selected-control units can then represent different principal strata, so comparisons of observed outcomes combine treatment effects with changes in sample composition. The monotone-selection condition of \citet{lee2009training} rules out units who would be selected under control but not under treatment. Under this condition, selected controls are always-selected units, whereas treated-selected units comprise always-selected and marginal-in units.

This paper proposes \emph{conformalized Lee inference}, a distribution-free method for prediction under monotone sample selection. The primary target is a prediction set for the treated potential outcome of an independent draw from the always-selected population. For a selected-control unit, the untreated outcome is observed and monotonicity reveals membership in the always-selected stratum. Subtracting the observed untreated outcome from the treated-outcome prediction set therefore gives a prediction set for the realized individual treatment effect. The guarantee is marginal over independent selected-control draws; it is not conditional on a fixed unit's covariates and untreated outcome.

The calibration distribution differs from the target distribution. Let $P$ denote the observed treated-selected distribution of $(X,Y(1))$, and let $Q_0$ denote the corresponding always-selected distribution. If $p_d=\Pr\{S(d)=1\}$ for $d\in\{0,1\}$, random assignment identifies $p_0$ and $p_1$, and monotonicity gives
$\pi=p_0/p_1$, the share of always-selected units among treated-selected units. Because principal-stratum membership is latent among treated-selected units, $Q_0$ is not point identified.

The first main result characterizes this distributional identification problem. The always-selected distribution is dominated by the treated-selected distribution, and its likelihood ratio is bounded by $1/\pi$:
\[
Q_0 \ll P,
\qquad
0 \leq \frac{dQ_0}{dP} \leq \frac{1}{\pi}.
\]
This restriction extends Lee's trimming argument from the outcome distribution to the joint distribution of covariates and treated potential outcomes. The paper also establishes sharpness relative to the reduced observables $(P,p_0,p_1)$. Every distribution satisfying the likelihood ratio restriction can arise under random assignment, monotone selection, and the same observed selection rates.

The second main result converts the identification restriction into a conformal prediction rule. Ordinary split-conformal prediction controls the score tail under the calibration distribution $P$ but not under the target distribution $Q_0$. The likelihood ratio bound implies that a target distribution can assign at most $1/\pi$ times the $P$ probability to any high-score event. Thus, the proposed procedure uses the $(1-\alpha\pi)$ treated-selected score quantile instead of the usual $(1-\alpha)$ quantile. This choice makes the treated-selected tail probability no larger than $\alpha\pi$ and hence makes the target tail probability no larger than $\alpha$.

The procedure fits any prediction rule on treated-selected training observations, computes nonconformity scores on a treated-selected calibration fold, and applies the adjusted cutoff. The coverage result is finite-sample and distribution-free for arbitrary measurable prediction rules and score functions. When $\pi$ is unknown, a value no larger than the true share preserves the coverage guarantee but widens the prediction set. The paper shows that one-sided lower confidence bounds provide high-probability guarantees, whereas the plug-in estimator gives a less conservative implementation with a one-sided error bound.

The third main result gives a minimax interpretation of the cutoff. Over the sharp Lee ambiguity set, the worst-case target distribution places as much probability as permitted on the upper score tail. Consequently, the $(1-\alpha\pi)$ population quantile is the smallest threshold that guarantees coverage uniformly over the reduced-information identification region. The paper establishes that every smaller threshold fails for some observationally equivalent data-generating process.

The Monte Carlo study compares naive conformal prediction with oracle, plug-in, and lower bound Lee adjustments under benign, tail-selection, and smooth-selection designs. Naive conformal prediction is valid when the calibration and target distributions coincide, but it under-covers when always-selected units are more likely to have large nonconformity scores. The adjusted procedures restore coverage in the shifted designs. In contrast, the Clopper--Pearson and Hoeffding implementations suggested in the paper are conservative because they replace $\pi$ with a lower confidence bound.


The paper use the National Job Corps Study data analyzed by \cite{lee2009training} to illustrate prediction sets for individual wage effects of assignment to program access. Ridge and quantile regression models use baseline characteristics and assignment timing to predict log hourly wages among selected-treated units. For each selected-control unit, we evaluate the calibrated wage interval at that person’s covariates and subtract the observed-control log wage from both endpoints. The result shows that most individual effect sets span both wage gains and wage losses under either score. 

The remainder of the paper is organized as follows. Section \ref{sec:setup} defines the principal strata and the predictive target, derives the Lee ambiguity set, and establishes sharp identification from the reduced observables. Section \ref{sec:counterfactual_prediction} introduces the split-conformal procedure, proves finite-sample counterfactual and ITE coverage, and characterizes the minimax cutoff. Section \ref{sec:simulation} reports the Monte Carlo evidence. Section \ref{sec:empirical} presents the Job Corps illustration and distinguishes its numerical results from the coverage guarantees of the model. The appendix contains the proofs, practical lower bound constructions for $\pi$, and the identification result based on the full observed distribution.
\subsection{Related Work}


\paragraph{Sample selection and Lee bounds.}
The sample-selection literature studies inference on treatment effects when outcomes are observed only for selected units (see, among others, \citet{lee2009training, honore2020selection, honore2024sample}). \citet{lee2009training} shows that monotone selection yields sharp bounds on average treatment effects by trimming the treated-selected outcome distribution. The present paper retains the same principal-strata restriction but changes the inferential target. It constructs distribution-free prediction sets for treated counterfactual outcomes and realized individual treatment effects in the always-selected population.

Recent work extends Lee-type corrections to covariate rich selection models, generalized Lee bounds, continuous treatments, random objects, and endogenous treatment (see, e.g., \citet{heiler2024heterogeneous, semenova2025generalized, lee2024lee, kurisu2026lee, dong2026sharp}). These papers primarily bound average or distributional causal parameters. The present paper instead uses Lee-type nesting to identify an ambiguity set for the target counterfactual distribution and then converts that set into a conformal calibration rule. The full observed distribution result in the appendix is close to covariate rich Lee methods. It identifies the always-selected covariate marginal and imposes the conditional restriction
\[
Q_x \ll P_x,
\qquad
0 \leq \frac{dQ_x}{dP_x} \leq \frac{1}{\pi(x)}.
\]
The main procedure deliberately uses only reduced observables $(P,p_0,p_1)$, which produces a simple distribution-free rule that may be conservative relative to the full observed distribution.

\paragraph{Principal stratification and partial identification.}
Principal stratification defines causal effects within latent groups determined by post-treatment variables (see, e.g., \cite{frangakis2002principal}). The truncated-outcome problem in \citet{zhang2003estimation} has the same basic structure, with survival replacing sample selection. Under monotone selection, selected controls reveal the always-selected stratum, whereas treated-selected observations combine always-selected and marginal-in units. In the present paper, this stratum structure determines the causal target and the distributional restriction used for conformal calibration.

The identification argument also follows the logic of partial identification and contamination models (see, among others, \citet{huber1981robust, horowitz1995identification, manski2003partial}). The observed treated-selected distribution satisfies
\[
P=\pi Q+(1-\pi)R,
\]
where $Q$ is the always-selected target distribution and $R$ is an unrestricted marginal-in distribution. The latent identity of always-selected treated units prevents point identification of $Q$. Random assignment and monotone selection identify the mixture share $\pi=p_0/p_1$, which yields the sharp likelihood ratio ambiguity set. This mixture representation is the distributional counterpart of Lee's trimming argument.

\paragraph{Conformal prediction under distribution shift.}
The prediction argument builds on conformal inference under exchangeability and split-conformal regression (see, e.g., \citet{vovk2005algorithmic, lei2018distribution}). Standard split-conformal prediction is not directly valid for the proposed method because the calibration scores follow the treated-selected distribution $P$, whereas the counterfactual target follows an unknown $Q$ in the Lee ambiguity set.

Weighted conformal prediction addresses known covariate shifts, and more general methods allow departures from exchangeability (see, e.g., \citet{tibshirani2019conformal, barber2023conformal}). The present setting differs in two respects. First, the pointwise density ratio is not identified since monotone selection identifies only the sharp joint bound $dQ/dP\leq 1/\pi$. Second, principal-stratum membership may depend on both $X$ and $Y(1)$, so the calibration and target distributions can differ in their covariate marginals and conditional outcome distributions. Therefore, the conformalized Lee cutoff  provides robust calibration for a partially identified shift in the joint distribution of $(X,Y(1))$.

\paragraph{Conformal causal inference and sensitivity analysis.}
The paper also bridges conformal methods to counterfactuals and individual treatment effects (see, e.g., \citet{lei2021conformal, chernozhukov2021exact}). The closest comparison is robust conformal sensitivity analysis (see, e.g., \citet{jin2023sensitivity, yin2024conformal}), together with related sensitivity methods for unobserved confounding and departures from identifying assumptions (see, e.g., \citet{rosenbaum1987sensitivity, yadlowsky2022bounds, dorn2023sharp, dorn2025doubly}).

In those approaches, the researcher specifies a sensitivity class. In contrast, the monotonicity and the observed selection rates in the paper identify the robustness class. They also determine the closed-form change from the usual $(1-\alpha)$ cutoff to the $(1-\alpha\pi)$ treated-selected score quantile. The minimax result shows that every smaller population cutoff fails for some target distribution in the reduced-information identification region.

\section{Setup, Principal Strata, and Predictive Target}
\label{sec:setup}

We use the potential outcome framework and consistency convention introduced by \citet{imbens2015causal}. Consider a randomized treatment setting with sample selection. For unit $i=1,\ldots,n$, let $D_i\in\{0,1\}$ denote treatment assignment, $X_i\in\mathcal X\subseteq\mathbb R^d$ denote predetermined covariates, and $Y_i(1)$ and $Y_i(0)$ denote the potential outcomes. Let $S_i(1),S_i(0)\in\{0,1\}$ denote the potential selection indicators. Define the potential outcome vector $W_i=(Y_i(0),Y_i(1),S_i(0),S_i(1),X_i)$ and suppose that $W_1,\ldots,W_n$ are independent and identically distributed. The realized selection indicator is $S_i=D_iS_i(1)+(1-D_i)S_i(0)$, and $Y_i=D_iY_i(1)+(1-D_i)Y_i(0)$ is observed when $S_i=1$. The observed sample is $\{(D_i,S_i,S_iY_i, X_i)\}_{i=1}^n$.

Define the assignment vector $\mathbf D_n = (D_1, \ldots, D_n)$ and the potential outcome vector $\mathbf W_n = (W_1, \ldots, W_n)$. We maintain the following assumptions throughout the paper.

\begin{assumption}[Random assignment and assignment positivity]
\label{ass:random}
\[
\mathbf D_n \indep \mathbf W_n
\quad \text{and} \quad 
\rho_i = \Pr(D_i=1)\in(0,1) \,\, for \,\, i=1,\ldots,n.
\]
\end{assumption}
\begin{assumption}[Monotone selection]
\label{ass:monosel}
\[
S(1)\geq S(0)\quad\text{almost surely.}
\]
\end{assumption}


Assumption \ref{ass:random} relates the observed distribution in treatment status $d$ to the corresponding distribution of the potential variables under $d$. Vector-level assignment independence further implies the unit-level independence $D_i \indep W_i$ for every $i$. Also, the marginal assignment probabilities $\rho_i$ may differ across units.

Assumption \ref{ass:monosel} rules out units who would be selected under control but not under treatment. The treated-selected sample generally do not represent the always-selected target population. Under monotone selection, treatment can induce additional units to enter the observed sample. Thus, treated-selected observations contain always-selected and marginal-in units, whereas selected-control observations contain only always-selected units. Hence, the population can be categorized into three principal strata:
\[
\mathcal A=\{S(0)=S(1)=1\},
\qquad
\mathcal M=\{S(0)=0,S(1)=1\},
\qquad
\mathcal N=\{S(0)=S(1)=0\},
\]
where $\mathcal A$ is the always-selected stratum, $\mathcal M$ is the marginal-in stratum, and $\mathcal N$ is the never-selected stratum. Every selected-control unit belongs to $\mathcal A$, whereas a treated-selected unit can belong to $\mathcal A$ or $\mathcal M$.

The predictive target is the individual treatment effect in the always-selected stratum:
\[
\tau_i=Y_i(1)-Y_i(0),\qquad i\in\mathcal A.
\]
For a selected-control unit, $Y(0)$ is observed and monotonicity reveals membership in $\mathcal A$. Therefore, if there is a prediction set for the missing $Y(1)$ in $\mathcal A$, it can be shifted by the observed $Y(0)$ to obtain a prediction set for $\tau_i$.

Let $p_d=\Pr\{S(d)=1\}$ for $d\in\{0,1\}$. Because the potential outcome vectors $W_1, \ldots W_n$ have a common distribution, Assumption \ref{ass:random} gives, for every unit with positive probability of assignment to status $d$,
\[
p_d=\Pr(S=1\mid D=d).
\]
Define
$\pi=p_0/p_1=\Pr\{\mathcal A\mid S(1)=1\}$. The quantity $\pi$ is the always-selected share among treated-selected units, and $1-\pi$ is the corresponding marginal-in share.

\begin{assumption}[Positivity]
\label{ass:positivity}
\[
p_0=\Pr\{S(0)=1\}>0.
\]
\end{assumption}

Assumption \ref{ass:positivity} ensures that the always-selected stratum is nonempty. Assumption \ref{ass:monosel} then gives $p_1\geq p_0>0$, so $\pi=p_0/p_1$ is well defined and belongs to $(0,1]$.
\subsection{Lee-Type Distributional Identification}

We use regular conditional distributions and measurable probability kernels on standard Borel spaces. Let $Z=(X,Y(1))\in\mathcal Z=\mathcal X\times\mathcal Y$, where $(\mathcal Z,\mathcal B)$ is a standard Borel space with $\mathcal B=\mathcal B_{\mathcal X}\otimes\mathcal B_{\mathcal Y}$. Let $\mathcal P(\mathcal Z,\mathcal B)$ denote the probability measures on this space.

The treated-selected distribution is
\[
P=\mathcal F\{Z\mid S(1)=1\}.
\]
For every unit $i$, Assumption \ref{ass:random} and consistency give
\[
\mathcal F(X,Y\mid D=1,S=1)=P.
\]
We suppress the unit index when writing the observable treated-selected distribution. The target distribution is
\[
Q_0=\mathcal F(Z\mid S(0)=1)
   =\mathcal F(Z\mid\mathcal A).
\]
The distribution $Q_0$ is not observed because principal-stratum membership is latent among treated-selected units. 

Among treated-selected units, the always-selected share is $\pi$ and the marginal-in share is $1-\pi$. Hence,
\[
P=\pi Q_0+(1-\pi)R,
\]
where
\[
R=\mathcal F\{Z\mid S(0)=0,S(1)=1\}
\]
is the marginal-in distribution. If $\pi=1$, the marginal-in stratum has probability zero and $P=Q_0$. If $\pi<1$, the unrestricted distribution $R$ prevents point identification of $Q_0$. This is a contaminated-sampling representation of the treated-selected distribution (see, \citet{huber1981robust, horowitz1995identification}).

The mixture representation extends the trimming argument of \cite{lee2009training} to the joint distribution of $(X,Y(1))$. The conformal procedure in Section \ref{sec:counterfactual_prediction} calibrates nonconformity scores under $P$, whereas its coverage target is a future always-selected draw from $Q_0$. Monotone selection gives the nesting events $\{S(0)=1\}\subseteq\{S(1)=1\}$ and therefore $Q_0$ must be dominated by $P$. Thus, $Q_0$ cannot assign positive probability to a $P$-null event, and it cannot assign more than $1/\pi$ times the $P$ probability to any measurable event.
For $P\in\mathcal P(\mathcal Z,\mathcal B)$ and $\pi\in(0,1]$, define the following ambiguity set.

\begin{definition}[Lee ambiguity set]
\label{def:lee_ambiguity}
\[
\mathcal Q(P,\pi)
=
\left\{
Q\in\mathcal P(\mathcal Z,\mathcal B):
Q\ll P,
\quad
0\leq\frac{dQ}{dP}\leq\frac{1}{\pi}
\quad P\text{-almost surely}
\right\}.
\]
\end{definition}

The density restriction in Definition \ref{def:lee_ambiguity} has the same form as a trimmed probability class (see, \citet{cascos2005integral, alvarez2008trimmed}). With trimming fraction $1 - \pi$, the retained distribution has density bounded by $1 / \pi$ relative to $P$. Its interpretation as the always-selected treated distribution follows from the monotonicity. 

Proposition \ref{prop:ident} shows that the true always-selected distribution belongs to $\mathcal Q(P,\pi)$. This identification result is the basis for prediction sets in Section \ref{sec:counterfactual_prediction}. 

\begin{prop}[Lee-type likelihood ratio domination]
\label{prop:ident}
Suppose Assumptions \ref{ass:monosel} and \ref{ass:positivity} hold. Then $Q_0\ll P$ and its Radon--Nikodym derivative satisfies
\[
0\leq\frac{dQ_0}{dP}(z)\leq\frac{1}{\pi}
\quad P\text{-almost surely}.
\]
Equivalently, $Q_0\in\mathcal Q(P,\pi)$.

If $\pi<1$, there exists $R_{\rm res}\in\mathcal P(\mathcal Z,\mathcal B)$ such that
\[
P=\pi Q_0+(1-\pi)R_{\rm res}.
\]
If $\pi=1$, then $Q_0=P$.
\end{prop}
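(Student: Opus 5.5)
The plan is to work directly with events in the underlying probability space and compare the two conditional laws on each measurable set. Under Assumption \ref{ass:monosel}, $\{S(0)=1\}\subseteq\{S(1)=1\}$ up to a null set. Hence $\{S(0)=1\}=\mathcal A$ almost surely, and $\Pr(\mathcal A)=p_0$. Assumption \ref{ass:positivity} gives $p_1\ge p_0>0$, so both conditional laws are well defined.

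\textbf{Step 1: the event-level inequality.} Fix $B\in\mathcal B$. By the nesting,
\[
Q_0(B)=\frac{\Pr\{Z\in B,\ S(0)=1\}}{p_0}
\le \frac{\Pr\{Z\in B,\ S(1)=1\}}{p_0}
=\frac{p_1}{p_0}\,P(B)=\frac{1}{\pi}P(B).
\]

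\textbf{Step 2: absolute continuity and the density bound.} Step 1 shows that $P(B)=0$ implies $Q_0(B)=0$, so $Q_0\ll P$. The Radon--Nikodym theorem then gives a density $f=dQ_0/dP\ge0$; both measures are finite, so it applies. To get $f\le 1/\pi$ $P$-a.s., take $B_\varepsilon=\{f>1/\pi+\varepsilon\}$. If $P(B_\varepsilon)>0$, then
\[
Q_0(B_\varepsilon)=\int_{B_\varepsilon} f\,dP\ge(1/\pi+\varepsilon)P(B_\varepsilon)>P(B_\varepsilon)/\pi,
\]
which contradicts Step 1. So $P(B_\varepsilon)=0$ for every $\varepsilon>0$, and a countable union over $\varepsilon=1/k$ gives $f\le1/\pi$ $P$-a.s. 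This establishes $Q_0\in\mathcal Q(P,\pi)$.

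\textbf{Step 3: the mixture.} For $\pi<1$, define $R_{\rm res}=(P-\pi Q_0)/(1-\pi)$. It is a nonnegative measure by Step 1, since $\pi Q_0(B)\le P(B)$. Its total mass is $(1-\pi)/(1-\pi)=1$, and countable additivity is inherited from $P$ and $Q_0$. So $R_{\rm res}$ is a probability measure with $P=\pi Q_0+(1-\pi)R_{\rm res}$. One can also identify $R_{\rm res}$ with the marginal-in law $R$ whenever $\Pr(\mathcal M)>0$, by splitting $\{S(1)=1\}$ into $\mathcal A$ and $\mathcal M$. The explicit construction avoids needing $\Pr(\mathcal M)>0$, which is guaranteed here only because $\pi<1$ implies $\Pr(\mathcal M)=p_1-p_0>0$.

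\textbf{Step 4: the case $\pi=1$.} Here $p_0=p_1$, so the events $\{S(0)=1\}$ and $\{S(1)=1\}$ differ by a null set. The two conditional laws therefore coincide, giving $Q_0=P$. Equivalently, Step 1 gives $Q_0(B)\le P(B)$ for all $B$, and applying it to $B^c$ forces equality.

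\textbf{Main obstacle.} No step is genuinely hard. The only care needed is in Step 2, in turning the event-wise inequality into an almost-sure bound on the density. The level-set argument above handles this.
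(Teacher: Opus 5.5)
Your proof is correct and follows essentially the same route as the paper. Both derive the event-wise inequality $\pi Q_0(B)\le P(B)$ from the nesting $\{S(0)=1\}\subseteq\{S(1)=1\}$, deduce absolute continuity and the density bound, and set $R_{\rm res}=(P-\pi Q_0)/(1-\pi)$. The differences are cosmetic: the paper gets $dQ_0/dP\le 1/\pi$ by observing that the nonnegative measure $P-\pi Q_0$ has $P$-density $1-\pi\, dQ_0/dP$, which your level-set argument simply proves directly, and it handles $\pi=1$ via $\int(1-w)\,dP=0$ with $1-w\ge 0$ rather than your complement argument.
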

\subsection{Sharpness of the Lee Ambiguity Set Relative to Reduced Observables}

Proposition \ref{prop:ident} gives a necessary likelihood ratio restriction using only monotone selection and selection positivity. This subsection establishes sufficiency result relative to the reduced observables $(P,p_0,p_1)$. For every $Q\in\mathcal Q(P,\pi)$, a latent data-generating process can satisfy Assumptions \ref{ass:monosel} and \ref{ass:positivity}, match the same latent reduced quantities, and assign $Q$ to the always-selected treated outcomes. Appending any assignment variable satisfying Assumption \ref{ass:random} converts the treated-selected latent distribution into the corresponding observed treated-selected distribution. Consequently, $\mathcal Q(P,\pi)$ is the exact identified region.

The construction follows directly from the mixture representation. For a candidate $\widetilde Q\in\mathcal Q(P,\pi)$ and $\pi<1$, define
\[
R(B)=\frac{P(B)-\pi\widetilde Q(B)}{1-\pi},
\qquad B\in\mathcal B.
\]
The bound $0\leq d\widetilde Q/dP\leq1/\pi$ makes $R$ a probability distribution and gives
\[
P=\pi\widetilde Q+(1-\pi)R.
\]
Assigning $\widetilde Q$ to the always-selected stratum and $R$ to the marginal-in stratum reproduces the treated-selected distribution. Because the marginal-in distribution is unrestricted, the reduced observables cannot distinguish among candidate distributions in $\mathcal Q(P,\pi)$.

Proposition \ref{prop:sharp} formalizes this argument. Any procedure that is uniformly valid under Assumptions \ref{ass:random}--\ref{ass:positivity}, and that uses only $(P,p_0,p_1)$, must therefore be valid for every distribution in $\mathcal Q(P,\pi)$. 

\begin{prop}[Sharpness of $\mathcal Q(P,\pi)$ relative to reduced observables]
\label{prop:sharp}
Let $(\mathcal Z,\mathcal B)$ be a standard Borel space. Fix $P\in\mathcal P(\mathcal Z,\mathcal B)$, $0<p_0\leq p_1\leq1$, and $\pi=p_0/p_1$. For every $\widetilde Q\in\mathcal Q(P,\pi)$, there exists a latent data-generating process for $(Y(1),Y(0),S(1),S(0),X)$ satisfying Assumptions \ref{ass:monosel} and \ref{ass:positivity} such that
\[
\Pr\{S(0)=1\}=p_0,
\qquad
\Pr\{S(1)=1\}=p_1,
\]
and
\[
\mathcal F\{X,Y(1)\mid S(1)=1\}=P,
\qquad
\mathcal F\{X,Y(1)\mid S(0)=1\}=\widetilde Q.
\]
Append any binary assignment variable $D$ that is independent of the constructed potential outcome vector and satisfies $0<\Pr(D=1)<1$. If $(S,Y)$ are defined by consistency, then
\[
\mathcal F(X,Y\mid D=1,S=1)=P.
\]
For a sample, independent copies of the latent construction may be combined with any prescribed joint distribution for $\mathbf D_n$ having positive unit-specific margins, provided that design distribution is appended independently of the latent sample. Hence, Assumption \ref{ass:random} is satisfied. Consequently, conditional on $(P,p_0,p_1)$, the sharp identification region for $Q_0$ is $\mathcal Q(P,\pi)$.
\end{prop}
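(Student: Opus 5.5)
The proof is a direct construction. We build the latent process stratum by stratum, verify the two conditional laws and the selection rates, and then append the assignment design.

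\textbf{Latent construction.} Fix $\widetilde Q\in\mathcal Q(P,\pi)$ and take the stratum weights
\[
\Pr(\mathcal A)=p_0,\qquad \Pr(\mathcal M)=p_1-p_0,\qquad \Pr(\mathcal N)=1-p_1 .
\]
These are nonnegative and sum to one because $0<p_0\le p_1\le1$. Draw a stratum label $G\in\{\mathcal A,\mathcal M,\mathcal N\}$ with these probabilities and set $(S(0),S(1))$ to $(1,1)$, $(0,1)$, $(0,0)$ accordingly. Then $S(1)\ge S(0)$ holds identically, so Assumption \ref{ass:monosel} is satisfied. Moreover $\Pr\{S(0)=1\}=p_0>0$, which gives Assumption \ref{ass:positivity}, and $\Pr\{S(1)=1\}=p_1$.

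\textbf{Law of $Z$ within each stratum.}
\begin{itemize}
\item On $\mathcal A$, draw $Z=(X,Y(1))\sim\widetilde Q$.
\item On $\mathcal M$, if $\pi<1$, draw $Z\sim R$ with $R(B)=\{P(B)-\pi\widetilde Q(B)\}/(1-\pi)$.
\item On $\mathcal N$, draw $Z$ from an arbitrary fixed law, say $P$.
\end{itemize}
If $\pi=1$, then $\mathcal M$ has probability zero and its law is immaterial. The outcome $Y(0)$ can be set to a constant, or drawn from any kernel given $(Z,G)$. It plays no role in the reduced observables.

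\textbf{$R$ is a probability measure.} This is the key step. Let $h=d\widetilde Q/dP$. Then
\[
R(B)=\int_B\frac{1-\pi h}{1-\pi}\,dP .
\]
The integrand is nonnegative because $h\le 1/\pi$ $P$-a.s., and $R(\mathcal Z)=(1-\pi)/(1-\pi)=1$. Countable additivity follows from the integral representation, so no separate argument is needed.

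\textbf{Conditional laws.} Conditioning on $S(0)=1$ is conditioning on $\mathcal A$, so the law of $Z$ given $S(0)=1$ is $\widetilde Q$. Conditioning on $S(1)=1$ is conditioning on $\mathcal A\cup\mathcal M$, which gives the mixture
\[
\frac{p_0}{p_1}\widetilde Q+\frac{p_1-p_0}{p_1}R=\pi\widetilde Q+(1-\pi)R=P .
\]
Measurability of the whole construction is routine. The space is standard Borel, so every law above can be realized as a measurable function of the label $G$ and an independent uniform variable through a measurable kernel.

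\textbf{Appending assignment.} Let $D$ be independent of $W$ with $0<\Pr(D=1)<1$. On $\{D=1\}$, consistency gives $S=S(1)$ and $Y=Y(1)$. Independence then yields
\[
\mathcal F(X,Y\mid D=1,S=1)=\mathcal F\{X,Y(1)\mid S(1)=1\}=P .
\]
For samples, take $W_1,\ldots,W_n$ i.i.d.\ from this construction and draw $\mathbf D_n$ from any design with positive margins, independently of $\mathbf W_n$. Then Assumption \ref{ass:random} holds by construction.

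\textbf{Sharpness.} Proposition \ref{prop:ident} shows that the true $Q_0$ lies in $\mathcal Q(P,\pi)$. The construction above shows that every element of $\mathcal Q(P,\pi)$ is attained by some data-generating process with the same $(P,p_0,p_1)$. Together these give that $\mathcal Q(P,\pi)$ is exactly the identified set.

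The only real obstacle is showing that $R$ is a bona fide nonnegative probability measure; the density bound $h\le 1/\pi$ is exactly what delivers this. The remaining steps are bookkeeping.
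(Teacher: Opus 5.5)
Your construction is essentially the paper's proof: the same stratum weights $(p_0,\,p_1-p_0,\,1-p_1)$, the same residual law $R=(P-\pi\widetilde Q)/(1-\pi)$ (nonnegative with total mass one by the density bound), the same two conditioning checks, and the same independent appending of $\mathbf D_n$. It is correct, with one step left implicit: when $\pi=1$ your mixture identity reduces to $\widetilde Q=P$. You should state that this holds because $0\le h\le 1$ and $\int h\,dP=1$ force $h=1$ $P$-a.s.; the paper handles this as a separate Case 2.
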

\section{Counterfactual Prediction for Always-Selected Units}
\label{sec:counterfactual_prediction}

This section constructs prediction sets for the unobserved treated potential outcome $Y(1)$ and the realized ITE of always-selected units. 

Consider the score-tail event
\[
A_t=\{z\in\mathcal Z:V(z)>t\}.
\]
The likelihood ratio restriction transfers this event across the two distributions. Since every $Q\in\mathcal Q(P,\pi)$ satisfies the restriction
\[
Q(A_t)\leq\frac{1}{\pi}P(A_t),
\]
$P(A_t)\leq\alpha\pi$ implies $Q(A_t)\leq\alpha$ uniformly over the Lee ambiguity set. Conditional on the training information and random split, the calibration nonconformity scores are exchangeable draws from $P$. However, the target nonconformity score follows an unknown $Q\in\mathcal Q(P,\pi)$ and is not generally exchangeable with them. Thus, ordinary split-conformal prediction controls the score-tail under $P$, but not necessarily under $Q$. The proposed procedure does not restore exchangeability between $P$ and $Q$. Instead, it combines this transfer inequality with the usual exchangeable rank argument under $P$ and estimates the $(1-\alpha\pi)$ score quantile.

\subsection{Split-Conformal Algorithm}

Let $I_1=\{i:D_i=1,S_i=1\}$ and $I_0=\{i:D_i=0,S_i=1\}$. Then randomly split $I_1$ into a training fold $I_{\rm train}$ and a calibration fold $I_{\rm cal}$. The training fold determines a prediction rule and a measurable nonconformity score $V:\mathcal Z\to\mathbb R$. Let $\mathcal H$ be the sigma-field generated by $I_{\rm train}$, the assignment and selection sample $\{(D_i,S_i)\}_{i=1}^n$, the random split $R_{\rm sp}$, and any model-fitting randomness $R_{\rm fit}$. Then the split indices are measurable functions of $(\mathbf D_n,\mathbf S_n,R_{\rm sp})$, while the fitted score depends on $I_{\rm train}$ and $R_{\rm fit}$.

Let $m=|I_{\rm cal}|$ and $\widehat\pi\in[0,1]$ be the estimate of always-selected share among treated-selected units used for calibration. Define
\[
k=\left\lceil(m+1)(1-\alpha\widehat\pi)\right\rceil.
\]
These quantities are $\mathcal H$-measurable and are fixed conditional on $\mathcal H$. Write order statistics $V_{(1)}\leq\ldots\leq V_{(m)}$ and set $V_{(m+1)}=+\infty$. The threshold is $\widehat t=V_{(k)}$, and the prediction set for $Y(1)$ is
\[
\widehat C_1(x)=\{y:V(x,y)\leq\widehat t\}.
\]
When $\pi=1$, the target and calibration distributions coincide and the rule reduces to ordinary split-conformal prediction. When $\pi<1$, the higher $(1-\alpha\pi)$ quantile protects against the additional upper-tail mass permitted by the Lee ambiguity set. Algorithm 1 suggests the steps for computing the prediction set.

Write $I_{\rm cal} = \{i_1 < \ldots< i_m \}$ and $Z_j = (X_{i_j}, Y_{i_j})$ for $j = 1, \ldots m$. Lemma \ref{lem:calibration-design} provides the conditional product distribution required by the conformal rank argument in Theorem \ref{thm1:coverage}.

\begin{lemma}[A sufficient randomized-design condition]
\label{lem:calibration-design}
Suppose Assumption \ref{ass:random} holds and
\[
    R_{\rm sp} \indep (\mathbf D_n, \mathbf W_n), \quad R_{\rm fit} \indep (\mathbf D_n, \mathbf W_n, R_{\rm sp}).
\]
Then, for every $r \in \{0, \ldots n\}$ with $\Pr(m=r) > 0$,
\[
\mathcal F( (Z_1,\ldots,Z_r)  \mid\mathcal H)=P^{\otimes r}
\quad\text{almost surely on } \{m=r\},
\]
where $P^{\otimes 0}$ denotes the point mass on the empty tuple.
\end{lemma}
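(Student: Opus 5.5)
Given $(\mathbf D_n,\mathbf S_n,R_{\rm sp})$, the sets $I_1$, $I_{\rm train}$ and $I_{\rm cal}$ are fixed, and the calibration variables $Z_j=(X_{i_j},Y_{i_j}(1))$ are coordinates of potential-outcome vectors $W_{i_j}$ whose units are disjoint from those generating $I_{\rm train}$. The plan is to compute the conditional law of the calibration tuple in stages. First condition on $(\mathbf D_n,\mathbf S_n,R_{\rm sp})$ and identify that law as $P^{\otimes r}$, independently of the training data. Then add the training data and $R_{\rm fit}$ to the conditioning set without changing it.

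\textbf{Step 1 (conditioning on assignment).} By Assumption \ref{ass:random}, $\mathbf W_n\indep\mathbf D_n$, and by hypothesis $R_{\rm sp}\indep(\mathbf D_n,\mathbf W_n)$. Hence, conditional on $(\mathbf D_n,R_{\rm sp})=(\mathbf d,r_{\rm sp})$, the vectors $W_1,\ldots,W_n$ remain i.i.d.\ with their common latent law. Fix $\mathbf d$ and a realization of selection indicators $\mathbf s$. The event $\{\mathbf S_n=\mathbf s\}$ equals $\bigcap_i\{S_i(d_i)=s_i\}$, which is a product event across units. Conditioning independent $W_i$'s on a product event preserves independence, and the conditional law of $W_i$ becomes $\mathcal F(W\mid S(d_i)=s_i)$.

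\textbf{Step 2 (identifying $P$).} For $i\in I_1$ we have $d_i=1$ and $s_i=1$, so the conditional law of $(X_i,Y_i(1))$ is $\mathcal F(Z\mid S(1)=1)=P$. The split is a function of $(\mathbf d,\mathbf s,r_{\rm sp})$, so on $\{m=r\}$ the indices $i_1<\cdots<i_r$ are fixed. Therefore, conditionally on $(\mathbf D_n,\mathbf S_n,R_{\rm sp})$, the calibration tuple has law $P^{\otimes r}$. Moreover, it is conditionally independent of $\{W_i:i\notin I_{\rm cal}\}$, which includes the training observations. Positivity of $\rho_i$ and $\Pr(m=r)>0$ guarantee that these conditional laws are well defined on the relevant events.

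\textbf{Step 3 (adding training data and fitting randomness).} The training fold consists of measurable functions of $\{W_i:i\in I_{\rm train}\}$ together with $(\mathbf D_n,\mathbf S_n,R_{\rm sp})$. Step 2 gives the conditional independence of the calibration tuple from these units, so conditioning on the training fold as well leaves the law $P^{\otimes r}$ unchanged. For $R_{\rm fit}$, the hypothesis $R_{\rm fit}\indep(\mathbf D_n,\mathbf W_n,R_{\rm sp})$ implies that $R_{\rm fit}$ is independent of everything else, so adding it to the conditioning set also changes nothing. Since $\mathcal H$ is generated by exactly these objects, we obtain $\mathcal F((Z_1,\ldots,Z_r)\mid\mathcal H)=P^{\otimes r}$ almost surely on $\{m=r\}$. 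The case $r=0$ is trivial.

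\textbf{Main obstacle.} The delicate point is Step 1, and in particular doing it rigorously with regular conditional distributions. Conditioning on $\mathbf S_n$ is conditioning on a function of both $\mathbf D_n$ and $\mathbf W_n$, and $\mathbf D_n$ may have an arbitrary joint law across units. I would handle this by working with the countably many atoms $\{\mathbf D_n=\mathbf d,\mathbf S_n=\mathbf s\}$ of positive probability and computing elementary conditional probabilities of rectangles $\prod_j B_j$ for the calibration tuple. The extension to the full product $\sigma$-field then follows by a $\pi$–$\lambda$ argument, and the continuous variable $R_{\rm sp}$ is treated as an independent coordinate using Fubini.
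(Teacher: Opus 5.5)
Your proposal is correct and follows essentially the same route as the paper's proof. The paper conditions on $(\mathbf D_n,\mathbf S_n,R_{\rm sp})$ to obtain the product kernel $\bigotimes_i K_{D_i,S_i}$ and reads off $P^{\otimes r}$ on the calibration coordinates. It then uses $I_{\rm train}\cap I_{\rm cal}=\varnothing$ for conditional independence from the training fold and absorbs $R_{\rm fit}$ through its independence from $(\mathbf D_n,\mathbf W_n,R_{\rm sp})$. Your atom-by-atom computation on rectangles with a $\pi$--$\lambda$ extension is simply a more explicit version of the paper's kernel-conditioning step.
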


\begin{algorithm}[t]
\caption{Split-conformal Lee prediction}
\begin{algorithmic}[1]
\State \textbf{Input:} Data $\mathcal D=\{(D_i,X_i,S_i,S_iY_i)\}_{i=1}^n$, miscoverage level $\alpha\in(0,1)$, a nonconformity-score function $V: \mathcal Z \to \mathbb R$, and test covariate $x$.
\State Estimate or otherwise choose $\widehat\pi\in[0,1]$ from the assignment and selection sample $\{(D_i,S_i)\}_{i=1}^n$.
\State Form $I_1=\{i:D_i=1,S_i=1\}$ and $I_0=\{i:D_i=0,S_i=1\}$.
\State Randomly split $I_1$ into $I_{\rm train}$ and $I_{\rm cal}$.
\State Fit the prediction rule on $I_{\rm train}$ and obtain the score function $V$.
\State For each $i\in I_{\rm cal}$, compute $V_i=V(X_i,Y_i)$.
\State Let $m=|I_{\rm cal}|$, order the scores as $V_{(1)}\leq\cdots\leq V_{(m)}$, and set $V_{(m+1)}=+\infty$.
\State Compute $k=\left\lceil(m+1)(1-\alpha\widehat\pi)\right\rceil$ and set $\widehat t=V_{(k)}$.
\State \textbf{Output:} $\widehat C_1(x)=\{y:V(x,y)\leq\widehat t\}$.
\end{algorithmic}
\end{algorithm}

Let $Z_{m+1}=(X_{m+1},Y_{m+1})$ be an external target draw from $Q\in\mathcal Q(P,\pi)$ that is independent of the study sample, and  $\mathbb P$ denote the resulting joint distribution. Conditional on $\mathcal H$ and on every event $\{m=r\}$, the regular conditional distribution of the ordered calibration observations and the target draw is $P^{\otimes r}\otimes Q$ by Lemma \ref{lem:calibration-design}. Theorem \ref{thm1:coverage} establishes finite-sample marginal coverage uniformly over the Lee ambiguity set.

\begin{thm}[Finite-sample counterfactual coverage]
\label{thm1:coverage}
Let $(\mathcal Z,\mathcal B)$ be a standard Borel space, $P\in\mathcal P(\mathcal Z,\mathcal B)$, and $Q\in\mathcal Q(P,\pi)$ for some $\pi\in(0,1]$. Fix $\alpha\in(0,1)$ and let $\widehat\pi\in[0,1]$ be the value used by the algorithm. Suppose Assumption \ref{ass:random} holds and the target observation is conditionally independent of the calibration sample given $\mathcal H$, with conditional distribution $Q$. Then
\[
\mathbb P\{Y_{m+1}\in\widehat C_1(X_{m+1})\mid\mathcal H\}
\geq 1-\alpha-\widehat\Delta,
\]
where
\[
\widehat\Delta
=
\left[
\frac{m+1-k}{(m+1)\pi}-\alpha
\right]_+ \quad \text{and} \quad \widehat\Delta
\leq
\frac{\alpha}{\pi}(\widehat\pi-\pi)_+.
\]
\end{thm}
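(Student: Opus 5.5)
Conditional on $\mathcal H$ and on a fixed event $\{m=r\}$, the plan is to reduce coverage to a statement about the tail mass of the threshold under $P$, and then move that bound to $Q$ with the likelihood-ratio inequality. Lemma \ref{lem:calibration-design} says the calibration scores $V_1,\ldots,V_m$ are i.i.d. draws from the pushforward $P\circ V^{-1}$ given $\mathcal H$. The target draw $Z_{m+1}\sim Q$ is conditionally independent of them, and $V$, $k$, $m$ are all $\mathcal H$-measurable. Write $\widehat t=V_{(k)}$. Then
\[
\mathbb P\{Y_{m+1}\notin\widehat C_1(X_{m+1})\mid\mathcal H\}=\mathbb E\bigl[Q(A_{\widehat t})\mid\mathcal H\bigr],
\]
where $A_t=\{V>t\}$. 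By Proposition \ref{prop:ident}-type domination, i.e. $Q\in\mathcal Q(P,\pi)$, we have $Q(A_t)\le P(A_t)/\pi$ for every $t$. Applying this inside the expectation bounds the miscoverage by $\pi^{-1}\,\mathbb E[P(A_{\widehat t})\mid\mathcal H]$. If $k=m+1$, then $\widehat t=+\infty$ and the miscoverage is zero, so that case is trivial.

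The key step is the bound $\mathbb E[P(A_{V_{(k)}})\mid\mathcal H]\le (m+1-k)/(m+1)$ for i.i.d. scores from $P$. I will use the exchangeable rank argument with an auxiliary draw $V_{m+1}\sim P$ that is independent of $V_1,\ldots,V_m$. Then
\[
\mathbb E[P(A_{V_{(k)}})]=\Pr\{V_{m+1}>V_{(k)}\}.
\]
The event $\{V_{m+1}>V_{(k)}\}$ means that at least $k$ of the $m$ calibration scores lie strictly below $V_{m+1}$. This forces $V_{m+1}$ to be strictly larger than at least $k$ of the other $m$ values among the $m+1$ exchangeable values. At most $m+1-k$ of the $m+1$ indices can have that property simultaneously. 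Exchangeability therefore gives $\Pr\le (m+1-k)/(m+1)$, and ties only help because the inequality is strict. I expect this step to be the main obstacle: the comparison is against a fixed $P$ rather than the target law, so I must check carefully that it needs no continuity assumption. Counting elements strictly exceeding at least $k$ others handles ties.

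Combining the two steps gives
\[
\mathbb P\{\text{miscoverage}\mid\mathcal H\}\le \frac{m+1-k}{(m+1)\pi}\le \alpha+\widehat\Delta,
\]
which is the stated coverage bound. For the second claim, $k\ge (m+1)(1-\alpha\widehat\pi)$ implies $(m+1-k)/(m+1)\le \alpha\widehat\pi$. Hence
\[
\frac{m+1-k}{(m+1)\pi}-\alpha\le \frac{\alpha(\widehat\pi-\pi)}{\pi},
\]
and taking positive parts gives $\widehat\Delta\le(\alpha/\pi)(\widehat\pi-\pi)_+$. Finally, I will note that the argument holds on each $\{m=r\}$ with positive probability, including $r=0$, where $k=1=m+1$ and $\widehat t=+\infty$. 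It therefore holds almost surely given $\mathcal H$.
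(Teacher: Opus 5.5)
Your proposal is correct and follows the paper's proof: transfer the tail event from $Q$ to $P$ via $Q(A_t)\le P(A_t)/\pi$, introduce an auxiliary $P$-draw so that $\mathbb E_{\mathbb P}\{P(A_{\widehat t})\mid\mathcal H\}$ becomes an exchangeable comparison, bound it by $(m+1-k)/(m+1)$, and finish with the ceiling inequality for $\widehat\Delta$. The only difference is the tie handling in the rank step. You use a deterministic counting argument: at most $m+1-k$ of the $m+1$ values can strictly exceed $k$ others. The paper instead uses independent uniform tie-breakers to obtain an exactly uniform lexicographic rank. Both give the same bound.
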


Theorem \ref{thm1:coverage} shows that the coverage error is one-sided. If $0<\widehat\pi\leq\pi$, then $\widehat\Delta=0$ because the algorithm calibrates against the larger class $\mathcal Q(P,\widehat\pi)\supseteq\mathcal Q(P,\pi)$. If $\widehat\pi>\pi$, the algorithm understates the possible upper-tail inflation, and coverage can fall below $1-\alpha$ by at most
\[
\alpha\left(\frac{\widehat\pi}{\pi}-1\right).
\]
If $\widehat\pi=0$ set $k=m+1$, $\widehat t=+\infty$, and the prediction set is the full outcome space. If $\pi=1$ and the oracle value $\widehat\pi=1$ is used, the procedure reduces to ordinary split-conformal prediction. The appendix gives Clopper--Pearson and Hoeffding lower bound constructions for $\pi$.

The calibration value also determines informativeness. As $\widehat\pi$ decreases, the index $\lceil(m+1)(1-\alpha\widehat\pi)\rceil$ increases and the prediction set becomes weakly wider. At $\widehat\pi=0$, the threshold is $V_{(m+1)}=+\infty$, so the prediction set is uninformative. Hence, a useful lower bound should be as large as possible while remaining no greater than the true $\pi$.

Proof of Theorem \ref{thm1:coverage} separates the distribution-shift step from the conformal-rank step. The likelihood ratio restriction transfers a score-tail event from $Q$ to $P$. Then, an auxiliary draw $Z_{m+1}^{\circ}\sim P$ represents the random $P$ tail probability as an exchangeable event. Independent uniform tie breakers make the $m$ calibration scores and the auxiliary score almost surely distinct. Thus, their rank is uniform conditional on $\mathcal H$. Combining the transfer inequality with this rank bound yields Theorem \ref{thm1:coverage}.


The finite-sample result does not require consistent estimation of the
prediction rule or of the score distribution. Nevertheless, a consistent estimate of $\pi$ is sufficient to recover the oracle coverage target asymptotically. For Corollary \ref{coro:plugin-asymptotic-coverage}, we introduce the subscript $n$ to emphasize that the quantities and results depend on the sample size $n$. Consider a sequence of procedures indexed by $n$.  Let $\mathcal H_n$ denote the conditioning sigma-field in Algorithm 1, and let $m_n$, $V_n$,
$\widehat\pi_n$, $k_n$, $\widehat t_n$, and $\widehat C_{1,n}$ denote the corresponding calibration size, trained score, calibration value, order statistic index, threshold, and prediction set, respectively. For the uniformity statement in Corollary \ref{coro:plugin-asymptotic-coverage}, only the independent target distribution $Q$ varies over $\mathcal Q(P,\pi)$. Consequently, the distribution of $\widehat\pi_n$ and the coverage error bound is common across these target distributions. 

\begin{coro}[Asymptotic validity with an estimated share]
\label{coro:plugin-asymptotic-coverage}
Conditional on $\mathcal
H_n$, suppose that the calibration observations are independent draws from
$P$ and that the test observation is an independent draw from
$Q\in\mathcal Q(P,\pi)$.  Let $\mathbb P_{n,Q}$ denote the resulting joint distribution.

If $\widehat\pi_n\overset{p}{\longrightarrow}\pi$, then, for every $Q\in\mathcal Q(P,\pi)$,
\[
\left[
1-\alpha-
\mathbb P_{n,Q}\!
\left\{
Y_{n,m_n+1}\in
\widehat C_{1,n}(X_{n,m_n+1})
\mid\mathcal H_n
\right\}
\right]_+
\leq
\frac{\alpha}{\pi}(\widehat\pi_n-\pi)_+
=o_p(1).
\]
Moreover,
\[
\liminf_{n\to\infty}
\inf_{Q\in\mathcal Q(P,\pi)}
\mathbb P_{n,Q}\!
\left\{
Y_{n,m_n+1}\in
\widehat C_{1,n}(X_{n,m_n+1})
\right\}
\geq 1-\alpha.
\]
\end{coro}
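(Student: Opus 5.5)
The corollary is a direct consequence of Theorem \ref{thm1:coverage}, applied separately for each $n$. The conditions of that theorem hold conditionally on $\mathcal H_n$: the calibration observations are independent $P$ draws, the test draw is independent with law $Q\in\mathcal Q(P,\pi)$, and $\widehat\pi_n$, $m_n$ and $k_n$ are $\mathcal H_n$-measurable. So for every $Q\in\mathcal Q(P,\pi)$,
\[
\mathbb P_{n,Q}\{Y_{n,m_n+1}\in\widehat C_{1,n}(X_{n,m_n+1})\mid\mathcal H_n\}\geq 1-\alpha-\widehat\Delta_n,
\qquad
\widehat\Delta_n\leq\frac{\alpha}{\pi}(\widehat\pi_n-\pi)_+ .
\]
Rearranging gives $1-\alpha-\mathbb P_{n,Q}\{\cdot\mid\mathcal H_n\}\leq\widehat\Delta_n$. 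Since the right side $\frac{\alpha}{\pi}(\widehat\pi_n-\pi)_+$ is nonnegative, taking positive parts on the left preserves the inequality, which gives the displayed bound. This bound depends only on $\widehat\pi_n$, not on $Q$, so it holds uniformly over $\mathcal Q(P,\pi)$.

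The $o_p(1)$ claim follows from the continuous mapping theorem. The map $u\mapsto\frac{\alpha}{\pi}(u-\pi)_+$ is continuous and vanishes at $u=\pi$, and $\widehat\pi_n\overset{p}{\to}\pi$, so $\frac{\alpha}{\pi}(\widehat\pi_n-\pi)_+\overset{p}{\to}0$.

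For the unconditional statement, take expectations of the conditional bound using the tower property:
\[
\mathbb P_{n,Q}\{Y_{n,m_n+1}\in\widehat C_{1,n}(X_{n,m_n+1})\}\geq 1-\alpha-\mathbb E\left[\tfrac{\alpha}{\pi}(\widehat\pi_n-\pi)_+\right].
\]
The expectation term is free of $Q$, because the law of $\widehat\pi_n$ is determined by the assignment and selection sample and does not change as $Q$ ranges over $\mathcal Q(P,\pi)$. Hence the same lower bound holds for $\inf_{Q\in\mathcal Q(P,\pi)}$.

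The main step is to show that this expectation tends to zero, since convergence in probability alone does not give convergence of means. Boundedness resolves this: $\widehat\pi_n\in[0,1]$, so $0\leq\frac{\alpha}{\pi}(\widehat\pi_n-\pi)_+\leq\alpha(1-\pi)/\pi\leq\alpha/\pi$. The family is therefore uniformly bounded, hence uniformly integrable, and bounded convergence in probability gives $\mathbb E[\frac{\alpha}{\pi}(\widehat\pi_n-\pi)_+]\to0$. Taking $\liminf_{n\to\infty}$ then yields the second claim.

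A minor measurability point is whether conditional probabilities given $\mathcal H_n$ are well defined. This is handled exactly as in Theorem \ref{thm1:coverage}, through the regular conditional distribution $P^{\otimes r}\otimes Q$ on each event $\{m_n=r\}$. On the event $\{m_n=0\}$ the threshold is $+\infty$, so coverage is trivially one there and the bound holds.
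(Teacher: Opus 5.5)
Your proposal is correct and follows the paper's proof essentially step for step. You apply Theorem \ref{thm1:coverage} for each $n$, note that the bound is free of $Q$, and pass to the unconditional statement by showing that a uniformly bounded quantity converging in probability to zero has vanishing expectation. The only differences are cosmetic: you take expectations of the explicit bound $\frac{\alpha}{\pi}(\widehat\pi_n-\pi)_+$ rather than of $\widehat\Delta_n$, and you cite bounded convergence where the paper writes out the truncation inequality $E(\widehat\Delta_n)\leq\varepsilon+\frac{\alpha}{\pi}\Pr(\widehat\Delta_n>\varepsilon)$.
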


Note that Corollary \ref{coro:plugin-asymptotic-coverage} does not require $m_n \rightarrow \infty$. Although the one-sided coverage error vanishes when $\widehat \pi_n$ is consistent, $\widehat t_n$ itself may remain random.

Let $\Gamma=\mathcal F\{X,Y(0),Y(1)\mid\mathcal A\}$ and let $Q_\Gamma=\mathcal F\{X,Y(1)\mid\mathcal A\}$ be its $(X,Y(1))$ marginal distribution. Conditional on $\mathcal H$, define $\mathbb P_\Gamma$ so that the calibration observations follow $P^{\otimes m}$ and an independent test unit $(X_{m+1},Y_{m+1}(0),Y_{m+1}(1))$ follows $\Gamma$. Assumptions \ref{ass:monosel} and \ref{ass:positivity}, through Proposition \ref{prop:ident}, imply $Q_\Gamma\in\mathcal Q(P,\pi)$. Therefore, the prediction set for $Y(1)$ can be shifted to form an ITE prediction set.

\begin{definition}[ITE prediction set]
For $(x,y_0)\in \mathcal Z$, 
\[
\widehat C_\tau(x,y_0)
=
\widehat C_1(x)-y_0
=
\{y-y_0:y\in\widehat C_1(x)\}.
\]
\end{definition}

\citet{lei2021conformal} introduce this type of prediction target for the counterfactual and ITEs. Corollary \ref{coro:iteband} converts marginal coverage for $Y(1)$ into marginal coverage for the realized ITE of an independent selected-control draw.

\begin{coro}[Marginal ITE prediction for selected controls]
\label{coro:iteband}
Let
\[
(X_{m+1},Y_{m+1}(0),Y_{m+1}(1))\sim\Gamma
\]
be a test unit independent of $\mathcal H$ and the calibration observations, and define
\[
\tau_{m+1}=Y_{m+1}(1)-Y_{m+1}(0).
\]
Then
\[
\mathbb P_\Gamma
\left\{
\tau_{m+1}\in\widehat C_\tau\bigl(X_{m+1},Y_{m+1}(0)\bigr)
\mid\mathcal H
\right\}
\geq1-\alpha-\widehat\Delta.
\]
\end{coro}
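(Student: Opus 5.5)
The plan is to reduce Corollary \ref{coro:iteband} to Theorem \ref{thm1:coverage} by an exact event identity followed by a marginalization. First, since $\widehat C_\tau(x,y_0)=\widehat C_1(x)-y_0$ is a translate, for every realization
\[
\tau_{m+1}\in\widehat C_\tau\bigl(X_{m+1},Y_{m+1}(0)\bigr)
\iff
Y_{m+1}(1)-Y_{m+1}(0)=y-Y_{m+1}(0)\text{ for some }y\in\widehat C_1(X_{m+1})
\iff
Y_{m+1}(1)\in\widehat C_1(X_{m+1}).
\]
This holds pointwise, so no measurability issue arises beyond that of $\{(x,y):V(x,y)\le\widehat t\}$, which is measurable because $V$ and $\widehat t$ are measurable given $\mathcal H$ and the calibration sample. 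Consequently the ITE coverage probability equals $\mathbb P_\Gamma\{Y_{m+1}(1)\in\widehat C_1(X_{m+1})\mid\mathcal H\}$. The event depends on the test unit only through $(X_{m+1},Y_{m+1}(1))$.

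Second, I identify the law of $(X_{m+1},Y_{m+1}(1))$ under $\mathbb P_\Gamma$ given $\mathcal H$ and the calibration sample. The test unit is independent of $\mathcal H$ and of the calibration observations and has law $\Gamma$, so the conditional law of its $(X,Y(1))$ marginal is $Q_\Gamma$. By Assumptions \ref{ass:monosel}--\ref{ass:positivity} and Proposition \ref{prop:ident}, $Q_\Gamma=\mathcal F(X,Y(1)\mid\mathcal A)=Q_0\in\mathcal Q(P,\pi)$. Conditional on $\mathcal H$ and on each $\{m=r\}$, Lemma \ref{lem:calibration-design} gives calibration law $P^{\otimes r}$. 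The joint conditional law of the calibration scores and the target pair is therefore $P^{\otimes r}\otimes Q_\Gamma$, which is exactly the setting of Theorem \ref{thm1:coverage} with $Q=Q_\Gamma$.

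Third, I apply Theorem \ref{thm1:coverage} to conclude
\[
\mathbb P_\Gamma\{Y_{m+1}(1)\in\widehat C_1(X_{m+1})\mid\mathcal H\}\ge 1-\alpha-\widehat\Delta,
\]
with the same $\widehat\Delta$, since $k$, $m$ and $\pi$ are unchanged. Combining this with the first step gives the claim.

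The only potentially delicate point is the second step. There I must make sure that extending the test draw by the extra coordinate $Y_{m+1}(0)$ does not alter the joint conditional structure used in Theorem \ref{thm1:coverage}. I would handle this by writing $\mathbb P_\Gamma$ as the product $P^{\otimes r}\otimes\Gamma$ given $\mathcal H$ on $\{m=r\}$. Integrating out $y_0$ by Fubini then yields $P^{\otimes r}\otimes Q_\Gamma$, so the theorem applies verbatim.
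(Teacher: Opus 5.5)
Your proposal is correct and follows the paper's proof. You use the pathwise translation identity to reduce the ITE event to $\{Y_{m+1}(1)\in\widehat C_1(X_{m+1})\}$, and then apply Theorem~\ref{thm1:coverage} with $Q=Q_\Gamma=Q_0\in\mathcal Q(P,\pi)$, which holds by Proposition~\ref{prop:ident}. The paper additionally verifies $\mathcal F\{X,Y(0),Y(1)\mid D=0,S=1\}=\Gamma$ so that the test unit can be read as a selected-control draw; that step is interpretive and is not needed for the displayed inequality.
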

\subsection{Population Optimality and Threshold Consistency}
\label{subsec:optimality}


Theorem \ref{thm1:coverage} establishes finite-sample validity of the threshold $\widehat t=V_{(k)}$. This subsection shows that the adjustment from $1-\alpha$ to $1-\alpha\pi$ also gives the sharp population cutoff implied by the reduced-information identification region.

Fix a trained score function $V$ and consider $C_t=\{z:V(z)\leq t\}$. Uniform population coverage requires
\[
\inf_{Q\in\mathcal Q(P,\pi)}Q(V\leq t)\geq1-\alpha,
\]
or, equivalently,
\[
\sup_{Q\in\mathcal Q(P,\pi)}Q(V>t)\leq\alpha.
\]
For every $Q\in\mathcal Q(P,\pi)$, the likelihood ratio restriction gives
\[
Q(V>t)\leq\frac{1}{\pi}P(V>t).
\]
Therefore, the population problem is a minimax calibration problem over the upper score tail.

Proposition \ref{prop:sharpscore} shows that the upper bound is attained for every threshold. In consequence, the minimax coverage condition is equivalent to
\[
P(V>t)\leq\alpha\pi.
\]
The smallest admissible threshold is the lower $(1-\alpha\pi)$ quantile of the score under $P$. 

\begin{prop}[Sharp tail bound over the Lee ambiguity set]
\label{prop:sharpscore}
Let $(\mathcal Z,\mathcal B)$ be a standard Borel space, $P\in\mathcal P(\mathcal Z,\mathcal B)$, and $\pi\in(0,1]$. For every measurable score $V:\mathcal Z\to\mathbb R$ and every $t\in\mathbb R$,
\[
\sup_{Q\in\mathcal Q(P,\pi)}Q(V>t)
=
\min\left\{1,\frac{P(V>t)}{\pi}\right\}.
\]
For each $t$, a distribution $Q_t\in\mathcal Q(P,\pi)$ attains this supremum.
\end{prop}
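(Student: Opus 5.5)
The plan is to prove the upper bound and then build an explicit maximizer. Fix $V$ and $t$, and write $A_t=\{z:V(z)>t\}$, which is measurable because $V$ is, and $a=P(A_t)$.

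\textbf{Upper bound.} For any $Q\in\mathcal Q(P,\pi)$ with density $q=dQ/dP\le 1/\pi$,
\[
Q(A_t)=\int_{A_t}q\,dP\le a/\pi .
\]
Also $Q(A_t)\le 1$ trivially. Hence $\sup_Q Q(A_t)\le\min\{1,a/\pi\}$.

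\textbf{Attainment.} I would split into two cases according to whether $a\le\pi$ or $a>\pi$.

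\emph{Case $a\le\pi$.} The target is a density that equals $1/\pi$ on $A_t$ and spreads the remaining mass $1-a/\pi$ over $A_t^c$. If $P(A_t^c)>0$, set
\[
q=\frac{1}{\pi}\mathbf 1_{A_t}+c\,\mathbf 1_{A_t^c},\qquad c=\frac{1-a/\pi}{1-a}.
\]
Then $c\ge0$. It also satisfies $c\le 1/\pi$, which is equivalent to $\pi-a\le 1-a$, i.e.\ $\pi\le1$. So $q$ integrates to one, lies in $[0,1/\pi]$, and gives $Q_t(A_t)=a/\pi$. If instead $P(A_t^c)=0$, then $a=1\le\pi$ forces $\pi=1$, and $Q_t=P$ works.

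\emph{Case $a>\pi$.} Here the target is $Q_t(A_t)=1$, so I need a density supported on $A_t$, bounded by $1/\pi$, with total mass one. The simplest choice is
\[
q=\frac{1}{a}\mathbf 1_{A_t}.
\]
Its bound holds because $1/a<1/\pi$, it integrates to one, and it gives $Q_t(A_t)=1=\min\{1,a/\pi\}$.

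In both cases $Q_t\ll P$ by construction, so $Q_t\in\mathcal Q(P,\pi)$, and the supremum is a maximum.

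The main obstacle is mild: making sure the leftover constant $c$ in the first case respects the upper bound $1/\pi$ and handling the degenerate boundary cases. Using the ratio $1/a$ in the second case avoids needing nonatomicity of $P$, so the argument works for atomic $P$ as well. The standard Borel assumption is not actually needed beyond measurability.
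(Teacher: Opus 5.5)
Your proposal is correct and matches the paper's proof essentially step for step. The paper also bounds $Q(A_t)\le P(A_t)/\pi$ via the density cap, then attains the bound with the same two witnesses: $w_t=\pi^{-1}\mathbbm{1}_{A_t}+c\,\mathbbm{1}_{A_t^c}$ with $c=(1-a/\pi)/(1-a)$ when $a\le\pi$ (including the degenerate case $a=1$, $\pi=1$, where $Q_t=P$), and $Q_t=P(\cdot\mid A_t)$ with density $\mathbbm{1}_{A_t}/a$ when $a>\pi$.
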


Let $F(t)=P(V\leq t)$ and
$F^{-1}(u)=\inf\{t\in\overline{\mathbb R}:F(t)\geq u\}$, where $\overline{\mathbb R}=\mathbb R\cup\{+\infty\}$. Define the population minimax threshold as follows.
\begin{definition}[Population minimax threshold]
\[
t^*
=
\inf\left\{
t\in\overline{\mathbb R}:
\inf_{Q\in\mathcal Q(P,\pi)}Q(V\leq t)\geq1-\alpha
\right\}.
\]
\end{definition}

Proposition \ref{prop:sharpthreshold} gives an explicit expression for $t^*$. Among prediction sets defined by a score cutoff, $t^*$ is the smallest threshold that is uniformly valid over the sharp reduced-information region. The split-conformal threshold $\widehat t$ in Theorem \ref{thm1:coverage} is its finite-sample, distribution-free analogue.

\begin{prop}[Minimax population threshold]
\label{prop:sharpthreshold}
Let $(\mathcal Z,\mathcal B)$ be a standard Borel space. Fix $P\in\mathcal P(\mathcal Z,\mathcal B)$, $\pi\in(0,1]$, and $\alpha\in(0,1)$. For every measurable score $V:\mathcal Z\to\mathbb R$, 
\[
t^*=F^{-1}(1-\alpha\pi).
\]
Equivalently, $t^*$ is the smallest threshold satisfying $P(V>t^*)\leq\alpha\pi$, and $P(V>t)>\alpha\pi$ for every $t<t^*$. Moreover,
\[
\inf_{Q\in\mathcal Q(P,\pi)}Q(V\leq t)
=
\max\left\{0,1-\frac{P(V>t)}{\pi}\right\}
\]
for every $t\in\mathbb R$, and some $Q_t\in\mathcal Q(P,\pi)$ attains the infimum. Hence, no threshold below $t^*$ is uniformly valid.
\end{prop}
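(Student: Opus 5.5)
The plan is to read everything off Proposition \ref{prop:sharpscore} and then do a quantile computation for the right-continuous distribution function $F$.

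\emph{Step 1: the infimum formula.} Fix $t\in\mathbb R$. Every $Q\in\mathcal Q(P,\pi)$ is a probability measure, so $Q(V\leq t)=1-Q(V>t)$. Hence
\[
\inf_{Q\in\mathcal Q(P,\pi)}Q(V\leq t)=1-\sup_{Q\in\mathcal Q(P,\pi)}Q(V>t)=1-\min\{1,P(V>t)/\pi\}=\max\{0,1-P(V>t)/\pi\}.
\]
The attaining distribution $Q_t$ from Proposition \ref{prop:sharpscore} attains the supremum of the tail, so it also attains this infimum.

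For $t=+\infty$ the coverage is $1$ trivially. For $t=-\infty$, if it is admitted, $Q(V\leq-\infty)=0<1-\alpha$, so it is never feasible. Consequently, for every $t\in\mathbb R$, the feasibility condition $\inf_Q Q(V\leq t)\geq 1-\alpha$ holds if and only if $1-P(V>t)/\pi\geq 1-\alpha$. The value $\max\{0,\cdot\}$ cannot exceed $1-\alpha$ through the zero branch, because $1-\alpha>0$. The condition is therefore equivalent to
\[
P(V>t)\leq\alpha\pi \iff F(t)\geq 1-\alpha\pi.
\]

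\emph{Step 2: identify $t^*$.} By Step 1, the feasible set in the definition of $t^*$ equals $\{t\in\overline{\mathbb R}:F(t)\geq1-\alpha\pi\}$, with the convention $F(+\infty)=1$. Its infimum is exactly $F^{-1}(1-\alpha\pi)$. Since $1-\alpha\pi\in[1-\alpha,1)$ lies in $(0,1)$, $F^{-1}(1-\alpha\pi)$ is finite. This holds because $F(t)\to1$ as $t\to\infty$ and $F(t)\to0$ as $t\to-\infty$.

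Right-continuity of $F$ gives $F(t^*)\geq1-\alpha\pi$: take a decreasing sequence of feasible points converging to $t^*$. Thus $P(V>t^*)\leq\alpha\pi$, so $t^*$ itself is feasible and the infimum is a minimum. For any $t<t^*$, $t$ is not in the feasible set, so $F(t)<1-\alpha\pi$, that is, $P(V>t)>\alpha\pi$. This gives the "equivalently" characterization.

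\emph{Step 3: no smaller threshold is valid.} For $t<t^*$, Step 2 gives $P(V>t)>\alpha\pi$. Then $Q_t$ from Step 1 satisfies $Q_t(V\leq t)=\max\{0,1-P(V>t)/\pi\}<1-\alpha$, so uniform validity fails for every such $t$.

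The main obstacle is bookkeeping rather than substance. It consists of the extended-real convention at $\pm\infty$, the $\max\{0,\cdot\}$ branch, and the right-continuity step showing that the infimum is attained. The heavy lifting, namely the existence of the extremal $Q_t$, is already supplied by Proposition \ref{prop:sharpscore}.
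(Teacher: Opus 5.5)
Your proposal is correct and takes essentially the same route as the paper. You complement the sharp tail bound of Proposition \ref{prop:sharpscore} to obtain the infimum formula and its attaining $Q_t$, use $1-\alpha>0$ to reduce feasibility to $P(V>t)\leq\alpha\pi$, show via right-continuity that the infimum of this upper set is attained and equals $F^{-1}(1-\alpha\pi)$, and invoke $Q_t$ for $t<t^*$. The only cosmetic difference is ordering: you identify the feasible set directly with the quantile set and treat $\pm\infty$ explicitly, whereas the paper first works with the real feasible set $\{t\in\mathbb R:P(V>t)\leq\alpha\pi\}$ and converts to $F^{-1}$ afterward.
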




The result is closely related to robust conformal sensitivity analysis studied by \citet{jin2023sensitivity}. Their probably approximately correct (PAC) procedure constructs a conservative lower envelope $\widehat G_n$ for the worst-case score distribution and chooses a threshold at $\inf \{ t: \widehat G_n(t) \ge 1 - \alpha \}$. However, the Lee ambiguity set is identified in the present setting. Monotone selection and the selection rates give the likelihood ratio restriction $0\leq dQ/dP\leq1/\pi$, which implies the closed-form worst-case distribution function
\[
G_{\rm Lee}(t)
=
\inf_{Q\in\mathcal Q(P,\pi)}Q(V\leq t)
=
\max\left\{0,1-\frac{P(V>t)}{\pi}\right\}.
\]
The conformalized Lee procedure exploits this scalar upper-tail form and uses order statistics of $V$ to derive $t^*$. In consequence, $t^*$ is a closed-form threshold obtained from the identified Lee ambiguity set. 

Proposition \ref{prop:sharpthreshold} concerns the population cutoff $t^*$ for a fixed trained score, whereas Theorem \ref{thm1:coverage} concerns a random finite-sample order statistic $\widehat t$. Proposition \ref{prop:threshold-consistency} bridges these two objects.  The first conclusion compares the sample cutoff with the population cutoff induced by the realized trained score.  A deterministic limiting cutoff requires the additional stability condition stated in the second part.

\begin{prop}[Consistency of the split-conformal threshold]
\label{prop:threshold-consistency}
Let $\mathcal H_n$ be the conditioning sigma-field, $m_n$ be the calibration size, and $V_n:\mathcal Z\to\mathbb R$ be the $\mathcal H_n$-measurable trained score.  Conditional on $\mathcal H_n$, suppose that
\[
V_{n,i}=V_n(Z_{n,i}),\qquad i=1,\ldots,m_n,
\]
are independent with common conditional CDF
\[
F_n(t)
=
\int \mathbbm 1\{V_n(z)\leq t\}\,P(dz).
\]
Fix $\alpha\in(0,1)$ and $\pi\in(0,1]$, and define
\[
q=1-\alpha\pi,
\qquad
t_n^*=F_n^{-1}(q)
=
\inf\{t\in\mathbb R:F_n(t)\geq q\}.
\]
Let $\widehat\pi_n\in[0,1]$ be $\mathcal H_n$-measurable and set
\[
k_n
=
\left\lceil(m_n+1)(1-\alpha\widehat\pi_n)\right\rceil,
\qquad
\widehat t_n=V_{n,(k_n)},
\]
with $V_{n,(m_n+1)}=+\infty$.  Suppose that
\[
m_n \overset{p}{\longrightarrow} \infty
\qquad\text{and}\qquad
\widehat\pi_n \overset{p}{\longrightarrow} \pi.
\]
Then, $\Pr(k_n=m_n+1)\to0$.  In addition, each of the following conclusions
holds under its stated regularity condition.

\begin{enumerate}[(i)]
\item Suppose that the conditional $q$-quantile is asymptotically isolated in the sense that
for every $\varepsilon>0$, there is an $\eta_\varepsilon>0$ such that
\[
\Pr\!\left\{
F_n(t_n^*-\varepsilon)\leq q-\eta_\varepsilon,
\quad
F_n(t_n^*+\varepsilon)\geq q+\eta_\varepsilon
\right\}
\longrightarrow1.
\]
Then
\[
\widehat t_n-t_n^*\overset{p}{\longrightarrow}0.
\]

\item Suppose that there is a nonrandom CDF $F$ such that
\[
\sup_{t\in\mathbb R}|F_n(t)-F(t)|\overset{p}{\longrightarrow}0
\]
and $t^*=F^{-1}(q)$ is isolated in the sense that for every $\epsilon > 0$,
\[
F(t^*-\varepsilon)<q<F(t^*+\varepsilon)
.
\]
Then
\[
t_n^* \overset{p}{\longrightarrow} t^*
\qquad\text{and}\qquad
\widehat t_n \overset{p}{\longrightarrow}t^*.
\]
\end{enumerate}
\end{prop}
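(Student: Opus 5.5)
I plan to work conditionally on $\mathcal H_n$ throughout. Given $\mathcal H_n$, the scores $V_{n,i}$ are i.i.d.\ with CDF $F_n$, and the quantities $m_n$, $k_n$ and $t_n^*$ are fixed. Order statistics are controlled through binomial counts. For any fixed $s$, the event $\{V_{n,(k)}\le s\}$ coincides with $\{\#\{i:V_{n,i}\le s\}\ge k\}$, and this count is $\mathrm{Bin}(m_n,F_n(s))$ given $\mathcal H_n$. Conditional probability bounds will be converted into unconditional ones using bounded convergence, since probabilities are bounded by one.

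\emph{Step 1: $\Pr(k_n=m_n+1)\to0$.} Write $k_n/(m_n+1)$. Since $\lceil(m_n+1)(1-\alpha\widehat\pi_n)\rceil\le (m_n+1)(1-\alpha\widehat\pi_n)+1$, the event $k_n=m_n+1$ forces
\[
(m_n+1)\alpha\widehat\pi_n<1 .
\]
Because $\widehat\pi_n\to_p\pi>0$ and $m_n\to_p\infty$, the product $(m_n+1)\alpha\widehat\pi_n$ tends to infinity in probability, so this event has vanishing probability. The same argument gives $k_n/m_n\to_p q$ and $k_n\ge1$ with probability tending to one.

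\emph{Step 2: part (i).} Fix $\varepsilon>0$ and take $\eta=\eta_\varepsilon$. Let $G_n$ denote the event where the isolation inequalities hold, $|k_n/m_n-q|<\eta/2$, and $m_n>M$; this event is $\mathcal H_n$-measurable and has probability tending to one. On $G_n$, for the lower tail I would write
\[
\Pr\{\widehat t_n\le t_n^*-\varepsilon\mid\mathcal H_n\}=\Pr\{\mathrm{Bin}(m_n,F_n(t_n^*-\varepsilon))\ge k_n\}.
\]
The binomial mean is at most $m_n(q-\eta)$, while $k_n\ge m_n(q-\eta/2)$. Hoeffding's inequality then bounds this probability by $\exp(-m_n\eta^2/2)$, which is at most $\exp(-M\eta^2/2)$. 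For the upper tail, the event $\{\widehat t_n>t_n^*+\varepsilon\}$ equals $\{\mathrm{Bin}(m_n,F_n(t_n^*+\varepsilon))<k_n\}$. On $G_n$ the binomial mean is at least $m_n(q+\eta)$ while $k_n\le m_n(q+\eta/2)$, and this must also cover the case $k_n=m_n+1$, which Step 1 excludes. Hoeffding gives the same exponential bound. Integrating over $G_n$, and using $\Pr(G_n^c)\to0$ with $M$ arbitrary, yields $\Pr(|\widehat t_n-t_n^*|>\varepsilon)\to0$.

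\emph{Step 3: part (ii).} First I would show $t_n^*\to_p t^*$. Fix $\varepsilon$ and set $\delta=\min\{q-F(t^*-\varepsilon),\,F(t^*+\varepsilon)-q\}>0$. On the event $\sup_t|F_n-F|<\delta/2$, we have $F_n(t^*-\varepsilon)<q$, so $t_n^*\ge t^*-\varepsilon$, and $F_n(t^*+\varepsilon)>q$, so $t_n^*\le t^*+\varepsilon$.

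For $\widehat t_n$, I would rerun the binomial argument of Step 2 centered at the fixed point $t^*$ rather than at $t_n^*$. On the same event, $F_n(t^*-\varepsilon)\le q-\delta/2$ and $F_n(t^*+\varepsilon)\ge q+\delta/2$. Hoeffding then gives $\widehat t_n\in[t^*-\varepsilon,t^*+\varepsilon]$ with probability tending to one, and this avoids any need to verify condition (i) with a uniform $\eta_\varepsilon$.

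The main obstacle is bookkeeping rather than substance. The random index $k_n$, the random $m_n$ and the random $F_n$ must all be placed inside an $\mathcal H_n$-measurable good event before applying Hoeffding conditionally. The $+\infty$ convention for $V_{n,(m_n+1)}$ must also be handled, which Step 1 disposes of.
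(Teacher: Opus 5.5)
Your proposal is correct and follows essentially the same route as the paper. Both arguments condition on $\mathcal H_n$ and use the order-statistic/count duality with Hoeffding's inequality at the $\mathcal H_n$-measurable points $t_n^*\pm\varepsilon$ in part (i), and at the fixed points $t^*\pm\varepsilon$ in part (ii) after showing $t_n^*\to_p t^*$ by uniform convergence. You phrase this as binomial tail bounds on a single $\mathcal H_n$-measurable good event, while the paper uses empirical-CDF deviations and a contradiction step, but that difference is only notational.
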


Proposition \ref{prop:plugin-pi-consistency} in the \hyperref[sec:appendix]{Appendix} provides the required conditions for $\widehat \pi_n \overset{p}{\longrightarrow} \pi$ and $m_n \overset{p}{\longrightarrow} \infty$.

\section{Simulation Studies}
\label{sec:simulation}

\begin{table}[t]
\centering
\caption{Monte Carlo design for the focused simulation study}
\label{tab:mc-design}
\begin{threeparttable}
\small
\begin{tabular}{ll}
\toprule
Component & Value \\
\midrule
Dimension & $p=4$ \\
Target calibration size & $m\in\{100,200\}$ \\
Always-selected share & $\pi\in\{0.75,0.50,0.25\}$ \\
Target coverage grid & $1-\alpha\in\{0.50,0.60,0.70,0.80,0.90\}$ \\
Treatment probability & $\Pr(D=1)=1/2$ \\
Treated selection rate & $p_1=\Pr\{S(1)=1\}=0.8$ \\
Outcome design & Heteroskedastic linear model for $Y(1)$ \\
DGPs & Benign, conditional tail, unconditional tail, smooth \\
Nonconformity scores & Oracle residual, fitted residual, CQR \\
Methods & Naive, oracle Lee, plug-in Lee, CP-Lee, Hoeffding-Lee \\
Monte Carlo replications & $R=100$ per configuration \\
CI budget for lower bound methods & $\delta_\pi=0.05$ \\
\bottomrule
\end{tabular}
\begin{tablenotes}[flushleft]
\footnotesize
\item  \textit{Notes:} CP-Lee and Hoeffding-Lee use the same nominal miscoverage $\alpha$ in the conformal cutoff as the other methods. Their formal unconditional coverage target is $1-\alpha-\delta_\pi$.
\end{tablenotes}
\end{threeparttable}
\end{table}

\begin{table}[t]
\centering
\caption{Principal-stratum assignment rules in the simulation DGPs}
\label{tab:dgp-rules}
\begin{threeparttable}
\small
\setlength{\tabcolsep}{4pt}
\begin{tabular}{p{0.22\linewidth}p{0.54\linewidth}p{0.17\linewidth}}
\toprule
DGP & Always-selected rule among units with $S(1)=1$ & Purpose \\
\midrule
Benign & $A\sim\rm{Bernoulli}(\pi)$ independently of $(X,U_1,Y(1))$ & No distribution shift \\
Conditional tail & $A=1\{|U_1|>\sigma(X)\Phi^{-1}(1-\pi/2)\}$ & Near-worst-case score shift \\
Unconditional tail & $A=1\{|U_1|>c_\pi\}$, where $\Pr(|U_1|>c_\pi)=\pi$ & Joint outcome and covariate shift \\
Smooth & $A\mid X,U_1\sim\rm{Bernoulli}\{\rm{logit}^{-1}(a_\pi+b_X(2X_1-1)+b_Y(|U_1|/\sigma(X)-q_{0.75}))\}$ & Smooth heterogeneous selection \\
\bottomrule
\end{tabular}
\begin{tablenotes}[flushleft]
\footnotesize
\item  \textit{Notes:} Each design sets $S(0)=\ind\{A=1\}$, so $S(1)\geq S(0)$ holds by construction. The smooth design uses $(b_X,b_Y)=(1,1.5)$.
\end{tablenotes}
\end{threeparttable}
\end{table}

\begin{table}[t]
\centering
\caption{Coverage summary over the full focused grid}
\label{tab:coverage-full-grid}
\begin{threeparttable}
\small
\begin{tabular}{llrrrrrr}
\toprule
DGP & Method for $\pi_{\rm used}$ & Coverage & Gap & Min. gap & Below & Finite length  \\
\midrule
Benign & Naive & 0.703 & 0.003 & -0.008 & 20/90 & 3.145  \\
    & Oracle Lee & 0.854 & 0.154 & 0.022 & 0/90 & 4.564  \\
     & Plug-in Lee & 0.853 & 0.153 & 0.023 & 0/90 & 4.569 \\
     & CP-Lee & 0.878 & 0.228 & 0.084 & 0/90 & 4.948 \\
     & Hoeffding-Lee & 0.894 & 0.244 & 0.088 & 0/90 & 5.097 \\
     \\
Conditional tail & Naive & 0.427 & -0.273 & -0.602 & 90/90 & 3.144 \\
     & Oracle Lee & 0.714 & 0.014 & -0.016 & 6/90 & 4.567 \\
     & Plug-in Lee & 0.714 & 0.014 & -0.005 & 6/90 & 4.574 \\
     & CP-Lee & 0.767 & 0.117 & 0.063 & 0/90 & 4.951 \\
     & Hoeffding-Lee & 0.804 & 0.154 & 0.069 & 0/90 & 5.109 \\
     \\
Unconditional tail & Naive & 0.416 & -0.284 & -0.698 & 90/90 & 3.178 \\
     & Oracle Lee & 0.711 & 0.011 & -0.018 & 6/90 & 4.598 \\
     & Plug-in Lee & 0.712 & 0.012 & -0.011 & 8/90 & 4.610 \\
     & CP-Lee & 0.765 & 0.115 & 0.061 & 0/90 & 4.973 \\
     & Hoeffding-Lee & 0.803 & 0.153 & 0.072 & 0/90 & 5.082 \\
     \\
Smooth & Naive & 0.565 & -0.135 & -0.282 & 90/90 & 3.175 \\
     & Oracle Lee & 0.766 & 0.066 & 0.005 & 0/90 & 4.592 \\
     & Plug-in Lee & 0.767 & 0.067 & 0.006 & 0/90 & 4.598 \\
     & CP-Lee & 0.804 & 0.154 & 0.068 & 0/90 & 4.972 \\
     & Hoeffding-Lee & 0.833 & 0.183 & 0.072 & 0/90 & 5.138 \\
\bottomrule
\end{tabular}
\begin{tablenotes}[flushleft]
\footnotesize
\item  \textit{Notes:} ``Gap'' is empirical coverage minus the method-specific target. The target is $1-\alpha$ for naive, oracle Lee, and plug-in Lee, and $1-\alpha-\delta_\pi$ for CP-Lee and Hoeffding-Lee. ``Below'' is the number of summarized configurations with a negative gap. ``Finite length'' averages only configurations with finite reported lengths.
\end{tablenotes}
\end{threeparttable}
\end{table}

\begin{table}[t]
\centering
\caption{Coverage and length at nominal coverage $1-\alpha=0.90$}
\label{tab:coverage-alpha01}
\begin{threeparttable}
\small
\begin{tabular}{llrrrrr}
\toprule
DGP & Method for $\pi_{\rm used}$ & Target & Coverage & Gap & Finite length \\
\midrule
Benign & Naive & 0.900 & 0.903 & 0.003 & 4.863 \\
     & Oracle Lee & 0.900 & 0.953 & 0.053 & 6.217 \\
     & Plug-in Lee & 0.900 & 0.954 & 0.054 & 6.254  \\
     & CP-Lee & 0.850 & 0.962 & 0.112 & 6.657 \\
     & Hoeffding-Lee & 0.850 & 0.967 & 0.117 & 6.569 \\
     \\
Conditional tail & Naive & 0.900 & 0.771 & -0.129 & 4.863 \\
     & Oracle Lee & 0.900 & 0.909 & 0.009 & 6.212 \\
     & Plug-in Lee & 0.900 & 0.909 & 0.009 & 6.237 \\
     & CP-Lee & 0.850 & 0.928 & 0.078 & 6.668 \\
     & Hoeffding-Lee & 0.850 & 0.940 & 0.090 & 6.519 \\
     \\
Unconditional tail & Naive & 0.900 & 0.769 & -0.131 & 4.915 \\
     & Oracle Lee & 0.900 & 0.910 & 0.010 & 6.273 \\
     & Plug-in Lee & 0.900 & 0.912 & 0.012 & 6.311 \\
     & CP-Lee & 0.850 & 0.929 & 0.079 & 6.699 \\
     & Hoeffding-Lee & 0.850 & 0.940 & 0.090 & 6.579 \\
     \\
Smooth & Naive & 0.900 & 0.821 & -0.079 & 4.913 \\
     & Oracle Lee & 0.900 & 0.919 & 0.019 & 6.278 \\
     & Plug-in Lee & 0.900 & 0.919 & 0.019 & 6.266  \\
     & CP-Lee & 0.850 & 0.935 & 0.085 & 6.701  \\
     & Hoeffding-Lee & 0.850 & 0.945 & 0.095 & 6.572  \\
\bottomrule
\end{tabular}
\begin{tablenotes}[flushleft]
\footnotesize
\item  \textit{Notes:} The target is $0.90$ for naive, oracle Lee, and plug-in Lee. Under the same-$\alpha$ convention, the target is $0.85$ for CP-Lee and Hoeffding-Lee because $\delta_\pi=0.05$. The table averages over $m\in\{100,200\}$, $\pi\in\{0.25,0.50,0.75\}$, and the three score classes.
\end{tablenotes}
\end{threeparttable}
\end{table}

\begin{table}[t]
\centering
\caption{Always-selected share diagnostics at nominal coverage $1-\alpha=0.90$}
\label{tab:eta-diagnostics}
\begin{threeparttable}
\small
\begin{tabular}{rrrrrrrr}
\toprule
$m$ & $\pi$ & Avg. $m$ & $E[\widehat\pi]$ & $E[\widehat \pi_{\rm{CP}}]$ & $E[\widehat\pi_{\rm{H}}]$ & CP $\Pr(k=m+1)$ & H $\Pr(k=m+1)$ \\
\midrule
100 & 0.25 & 100.0 & 0.251 & 0.181 & 0.120 & 0.000 & 0.242 \\
100 & 0.50 & 100.0 & 0.499 & 0.399 & 0.342 & 0.000 & 0.000 \\
100 & 0.75 & 100.5 & 0.749 & 0.631 & 0.565 & 0.000 & 0.000 \\
200 & 0.25 & 200.8 & 0.251 & 0.200 & 0.156 & 0.000 & 0.000 \\
200 & 0.50 & 199.9 & 0.501 & 0.428 & 0.386 & 0.000 & 0.000 \\
200 & 0.75 & 200.0 & 0.751 & 0.667 & 0.617 & 0.000 & 0.000 \\
\bottomrule
\end{tabular}
\begin{tablenotes}[flushleft]
\footnotesize
\item  \textit{Notes:} The table averages over the DGPs and score classes at $\alpha=0.1$. The plug-in estimator is close to the true $\pi$, whereas the lower confidence bounds are deliberately smaller. The last two columns report the probability that $k=m+1$, which yields an infinite prediction interval.
\end{tablenotes}
\end{threeparttable}
\end{table}

\begin{table}[t]
\centering
\caption{Score robustness under selection-induced distribution shift at nominal coverage $1-\alpha=0.90$}
\label{tab:score-robustness}
\begin{threeparttable}
\small
\begin{tabular}{llrrrrr}
\toprule
Score & Method for $\pi_{\rm used}$& Target & Coverage & Gap & Finite length & $\Pr(k=m+1)$ \\
\midrule
Oracle residual & Naive & 0.900 & 0.782 & -0.118 & 4.775 & 0.000 \\
     & Oracle Lee & 0.900 & 0.911 & 0.011 & 6.114 & 0.000 \\
     & Plug-in Lee & 0.900 & 0.913 & 0.013 & 6.161 & 0.000 \\
    & CP-Lee & 0.850 & 0.930 & 0.080 & 6.584 & 0.000 \\
    & Hoeffding-Lee & 0.850 & 0.942 & 0.092 & 6.469 & 0.039 \\
    \\
Fitted residual & Naive & 0.900 & 0.783 & -0.117 & 4.863 & 0.000 \\
     & Oracle Lee & 0.900 & 0.911 & 0.011 & 6.235 & 0.000 \\
     & Plug-in Lee & 0.900 & 0.913 & 0.013 & 6.272 & 0.000 \\
     & CP-Lee & 0.850 & 0.930 & 0.080 & 6.683 & 0.000 \\
     & Hoeffding-Lee & 0.850 & 0.941 & 0.091 & 6.552 & 0.045 \\
     \\
CQR & Naive & 0.900 & 0.794 & -0.106 & 5.052 & 0.000 \\
     & Oracle Lee & 0.900 & 0.917 & 0.017 & 6.415 & 0.000 \\
     & Plug-in Lee & 0.900 & 0.915 & 0.015 & 6.382 & 0.000 \\
     & CP-Lee & 0.850 & 0.932 & 0.082 & 6.801 & 0.000 \\
     & Hoeffding-Lee & 0.850 & 0.943 & 0.093 & 6.650 & 0.042 \\
\bottomrule
\end{tabular}
\begin{tablenotes}[flushleft]
\footnotesize
\item  \textit{Notes:} The table averages over the conditional-tail, unconditional-tail, and smooth designs and excludes the benign design. It compares the Lee correction across oracle residual, fitted residual, and CQR scores.
\end{tablenotes}
\end{threeparttable}
\end{table}

The Monte Carlo study examines the coverage mechanism of Algorithm 1 under four selection designs. The benign design provides a no-shift benchmark, while the conditional-tail, unconditional-tail, and smooth designs assume different shifts between the treated-selected calibration distribution and the always-selected target distribution.

Table \ref{tab:mc-design} summarizes the design. In each replication, an independent sample from the always-selected population is used to evaluate coverage and interval length. The potential outcomes follow a heteroskedastic linear model. We draw
\[
X\sim\operatorname{Unif}([0,1]^4)
\]
and
\[
U_1\mid X\sim N\{0,\sigma^2(X)\},
\qquad
\sigma^2(X)=1+\frac{1}{2}(2.5X_1)^2.
\]
The treated potential outcome is
\[
Y(1)=\beta^\top X+U_1,
\qquad
\beta=(-0.531,0.126,-0.312,0.018)^\top.
\]
For ITE evaluation, the untreated potential outcome is
\[
Y(0)=\beta_0^\top X+0.5U_1+U_0,
\qquad
U_0\sim N(0,1),
\qquad
\beta_0=(0.2,-0.1,0.1,0.05)^\top.
\]
The shared disturbance $U_1$ induces dependence between $Y(1)$ and $Y(0)$.

Table \ref{tab:dgp-rules} defines the principal-stratum assignments. In every design,
\[
D\sim\operatorname{Bernoulli}(1/2),
\qquad
S(1)\sim\operatorname{Bernoulli}(0.8),
\]
and the always-selected indicator $A$ is assigned only among units with $S(1)=1$. We set $S(0)=\ind\{A=1\}$, so monotonicity holds by construction. The benign design makes $A$ independent of $(X,U_1,Y(1))$, which gives $P=Q$. The conditional-tail design selects units from the conditional upper tail of $|U_1|$. The unconditional-tail design selects from the unconditional upper tail and also changes the covariate distribution because $\sigma(X)$ depends on $X_1$. The smooth design uses a logistic selection probability based on $X_1$ and $|U_1|/\sigma(X)$.

The simulations compare five calibration rules. Naive conformal sets $\pi_{\rm used}=1$, oracle Lee uses the true $\pi$, and plug-in Lee sets  $\pi_{\rm used}=\min\{1,\widehat p_0/\widehat p_1\}$. CP-Lee and Hoeffding-Lee use one-sided lower confidence bounds. The lower bound methods use the same conformal miscoverage $\alpha$ as the other procedures and allocate $\delta_\pi=0.05$ to estimation of $\pi$. Therefore, their formal unconditional target is $1-\alpha-\delta_\pi$.

The study uses three nonconformity scores. The oracle residual score is
\[
V(x,y)=|y-\beta^\top x|,
\]
which isolates calibration from model estimation and prediction. The fitted residual score is
\[
V(x,y)=|y-\widehat\mu_1(x)|,
\]
where $\widehat\mu_1$ is estimated on the treated-selected training fold. The CQR score  (\citet{romano2019conformalized}) is
\[
V(x,y)
=
\max\{\widehat q_{\alpha/2}(x)-y,\;y-\widehat q_{1-\alpha/2}(x)\},
\]
where the conditional quantile functions are estimated on the same training fold (see, \citet{meinshausen2006quantile}). These scores compare the oracle calibration experiment with two feasible prediction methods.

Table \ref{tab:coverage-full-grid} reports coverage over the full grid, and the gap column compares each method with its method-specific target. Under benign selection, naive conformal prediction has an average gap of $0.003$, as expected when $P=Q$. Under distribution shift, its average gap falls to $-0.273$ in the conditional-tail design, $-0.284$ in the unconditional-tail design, and $-0.135$ in the smooth design. The larger failures in the two tail designs reflect their concentration of always-selected units in the upper score tail.

Oracle Lee and plug-in Lee remove most of these shortfalls. Their average gaps are $0.014$ in the conditional-tail design, approximately $0.01$ in the unconditional-tail design, and approximately $0.07$ in the smooth design. CP-Lee and Hoeffding-Lee are more conservative because their lower bounds use smaller values of $\pi_{\rm used}$. The resulting higher score quantiles raise coverage and interval length, with the Hoeffding procedure giving the largest gaps at these sample sizes.

Table \ref{tab:coverage-alpha01} focuses on nominal coverage $0.90$. Naive conformal prediction covers $0.903$ in the benign design but only $0.771$, $0.769$, and $0.821$ in the conditional-tail, unconditional-tail, and smooth designs. Oracle and plug-in Lee cover between $0.909$ and $0.919$ in the shifted designs. CP-Lee and Hoeffding-Lee exceed their formal target of $0.85$ under the same-$\alpha$ convention.

Table \ref{tab:eta-diagnostics} explains the conservativeness of the lower bound methods. The plug-in estimator is nearly unbiased. When $\pi=0.25$, its mean is $0.251$ for both calibration sizes. With $m=100$, the corresponding CP and Hoeffding lower bounds average $0.181$ and $0.120$. The smaller Hoeffding value also explains the main infinite-interval case. When $m=100$ and $\pi=0.25$, $\Pr(k=m+1)=0.242$. Because such intervals are valid but uninformative, the tables report finite length and the probability of $k=m+1$ separately.

Table \ref{tab:score-robustness} averages over the three shifted designs at nominal coverage $0.90$. Naive conformal prediction covers $0.782$, $0.783$, and $0.794$ with the oracle residual, fitted residual, and CQR scores. Oracle and plug-in Lee cover about $0.91$ for all three scores, while the lower bound methods are conservative relative to their formal target. Hence, the proposed correction remains effective when the score depends on estimated prediction functions.

Figures \ref{fig:dgp-benign}--\ref{fig:dgp-smooth} display the same ranking across the coverage grid. Naive conformal prediction lies near the diagonal in the benign design but falls below it in each shifted design, with the largest shortfalls in the two tail designs and at smaller $\pi$. Plug-in Lee remains close to the target under conditional-tail, unconditional-tail, and smooth selection. CP-Lee and Hoeffding-Lee generally lie above the target, and the Hoeffding curves are highest because that lower bound is more conservative. 

The tables and figures support a common conclusion. Naive conformal prediction is accurate when $P=Q$ but can fail under selection-induced shifts in the joint distribution of $(X,Y(1))$. Plug-in Lee closely tracks the oracle benchmark, whereas the lower bound procedures trade wider and occasionally infinite sets for conservative coverage. Increasing $m$ from $100$ to $200$ mainly reduces finite-sample variability. 

\begin{figure}[p]
\centering
\caption{Coverage under benign selection}
\includegraphics[width=1 \linewidth]
{\detokenize{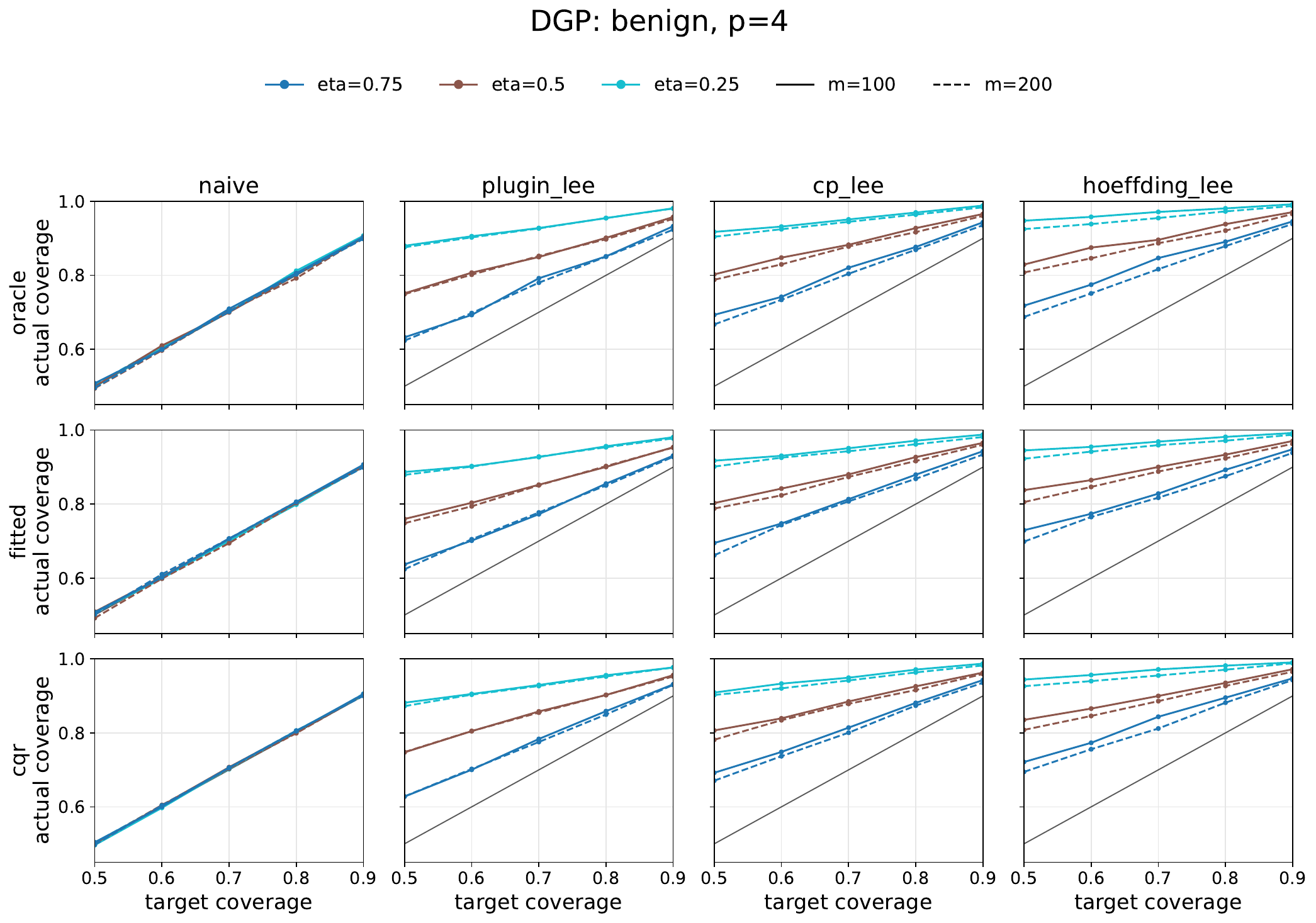}}
\begin{minipage}{0.90\textwidth}
\footnotesize
 \textit{Notes:} Columns report naive conformal prediction, plug-in Lee, CP-Lee, and Hoeffding-Lee; rows report oracle residual, fitted residual, and CQR scores. Naive coverage is close to the diagonal because $P=Q$. The adjusted procedures over-cover because they protect against shifts not realized in this design.
\end{minipage}
\label{fig:dgp-benign}
\end{figure}

\begin{figure}[p]
\centering
\caption{Coverage under conditional-tail selection}
\includegraphics[width=1 \linewidth]{\detokenize{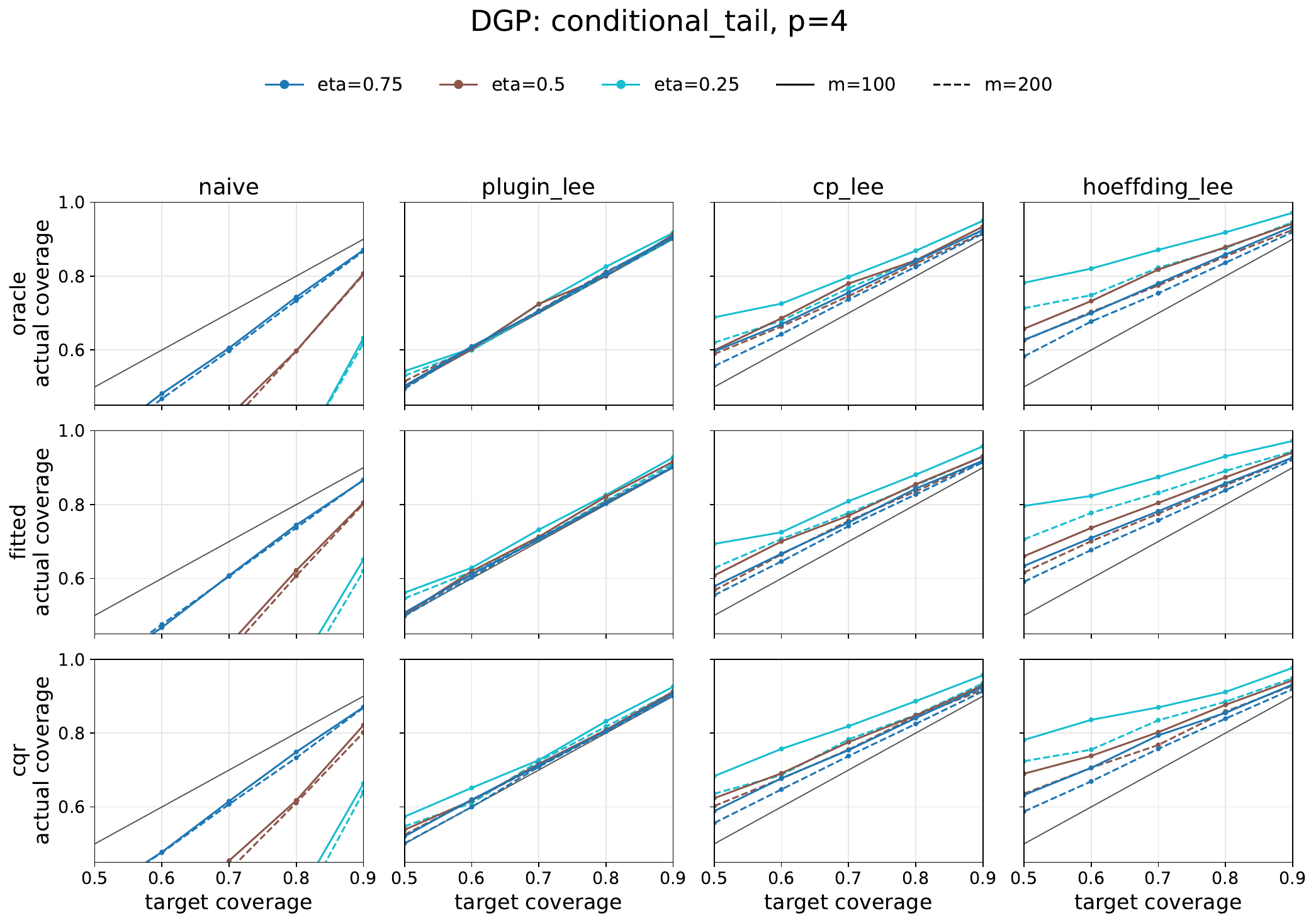}}
\begin{minipage}{0.90\textwidth}
\footnotesize
 \textit{Notes:} Naive conformal prediction under-covers increasingly as $\pi$ falls. Plug-in Lee remains close to the target, whereas CP-Lee and Hoeffding-Lee are conservative.
\end{minipage}
\label{fig:dgp-conditional_tail}
\end{figure}

\begin{figure}[p]
\centering
\caption{Coverage under unconditional-tail selection}
\includegraphics[width=1 \linewidth]{\detokenize{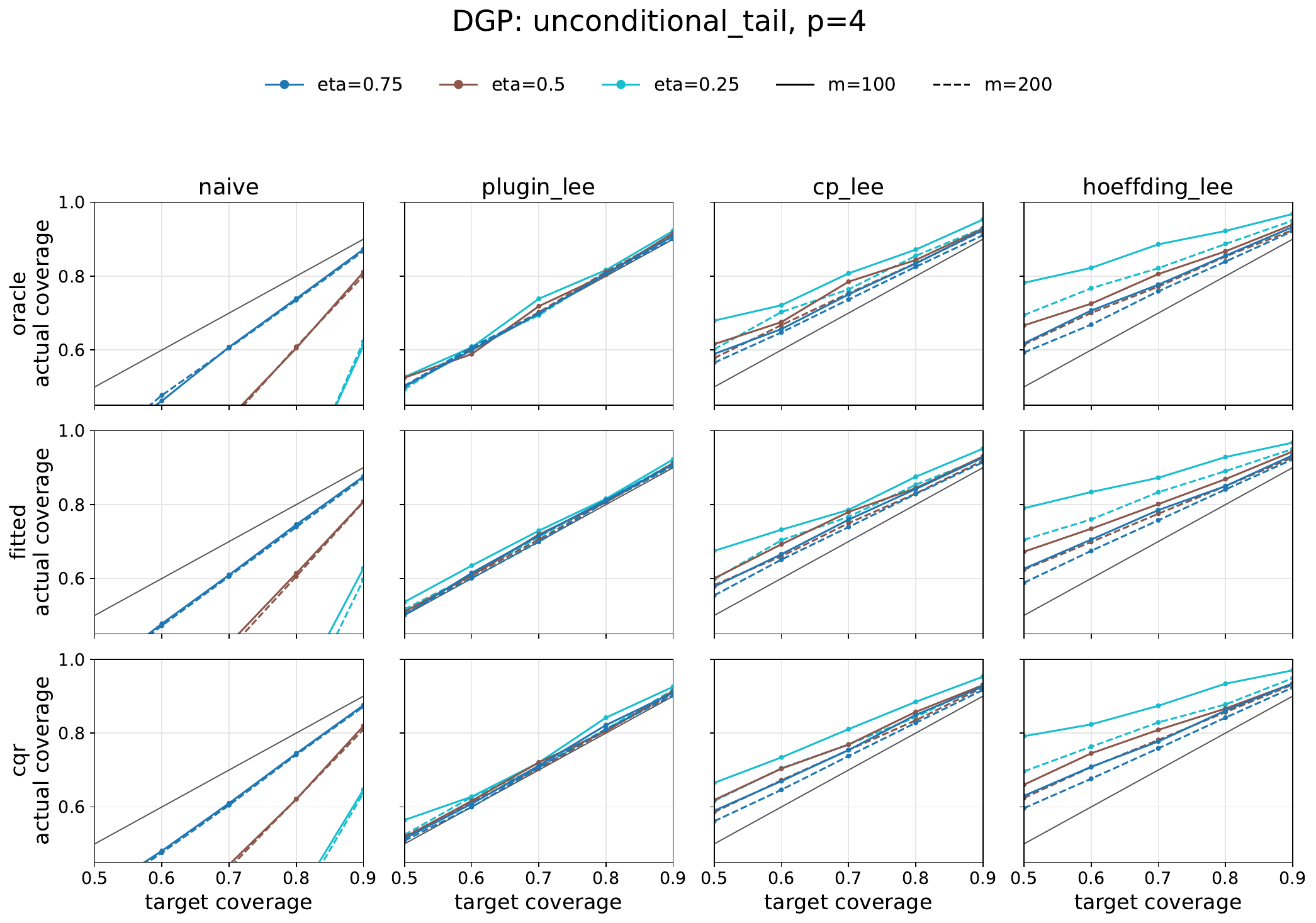}}
\begin{minipage}{0.90\textwidth}
\footnotesize
 \textit{Notes:} The design changes both the residual tail and the covariate distribution. The adjusted procedures remain valid or conservative, whereas naive conformal prediction under-covers.
\end{minipage}
\label{fig:dgp-unconditional_tail}
\end{figure}

\begin{figure}[p]
\centering
\caption{Coverage under smooth heterogeneous selection}
\includegraphics[width=1 \linewidth]{\detokenize{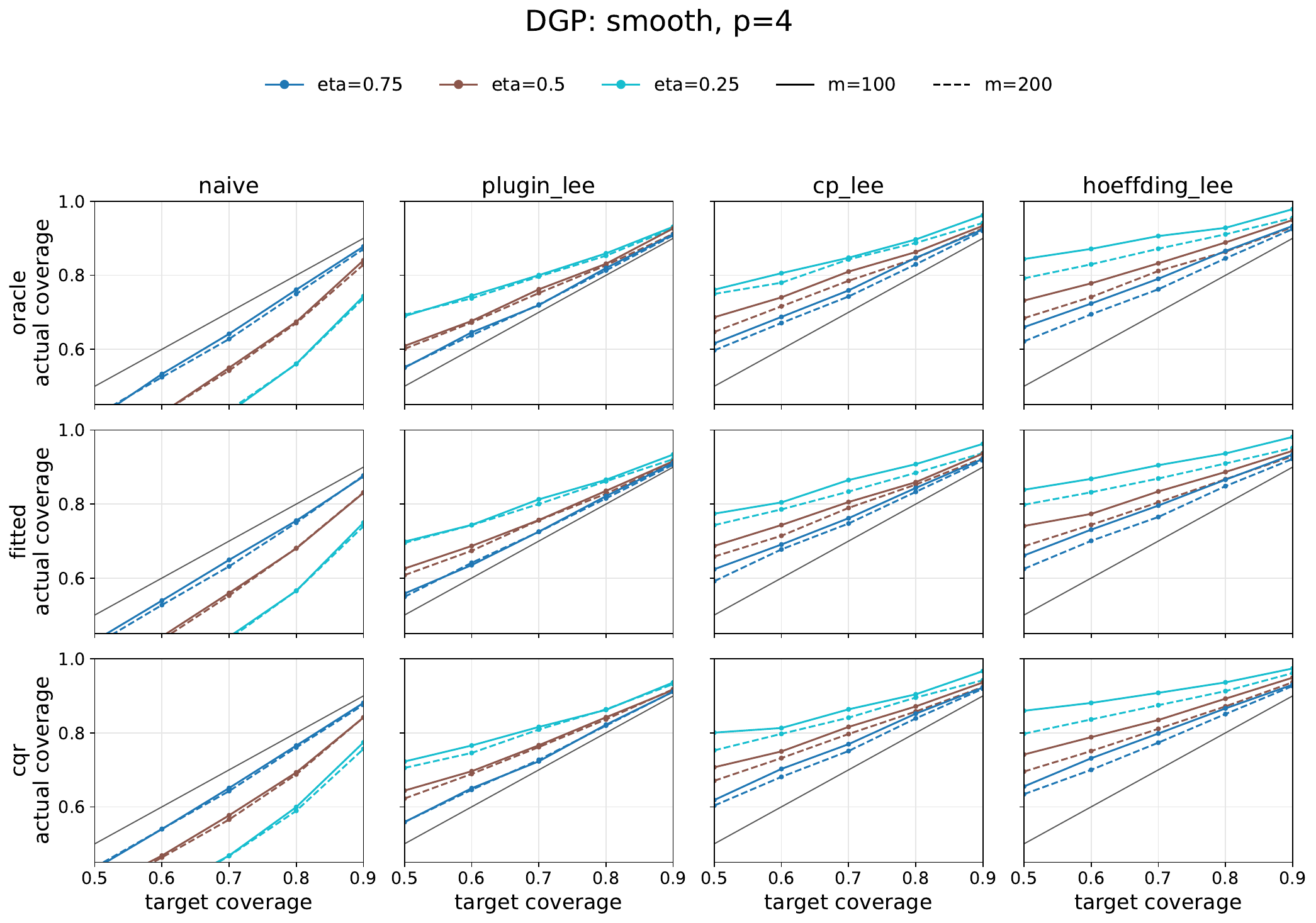}}
\begin{minipage}{0.90\textwidth}
\footnotesize
 \textit{Notes:} Naive conformal prediction under-covers moderately but systematically. Plug-in Lee remains close to the target, and the lower bound procedures are conservative.
\end{minipage}
\label{fig:dgp-smooth}
\end{figure}


\clearpage

\section{Empirical Illustration: The National Job Corps Study}
\label{sec:empirical}

This section examines how the Lee adjustment changes counterfactual prediction sets in the National Job Corps Study. We first describe wage observation and the assignment design. We then compare prediction sets across calibration rules and assess their sensitivity to the score, the sample split, and the outcome definition. 

\subsection{Data, selection, and the predictive target}
\label{sec:empirical_data}

The application uses the public data from the National Job Corps Study considered by \citet{lee2009training}. Treatment denotes assignment to the program group. Thus, the corresponding potential outcome contrast concerns assignment to program access. The analysis begins with the roster of 15,386 study records and excludes four duplicate randomizations. The resulting sample contains 5,977 controls and 9,405 treated units. 

The primary period of interest is week 130 after randomization. Let $H_i$ and $E_i$ denote recorded hours and earnings in that week. We define
\[
 S_i=\mathbf 1\{H_i>0,\ E_i>0\},
 \qquad
 Y_i=\log(E_i/H_i)\quad\text{when }S_i=1.
\]
A record without valid positive hours and earnings remains in the denominator with $S_i=0$. Therefore, selection combines positive employment earnings with the availability of usable wage information. Under monotone selection, the relevant principal stratum consists of units whose wages would be observed under either assignment. It need not coincide with the stratum of units who would be employed under either assignment.

\begin{table}[t]
\centering
\caption{Sample construction and observed wages at week 130}
\label{tab:empirical_sample}
\begin{threeparttable}
\small
\begin{tabular}{lrr}
\toprule
 & Control & Treated \\
\midrule
Roster after duplicate exclusion & 5,977 & 9,405 \\
Baseline record available & 5,514 & 8,809 \\
Employment timeline available & 5,416 & 8,776 \\
Valid positive wage at week 130 & 2,522 & 4,253 \\
Observed selection rate & 0.4220 & 0.4522 \\
\addlinespace
Mean log hourly wage & 1.8807 & 1.9204 \\
Standard deviation of log hourly wage & 0.3812 & 0.3481 \\
Median hourly wage & 6.5000 & 6.5658 \\
Maximum hourly wage & 65.0000 & 200.0000 \\
\bottomrule
\end{tabular}
\begin{tablenotes}[flushleft]
\footnotesize
\item \textit{Notes:} Counts and selection rates use the full roster after excluding duplicates. Wage summaries use only observations with positive recorded hours and earnings in week 130. 
\end{tablenotes}
\end{threeparttable}
\end{table}

Table \ref{tab:empirical_sample} reports the sample counts and observed wages. A valid positive wage is available for 2,522 controls and 4,253 treated units. The corresponding selection rates are 0.4220 and 0.4522, a difference of 3.03 percentage points. Their ratio is 0.9331. Under the sampling and monotonicity assumptions of the paper, this ratio estimates the share of always selected units among treated units with observed wages. 
The observed mean log wage is 1.8807 among selected controls and 1.9204 among treated units with observed wages. The difference of 0.0397 is descriptive because it combines differences in assignment composition, wage observation, and potential wages. Hence, it and does not identify an average treatment effect. Median hourly wages are similar across the two groups, whereas the maximum observed wage is much larger in the treated group. The primary analysis retains these observations and uses log hourly wages.

Figure \ref{fig:empirical_selection} shows that the direction of the observed selection difference changes over time. The unweighted difference is negative in every week from 1 through 52, reaches $-0.1094$ in week 15, and is nonnegative in every week from 131 through 208. 

\begin{figure}[p]
\centering
\caption{Observed selection over time}
\includegraphics[width=0.91\linewidth]{ 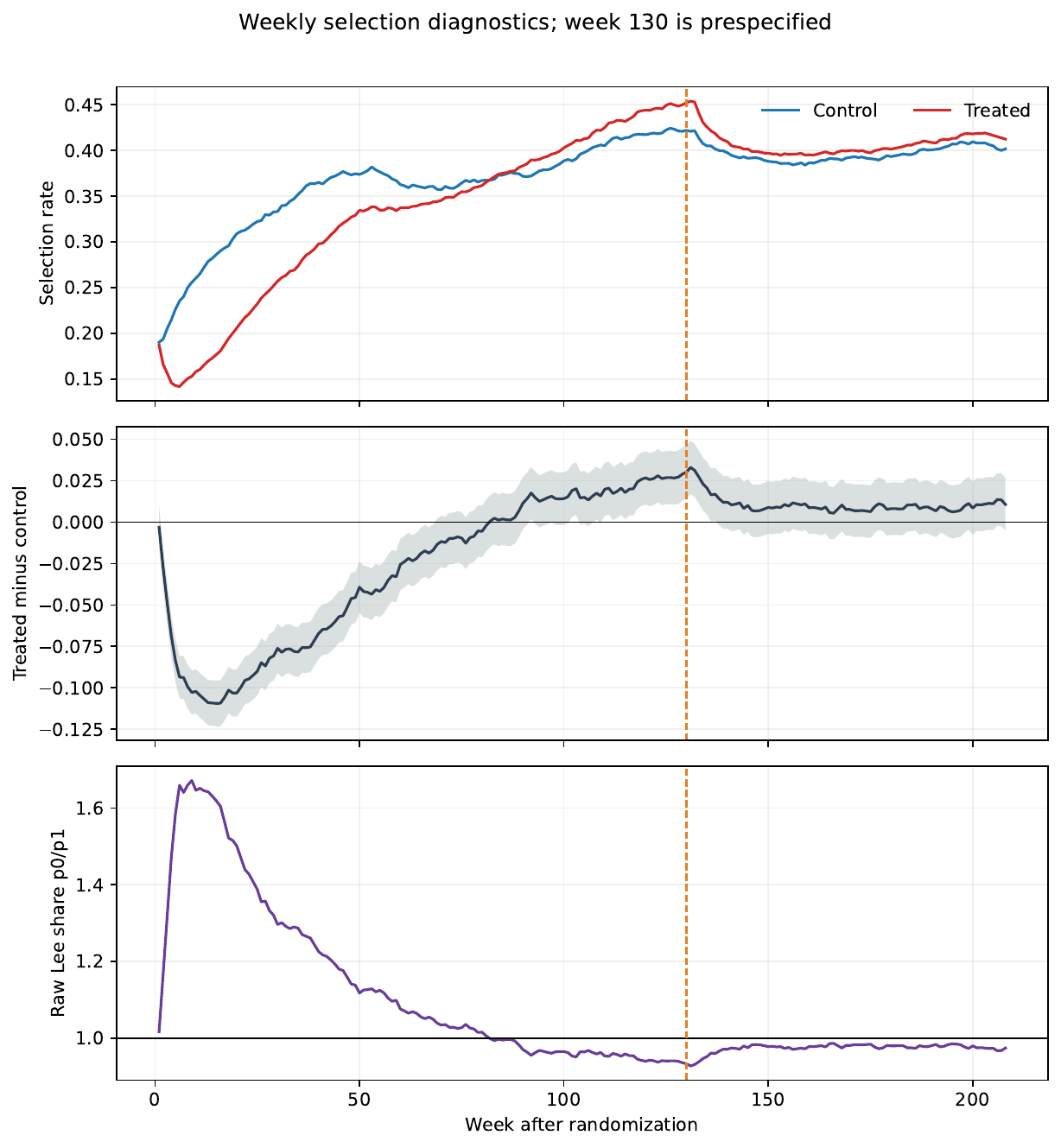}
\begin{minipage}{0.90\textwidth}
\footnotesize
 \textit{Notes:} The panels report unweighted selection rates, their difference, and their raw ratio. The vertical line marks week 130. The shaded region is the pointwise 95\% Wald interval computed from unweighted status counts. 
\end{minipage}
\label{fig:empirical_selection}
\end{figure}

\subsection{Prediction rules and calibration}
\label{sec:empirical_implementation}
The fixed split uses 2,126 treated units with observed wages for training and 2,127 for calibration. The target records are the 2,522 selected controls. Both score classes use the same split and 42 baseline or randomization features. 

The first score is the absolute residual from ridge regression, with penalty 10. The second is the conformalized quantile regression score of \citet{romano2019conformalized}. Its lower and upper prediction functions estimate the 0.05 and 0.95 quantiles using gradient boosting. Each fit uses 180 trees, learning rate 0.035, maximum depth two, and at least 20 observations per terminal node. The two fitted quantiles are ordered before scoring. These tuning choices are fixed in the implementation, and the fitted prediction functions are held constant across calibration methods within each score class.

Let $\alpha_c$ denote the miscoverage parameter used in the conformal step and let $\pi_{\rm used}$ denote its calibration share. The implemented index and threshold are
\[
 k=\left\lceil(m+1)(1-\alpha_c\pi_{\rm used})\right\rceil,
 \qquad
 \widehat t=V_{(k)},
 \qquad m=2{,}127.
\]
Naive conformal prediction sets $\pi_{\rm used}=1$, whereas the plug-in rule uses 0.9331. The remaining rules use the Clopper-Pearson or Hoeffding construction from the appendix, denoted CP and Hoeffding, respectively. Each confidence construction divides its error budget $\delta_\pi$ equally between the two assignment status.

\begin{table}[t]
\centering
\caption{Calibration inputs and score thresholds at week 130}
\label{tab:empirical_calibration}
\begin{threeparttable}
\small
\setlength{\tabcolsep}{4pt}
\begin{tabular}{lrrrrrrr}
\toprule
Rule & $\alpha_c$ & $\delta_\pi$ & $\pi_{\rm used}$ & $1-\alpha_c\pi_{\rm used}$ & $k$ & $\widehat t_R$ & $\widehat t_C$ \\
\midrule
Naive & 0.10 & \text{n.a.} & 1.0000 & 0.9000 & 1,916 & 0.5114 & 0.0585 \\
Plug-in & 0.10 & \text{n.a.} & 0.9331 & 0.9067 & 1,930 & 0.5248 & 0.0709 \\
CP & 0.10 & 0.05 & 0.8855 & 0.9115 & 1,940 & 0.5477 & 0.0889 \\
Hoeffding & 0.10 & 0.05 & 0.8674 & 0.9133 & 1,944 & 0.5522 & 0.0968 \\
CP & 0.09 & 0.01 & 0.8711 & 0.9216 & 1,962 & 0.5887 & 0.1312 \\
Hoeffding & 0.09 & 0.01 & 0.8548 & 0.9231 & 1,965 & 0.5923 & 0.1347 \\
\bottomrule
\end{tabular}
\begin{tablenotes}[flushleft]
\footnotesize
\item  \textit{Notes:} All rules use 2,127 calibration observations. Subscripts $R$ and $C$ denote the ridge residual and conformalized quantile regression scores.  The first CP and Hoeffding pair uses $(\alpha_c,\delta_\pi)=(0.10,0.05)$. The second pair uses $(0.09,0.01)$. These pairs give unconditional lower bounds 0.85 and 0.90, respectively. 
\end{tablenotes}
\end{threeparttable}
\end{table}

Table \ref{tab:empirical_calibration} distinguishes the two error allocations. Using $\alpha_c=0.10$ and $\delta_\pi=0.05$ gives an unconditional lower bound of $1-\alpha_c-\delta_\pi=0.85$ under the assumptions of the paper. The alternative allocation uses $\alpha_c=0.09$ and $\delta_\pi=0.01$, which gives the stated bound 0.90 under those assumptions. In the tables below, the pair following CP or Hoeffding records $(\alpha_c,\delta_\pi)$. 

The plug-in adjustment moves the target score quantile from 0.9000 to 0.9067 and the order statistic from 1,916 to 1,930. The CP rule with allocation $(0.09,0.01)$ uses the smaller share 0.8711 and the larger quantile 0.9216, giving rank 1,962. Figure \ref{fig:empirical_scores} shows where these cutoffs fall in each score distribution. 

\begin{figure}[t]
\centering
\caption{Calibration scores and selected thresholds}
\includegraphics[width=\linewidth]{ 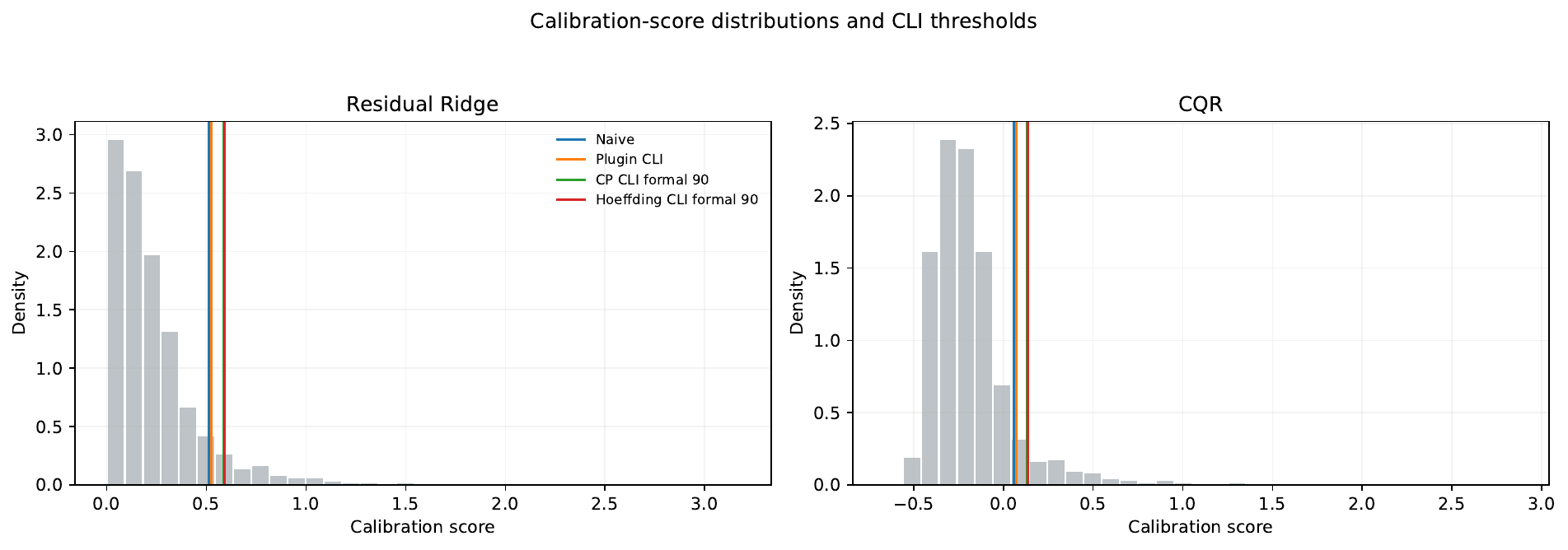}
\begin{minipage}{0.90\textwidth}
\footnotesize
 \textit{Notes:} The panels show the 2,127 calibration scores for ridge regression and conformalized quantile regression. Vertical lines identify the naive, plug-in, CP, and Hoeffding thresholds shown in the archived figure. The last two use $(\alpha_c,\delta_\pi)=(0.09,0.01)$. Their labels refer to an error allocation under the theorem's assumptions, not established empirical coverage.
\end{minipage}
\label{fig:empirical_scores}
\end{figure}

\subsection{Covariates and construction of individual prediction sets}
\label{sec:empirical_covariates}

For the primary wage analysis, $X_i$ contains 42 source variables describing baseline characteristics and assignment timing. Table \ref{tab:empirical_covariates} lists these variables, of which 16 are processed as numeric and 26 as categorical. The same covariates enter the ridge and quantile regression models. They describe demographics, education, household background, employment and welfare histories, health, criminal history, area, and the week of randomization. The baseline hourly wage variable \texttt{HRWAGER} is included as a predictor. 

\begin{table}[t]
\centering
\caption{Covariates used in the primary wage prediction models}
\label{tab:empirical_covariates}
\begin{threeparttable}
\small
\begin{tabular}{@{}p{0.25\linewidth}p{0.71\linewidth}@{}}
\toprule
Component & Source variables \\
\midrule
\raggedright Demographics & \raggedright \texttt{AGE}$^{\mathrm n}$, \texttt{FEMALE}, \texttt{RACE\_ETH}, \texttt{NTV\_LANG} \tabularnewline
\addlinespace[3pt]
\raggedright Education & \raggedright \texttt{HGC}$^{\mathrm n}$, \texttt{HGC\_MOTH}$^{\mathrm n}$, \texttt{HGC\_FATH}$^{\mathrm n}$, \texttt{HS\_D}, \texttt{GED\_D}, \texttt{VOC\_D}, \texttt{OTH\_DEG}, \texttt{INSCHOOL} \tabularnewline
\addlinespace[3pt]
\raggedright Household and childhood background & \raggedright \texttt{NCHLD}$^{\mathrm n}$, \texttt{NUMB\_HH}$^{\mathrm n}$, \texttt{HH14}, \texttt{WELF\_KID}, \texttt{M\_WORK14}, \texttt{F\_WORK14}, \texttt{MARRIAGE}, \texttt{HASCHLD}, \texttt{R\_HEAD}, \texttt{HOUS\_ARR}, \texttt{PAY\_RENT} \tabularnewline
\addlinespace[3pt]
\raggedright Employment, wages, and earnings at baseline & \raggedright \texttt{NUMBJOBS}$^{\mathrm n}$, \texttt{YR\_WORK1}$^{\mathrm n}$, \texttt{EARN\_YR}$^{\mathrm n}$, \texttt{MOSINJOB}$^{\mathrm n}$, \texttt{HRSWK\_JR}$^{\mathrm n}$, \texttt{HRWAGER}$^{\mathrm n}$, \texttt{WKEARNR}$^{\mathrm n}$, \texttt{EVWORKB}, \texttt{REC\_JOB}, \texttt{OCC\_R} \tabularnewline
\addlinespace[3pt]
\raggedright Welfare history & \raggedright \texttt{MOS\_ANYW}$^{\mathrm n}$, \texttt{GOT\_ANYW} \tabularnewline
\addlinespace[3pt]
\raggedright Health & \raggedright \texttt{HEALTH}, \texttt{SICK} \tabularnewline
\addlinespace[3pt]
\raggedright Criminal history & \raggedright \texttt{EVARRST1}, \texttt{N\_GUILTY}$^{\mathrm n}$ \tabularnewline
\addlinespace[3pt]
\raggedright Area and assignment timing & \raggedright \texttt{TYPEAREA}, \texttt{JCMSA}, \texttt{RAND\_WK}$^{\mathrm n}$ \tabularnewline
\bottomrule
\end{tabular}
\begin{tablenotes}[flushleft]
\footnotesize
\item Notes: A superscript $\mathrm n$ denotes a numeric input. The remaining inputs are categorical. Variable names follow the archived implementation. \texttt{FEMALE} and \texttt{TYPEAREA} are taken from \texttt{key\_vars.sas7bdat}, and \texttt{RAND\_WK} is taken from \texttt{rand\_dat.sas7bdat}. The other 39 variables are taken from \texttt{baseline.sas7bdat}. 
\end{tablenotes}
\end{threeparttable}
\end{table}

Preprocessing is estimated using only the treated training sample. Missing numeric entries are replaced by training sample medians. Categorical entries that are not approximately integers are treated as missing, and missing categories are replaced by their most frequent training category. Missingness indicators are added for inputs with missing entries in the training sample. Numeric inputs and their missingness indicators are standardized. Categorical inputs and their missingness indicators are represented by indicator columns. Then the fitted transformation is applied to calibration observations and target controls without refitting. Write the resulting model input as $\widetilde X_i=\widehat T_{\rm train}(X_i)$. 

For the ridge specification, the fitted treated wage predictor is
\[
 \widehat\mu_1(x)
 =\widehat\beta_0+\widehat\beta^{\top}\widehat T_{\rm train}(x).
\]
Calibration uses the absolute residuals from this predictor. Given the threshold $\widehat t_R$ for a calibration rule in Table \ref{tab:empirical_calibration}, the endpoints for target individual $i$ are
\[
 \widehat L_1(X_i)=\widehat\mu_1(X_i)-\widehat t_R,
 \qquad
 \widehat U_1(X_i)=\widehat\mu_1(X_i)+\widehat t_R.
\]
Thus, the prediction rule is evaluated at that individual's covariates.

For the quantile regression specification, let $\widehat q_{1,L}(x)$ and $\widehat q_{1,U}(x)$ denote the ordered fitted functions described in Section \ref{sec:empirical_implementation}. Unlike the ridge intervals, the quantile regression intervals have widths that vary with the covariates. Their underlying quantile levels remain 0.05 and 0.95 for every calibration rule, including those with $\alpha_c=0.09$. The calibration score for an observation $(X_j,Y_j)$ is
\[
 V_{C,j}
 =\max\bigl\{\widehat q_{1,L}(X_j)-Y_j,\,
             Y_j-\widehat q_{1,U}(X_j)\bigr\}.
\]
Using the corresponding threshold $\widehat t_C$ gives
\[
 \widehat L_1(X_i)=\widehat q_{1,L}(X_i)-\widehat t_C,
 \qquad
 \widehat U_1(X_i)=\widehat q_{1,U}(X_i)+\widehat t_C.
\]
Both specifications are fitted on selected treated observations, whose membership in $\mathcal A$ is not observed. 

\begin{samepage}
For either score, let $\widehat C_1(X_i)=[\widehat L_1(X_i),\widehat U_1(X_i)]$ denote the counterfactual interval. Each target individual satisfies $D_i=0$ and $S_i=1$, so $Y_i(0)=Y_i$ is observed. Under monotone selection, this individual belongs to $\mathcal A$. Therefore, the prediction set for $\tau_i=Y_i(1)-Y_i(0)$ is 
\[
 \widehat C_{\tau,i}
 =\widehat C_\tau\bigl(X_i,Y_i(0)\bigr)
 =\bigl[\widehat L_1(X_i)-Y_i(0),\,
        \widehat U_1(X_i)-Y_i(0)\bigr].
\]
\end{samepage}
The observed control wage enters through this translation. Subtracting it shifts both endpoints by the same amount and leaves interval width unchanged.

Within a calibration rule, every ridge interval has width $2\widehat t_R$. Covariates affect its location through $\widehat\mu_1(X_i)$, and the observed control wage determines the final center $\widehat\mu_1(X_i)-Y_i(0)$. Quantile regression intervals instead have width $\widehat q_{1,U}(X_i)-\widehat q_{1,L}(X_i)+2\widehat t_C$, so their locations and widths can both vary with $X_i$. Note that although covariates enter the prediction functions, they do not yield a different always-selected share or calibration threshold for each target individual. Within a score class and calibration rule, $\pi_{\rm used}$ and the threshold are common across all selected controls. 

\subsection{Individual prediction set results}
\label{sec:empirical_results}

Table \ref{tab:empirical_sets} summarizes the individual effect sets constructed in Section \ref{sec:empirical_covariates}. Within each score class, the fitted functions, the target covariates, and the observed control wages are held fixed across calibration rules. Therefore, differences across rules arise only from the calibration thresholds.

\begin{table}[t]
\centering
\caption{Prediction set summaries for the 2,522 selected controls}
\label{tab:empirical_sets}
\begin{threeparttable}
\small
\setlength{\tabcolsep}{4pt}
\begin{tabular}{lrrrrrr}
\toprule
 & \multicolumn{3}{c}{Log wage effect set} & \multicolumn{3}{c}{Percentage of sets} \\
\cmidrule(lr){2-4}\cmidrule(lr){5-7}
Rule & \shortstack{Median\\width} & \shortstack{Median\\lower} & \shortstack{Median\\upper} & \shortstack{Above\\zero} & \shortstack{Below\\zero} & \shortstack{Include\\zero} \\
\midrule
\multicolumn{7}{l}{\textit{A. Ridge residual score}} \\[2pt]
Naive & 1.0229 & -0.4529 & 0.5700 & 5.55 & 4.28 & 90.17 \\
plug-in & 1.0495 & -0.4663 & 0.5833 & 5.35 & 4.16 & 90.48 \\
CP $(0.10,0.05)$ & 1.0954 & -0.4892 & 0.6062 & 5.04 & 4.00 & 90.96 \\
Hoeffding $(0.10,0.05)$ & 1.1043 & -0.4936 & 0.6107 & 5.04 & 3.93 & 91.04 \\
CP $(0.09,0.01)$ & 1.1773 & -0.5301 & 0.6472 & 4.60 & 3.21 & 92.19 \\
Hoeffding $(0.09,0.01)$ & 1.1846 & -0.5338 & 0.6508 & 4.52 & 3.13 & 92.35 \\
\addlinespace[5pt]
\multicolumn{7}{l}{\textit{B. Conformalized quantile regression score}} \\[2pt]
Naive & 0.9799 & -0.3651 & 0.6203 & 6.07 & 4.20 & 89.73 \\
plug-in & 1.0047 & -0.3775 & 0.6327 & 5.79 & 3.85 & 90.36 \\
CP $(0.10,0.05)$ & 1.0407 & -0.3955 & 0.6507 & 5.67 & 3.57 & 90.76 \\
Hoeffding $(0.10,0.05)$ & 1.0565 & -0.4034 & 0.6586 & 5.63 & 3.49 & 90.88 \\
CP $(0.09,0.01)$ & 1.1253 & -0.4378 & 0.6930 & 5.15 & 2.74 & 92.11 \\
Hoeffding $(0.09,0.01)$ & 1.1322 & -0.4412 & 0.6965 & 5.11 & 2.70 & 92.19 \\
\bottomrule
\end{tabular}
\begin{tablenotes}[flushleft]
\footnotesize
\item Notes: The pair following CP or Hoeffding is $(\alpha_c,\delta_\pi)$. Above and below zero require strict exclusion of zero. \end{tablenotes}
\end{threeparttable}
\end{table}

Panel A of Table \ref{tab:empirical_sets} reports the ridge results. The plug-in interval has width 1.0495 log points, compared with 1.0229 for naive conformal prediction. The increase is 2.61\%. The CP rule with allocation $(0.09,0.01)$ increases the width to 1.1773, or 15.10\% above the naive width. Thus, the point estimate of the rate ratio produces a modest adjustment in this split, whereas accounting for uncertainty in that ratio and allocating the error budget produces a larger change.

Furthermore, Panel A reports that most intervals contain zero. Under the plug-in ridge rule, 5.35\% lie wholly above zero, 4.16\% lie wholly below zero, and 90.48\% contain zero. The CP rule with allocation $(0.09,0.01)$ raises the last fraction to 92.19\%. Figure \ref{fig:empirical_endpoints} displays the corresponding endpoint distributions. These results show that the reported sets generally do not strictly distinguish the sign of an individual effect. Note that they do not estimate the proportion of positive or negative ITEs. Also, the fraction of intervals containing zero is not a coverage rate although it is close to 90.

Panel B shows that the qualitative comparison is unchanged with the quantile regression score. Its plug-in median width is 1.0047, compared with 1.0495 for the ridge score, and 90.36\% of its sets contain zero. The corresponding CP median width is 1.1253 and 92.11\% of its sets contain zero. 

\begin{figure}[t]
\centering
\caption{Distributions of the ridge interval endpoints}
\includegraphics[width=\linewidth]{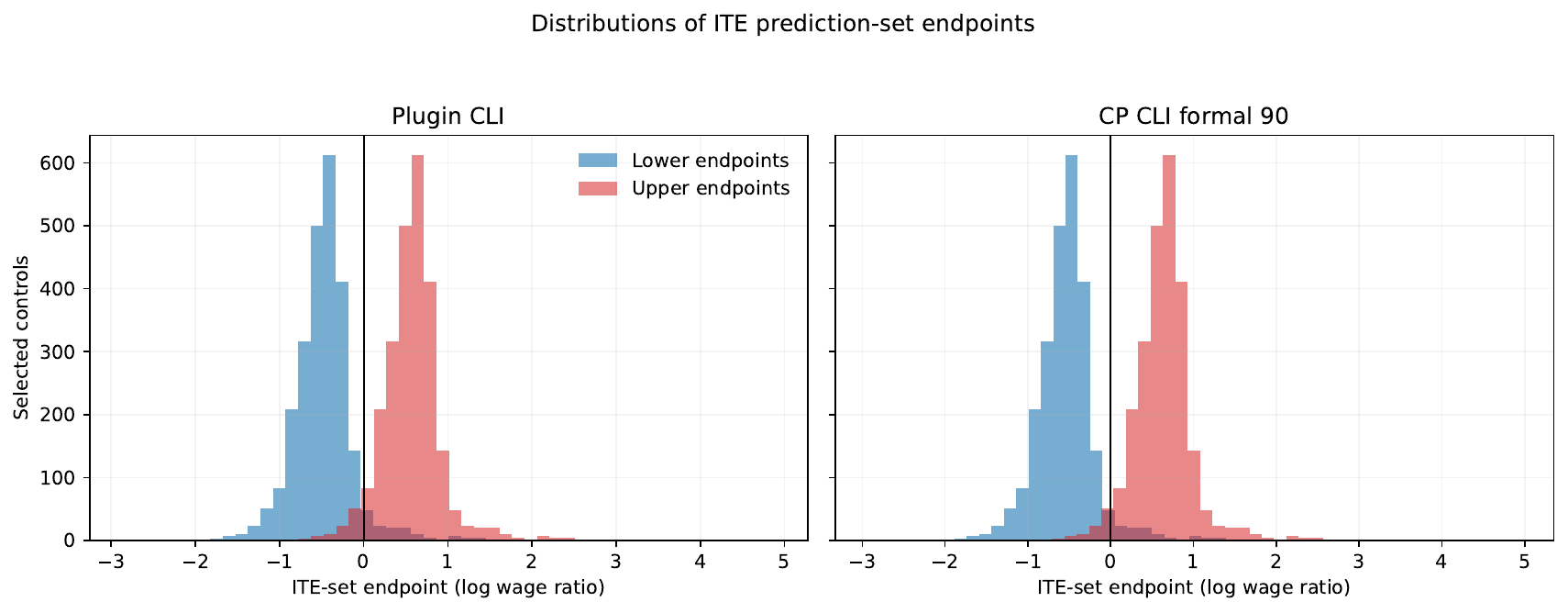}
\begin{minipage}{0.90\textwidth}
\footnotesize
 \textit{Notes:} Each panel summarizes lower and upper endpoints across the selected controls. The left panel uses the plug-in share and the right panel uses CP with $(\alpha_c,\delta_\pi)=(0.09,0.01)$. The histograms describe prediction set endpoints.
\end{minipage}
\label{fig:empirical_endpoints}
\end{figure}

\subsection{Observable diagnostics and sensitivity}
\label{sec:empirical_diagnostics}
Actual counterfactual coverage cannot be observed because treated potential wages are missing for selected control records. Therefore, we examine prediction of baseline outcomes that are observed in both assignment groups. The principal diagnostic uses log baseline hourly wages among records with a positive baseline wage and a valid wage at week 130. The ridge model is trained on eligible treated records and evaluated on eligible control records. Baseline hourly wages and the related earnings, hours, and job duration predictors specified in the code are excluded from this fit to avoid direct or mechanical prediction of the outcome from itself.

\begin{table}[t]
\centering
\caption{Prediction of baseline log hourly wages across 100 splits}
\label{tab:empirical_placebo}
\begin{threeparttable}
\small
\begin{tabular}{lrrr}
\toprule
Rule & Mean coverage & Standard deviation & Mean length \\
\midrule
Naive & 0.8914 & 0.0090 & 0.7871 \\
plug-in & 0.8981 & 0.0090 & 0.8175 \\
CP $(0.10,0.05)$ & 0.9041 & 0.0094 & 0.8462 \\
Hoeffding $(0.10,0.05)$ & 0.9062 & 0.0092 & 0.8552 \\
CP $(0.09,0.01)$ & 0.9165 & 0.0079 & 0.9091 \\
Hoeffding $(0.09,0.01)$ & 0.9185 & 0.0078 & 0.9209 \\
\bottomrule
\end{tabular}
\begin{tablenotes}[flushleft]
\footnotesize
\item  \textit{Notes:} Each split uses 1,373 training records, 1,374 calibration records, and the same 1,626 eligible selected controls. The outcome is an observed baseline log wage. Standard deviations measure variation across partitions of the same data. Each method retains the calibration share computed from the full week 130 roster.
\end{tablenotes}
\end{threeparttable}
\end{table}

Table \ref{tab:empirical_placebo} reports average observed coverage across 100 splits. Naive conformal prediction covers 0.8914 of eligible controlled outcomes on average, compared with 0.8981 for the plug-in rule and 0.9165 for CP with allocation $(0.09,0.01)$. Mean interval lengths are 0.7871, 0.8175, and 0.9091, respectively. Within a split, the fitted model is common across methods and the Lee adjustments use weakly larger thresholds. Thus, the increase in observed coverage reflects nested prediction sets. 

Figure \ref{fig:empirical_placebo} reports the corresponding comparisons for baseline annual earnings, age, schooling grade, and prior convictions. The plug-in mean coverages are 0.9023, 0.9155, 0.9206, and 0.9053, respectively. The annual earnings outcome is $\log(1+\text{earnings})$, whereas the last three outcomes remain in their recorded units. The implementation codes an unrecorded count as zero only when the baseline record is available. These exercises restrict the sample according to baseline outcome availability while retaining the calibration shares from the full roster. 

Table \ref{tab:empirical_placebo} also assesses partition sensitivity and the result is stable in their ordering across alternative splits. Across 100 splits of the week 130 treated sample, mean ridge widths are 0.9849 for naive conformal prediction, 1.0204 for the plug-in rule, and 1.1222 for CP with allocation $(0.09,0.01)$. The plug-in width ranges from 0.9573 to 1.1032, which contains the fixed split value of 1.0495. 

\begin{table}[t]
\centering
\caption{Selection rates and ridge interval widths at alternative endpoints}
\label{tab:empirical_time}
\begin{threeparttable}
\small
\setlength{\tabcolsep}{5pt}
\begin{tabular}{rrrrrrr}
\toprule
 & \multicolumn{3}{c}{Unweighted selection} & \multicolumn{3}{c}{Median interval width} \\
\cmidrule(lr){2-4}\cmidrule(lr){5-7}
Week & $\widehat p_0$ & $\widehat p_1$ & $\widehat\pi$ & Naive & plug-in & CP $(0.09,0.01)$ \\
\midrule
104 & 0.3960 & 0.4106 & 0.9644 & 1.0456 & 1.0576 & 1.1735 \\
130 & 0.4220 & 0.4522 & 0.9331 & 1.0229 & 1.0495 & 1.1773 \\
156 & 0.3843 & 0.3960 & 0.9706 & 1.1076 & 1.1290 & 1.2556 \\
182 & 0.3950 & 0.4030 & 0.9802 & 1.1594 & 1.1771 & 1.2692 \\
208 & 0.4017 & 0.4124 & 0.9740 & 1.0832 & 1.0966 & 1.1973 \\
\bottomrule
\end{tabular}
\begin{tablenotes}[flushleft]
\footnotesize
\item  \textit{Notes:} The full roster supplies the denominator at each endpoint. The source sample, target sample, prediction fit, and calibration fold are reconstructed for that endpoint. Widths are in log hourly wage units. Each endpoint defines its own wage observation event and hence its own candidate always observed stratum.
\end{tablenotes}
\end{threeparttable}
\end{table}

Table \ref{tab:empirical_time} reports the other endpoints fixed in the analysis specification. Since wage observation changes over time, these comparisons do not track a fixed principal stratum. Instead, they assess the numerical behavior of the same construction under different endpoint definitions. The unweighted selection difference is positive at weeks 104, 156, 182, and 208, and the corresponding rate ratios range from 0.9644 to 0.9802. The plug-in ridge widths range from 1.0576 to 1.1771 at these endpoints. 

A smoothed outcome uses the median log hourly wage over weeks 118 through 130 and requires valid wages in at least seven of those weeks. Its selection rates are 0.4204 and 0.4503, rate ratio is 0.9337, and plug-in ridge width is 1.0431. Restricting recorded hours at week 130 to at most 168 removes one selected record from each status and gives a plug-in width of 1.1082. Both exercises reconstruct and refit the prediction procedure. Hence, their width differences combine changes in the observations with changes in the split and fitted rule. 


\begin{figure}[p]
\centering
\caption{Observed baseline prediction coverage across 100 splits}
\includegraphics[width=\linewidth,height=0.76\textheight,keepaspectratio]{ 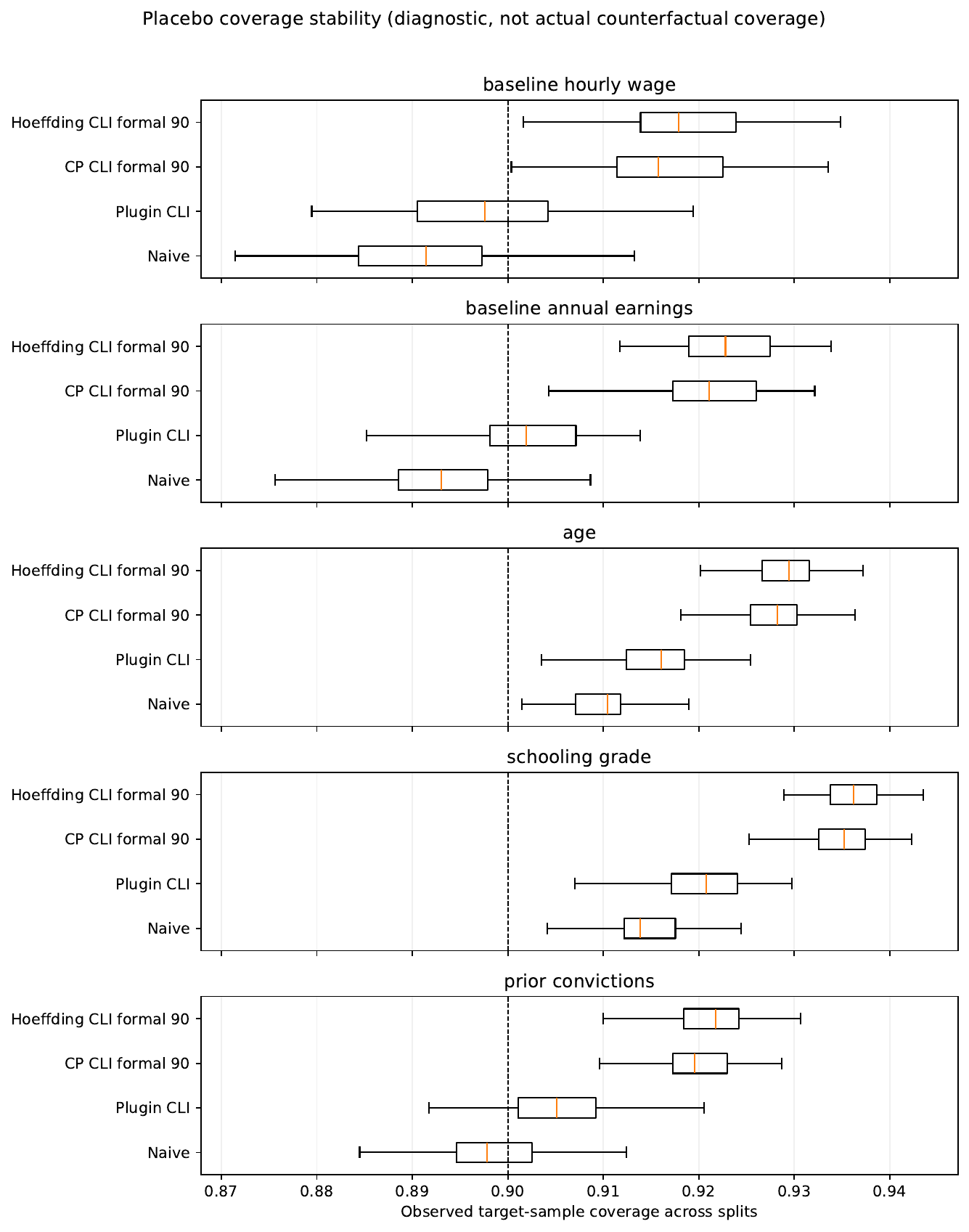}
\begin{minipage}{0.90\textwidth}
\footnotesize
 \textit{Notes:} The panels reproduce the five placebo exercises in the archived analysis. The vertical line marks 0.90 as a reference value. Boxplots summarize variation across sample partitions, with outliers omitted from the display. The plotted coverage concerns observed baseline outcomes and is not actual counterfactual or individual treatment effect coverage. The ``formal 90'' labels denote the $(0.09,0.01)$ allocation.
\end{minipage}
\label{fig:empirical_placebo}
\end{figure}

\clearpage

\section{Conclusion}

This paper develops conformalized Lee inference for counterfactual prediction under monotone sample selection. Random assignment and the selection rates identify the always-selected share $\pi=p_0/p_1$, while the latent identity of always-selected treated units remains unknown. The corresponding target distribution satisfies $Q_0\ll P$ and $dQ_0/dP\leq1/\pi$. Lee ambiguity set based on the likelihood ratio restriction is sharp relative to the reduced-information $(P,p_0,p_1)$.

The likelihood ratio bound changes the ordinary split-conformal cutoff from the $(1-\alpha)$ score quantile to the $(1-\alpha\pi)$ quantile. The resulting set has finite-sample marginal coverage over the Lee ambiguity set, and translation by an observed control outcome gives marginal ITE coverage for an independent selected-control draw. The same cutoff is minimax optimal over the reduced-information region. The simulations show that naive conformal prediction can under-cover under selection-induced shifts, whereas the adjusted procedures restore coverage. Lower confidence bounds for $\pi$ provide conservative coverage at the cost of wider and occasionally uninformative sets. 

The empirical illustration using the National Job Corps Study data illustrates how conformalized Lee inference combines covariate based wage prediction with a selection adjustment. We use ridge and quantile regression to predict treated log wages from baseline characteristics and assignment timing. For each selected control, we evaluate the calibrated treated log-wage interval at that individual's covariates and subtract the observed control log wage from both endpoints. The resulting prediction set concerns the individual wage effect of assignment to program access and requires prediction only of the missing treated outcome. Both nonconformity scores produce prediction sets that span both wage gains and wage losses for most selected controls.

Two extensions follow directly from the analysis. First, the full observed distribution identifies the always-selected covariate marginal and the conditional restriction
\[
Q_x\ll P_x,
\qquad
0\leq\frac{dQ_x}{dP_x}(y)\leq\frac{1}{\pi(x)}.
\]
A covariate-adaptive procedure could use this information to shorten prediction sets. Second, sensitivity analysis for violations of monotonicity could replace exact nesting with a controlled amount of bidirectional selection. Both extensions should retain the distinction between finite-sample marginal coverage and individualized conditional coverage.

\newpage
\section{Appendix}
\label{sec:appendix}

\subsection{Consistency of the plug-in estimator for the share of always-selected units}

Proposition \ref{prop:threshold-consistency} connects the sample cutoff with the population cutoff given that $m_n \overset{p}{\longrightarrow} \infty
$ and $\widehat\pi_n \overset{p}{\longrightarrow} \pi$. For the count-based estimator and confidence bounds, we impose the following condition.

\begin{cond}[Conditional armwise binomial sampling]
\label{cond:selection-counts}
For every $n$ and $d\in\{0,1\}$, conditional on the assignment vector $\mathbf D_n = (D_1,\ldots,D_n)$,
\[
\sum_{i=1}^n\mathbbm 1\{D_i=d\}S_i(d)
\sim
\operatorname{Binomial}\!\left(\sum_{i=1}^n\mathbbm 1\{D_i=d\},p_d\right)
\quad\text{almost surely}.
\]
\end{cond}

Condition \ref{cond:selection-counts} is the exact count distribution requirement used by the plug-in estimator and the one-sided confidence bounds. It is implied by independent and identically distributed latent sampling together with independence of the assignment vector and the potential outcome vector. 

Proposition \ref{prop:plugin-pi-consistency} establishes the required conditions for $m_n \overset{p}{\longrightarrow} \infty
$ and $\widehat\pi_n \overset{p}{\longrightarrow} \pi$.

\begin{prop}[Consistency of the $\widehat \pi$ plug-in Lee share]
\label{prop:plugin-pi-consistency}
For $d\in\{0,1\}$, define
\[
N_{d,n}=\sum_{i=1}^n\mathbbm 1\{D_i=d\},
\qquad
M_{d,n}=\sum_{i=1}^n\mathbbm 1\{D_i=d,S_i=1\}.
\]
Suppose Condition \ref{cond:selection-counts} holds and
\[
\frac{N_{d,n}}{n}\overset{p}{\longrightarrow}\rho_d\in(0,1),
\qquad d\in\{0,1\}.
\]
Define 
\[
\widehat p_{d,n}
=
\begin{cases}
M_{d,n}/N_{d,n},
&N_{d,n} > 0, \\[3pt]
0, &N_{d,n} =0,
\end{cases}
\]
and
\[
\widehat\pi_n
=
\begin{cases}
\min\{1,\widehat p_{0,n}/\widehat p_{1,n}\},
&\widehat p_{1,n}>0,\\[3pt]
0,&\widehat p_{1,n}=0.
\end{cases}
\]
Then
\[
\widehat p_{d,n}\overset{p}{\longrightarrow}p_d,
\qquad d\in\{0,1\},
\qquad\text{and}\qquad
\widehat\pi_n\overset{p}{\longrightarrow}\pi.
\]
Consequently, Corollary \ref{coro:plugin-asymptotic-coverage} applies to the
plug-in procedure.

Furthermore, if the random calibration split also satisfies
\[
\frac{m_n}{M_{1,n}}\overset{p}{\longrightarrow}\lambda\in(0,1),
\]
where $M_{1,n}=|I_1|$ is the number of treated-selected observations, then
\[
\frac{m_n}{n}\overset{p}{\longrightarrow}\lambda\rho_1p_1>0.
\]
Hence $m_n\overset{p}{\longrightarrow}\infty$.
\end{prop}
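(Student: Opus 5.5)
The argument has two parts: first the consistency of $\widehat\pi_n$ by a conditional law of large numbers plus the continuous mapping theorem, and then the growth of $m_n$ from a product of ratios.

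\textbf{Consistency of $\widehat p_{d,n}$ and $\widehat\pi_n$.} Fix $d\in\{0,1\}$. On $\{N_{d,n}>0\}$, Condition \ref{cond:selection-counts} gives
\[
M_{d,n}\mid\mathbf D_n\sim\operatorname{Binomial}(N_{d,n},p_d).
\]
Chebyshev's inequality applied conditionally then yields
\[
\Pr\{|\widehat p_{d,n}-p_d|>\varepsilon\mid\mathbf D_n\}\leq\frac{p_d(1-p_d)}{N_{d,n}\varepsilon^2}\wedge 1 .
\]
Taking expectations, I split on the event $\{N_{d,n}\geq n\rho_d/2\}$. Its probability tends to one because $N_{d,n}/n\overset{p}{\to}\rho_d>0$. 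On this event the conditional bound is $O(1/n)$. On the complement the bound is at most $1$, so it contributes $o(1)$. Bounded convergence (or simply this split) then gives $\widehat p_{d,n}\overset{p}{\to}p_d$. The event $\{N_{d,n}=0\}$ has vanishing probability, so the convention $\widehat p_{d,n}=0$ there is harmless.

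For $\widehat\pi_n$, consider the map $g(a,b)=\min\{1,a/b\}$ for $b>0$, extended by $g=0$ when $b=0$. This map is continuous at $(p_0,p_1)$ because $p_1\geq p_0>0$ by Assumptions \ref{ass:monosel} and \ref{ass:positivity}. The continuous mapping theorem gives $\widehat\pi_n\overset{p}{\to}\min\{1,p_0/p_1\}=\pi$, using $\pi\leq1$. The value $0$ on $\{\widehat p_{1,n}=0\}$ only matters on an event of vanishing probability. Since $\widehat\pi_n$ is a function of $\{(D_i,S_i)\}$, it is $\mathcal H_n$-measurable. Corollary \ref{coro:plugin-asymptotic-coverage} therefore applies directly.

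\textbf{Growth of $m_n$.} I write
\[
\frac{m_n}{n}=\frac{m_n}{M_{1,n}}\cdot\frac{M_{1,n}}{N_{1,n}}\cdot\frac{N_{1,n}}{n}.
\]
The three factors converge in probability to $\lambda$, $p_1$ and $\rho_1$, respectively; the middle one is $\widehat p_{1,n}$ on $\{N_{1,n}>0\}$. The products of the limits is strictly positive. Hence $m_n/n\overset{p}{\to}\lambda\rho_1p_1>0$. It follows that $\Pr(m_n\leq K)\leq\Pr\{m_n/n<\lambda\rho_1p_1/2\}\to0$ for every fixed $K$, because $n\lambda\rho_1p_1/2>K$ for large $n$. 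This gives $m_n\overset{p}{\to}\infty$.

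\textbf{Main obstacle.} The only delicate step is the first one: passing from conditional binomial concentration with a random number of trials $N_{d,n}$ to unconditional convergence in probability. I would handle it with the truncation on $\{N_{d,n}\geq n\rho_d/2\}$ described above, which replaces the random trial count by a deterministic lower bound of order $n$. The degenerate-denominator conventions need care but only live on events of vanishing probability.
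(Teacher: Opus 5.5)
Your proposal is correct and takes essentially the same route as the paper: conditional binomial variance with conditional Chebyshev, then the continuous mapping theorem for $\widehat\pi_n$ using $p_1\geq p_0>0$, then the product decomposition $m_n/n=(m_n/M_{1,n})\,\widehat p_{1,n}\,(N_{1,n}/n)$. The only difference is immaterial. The paper truncates at a fixed level $N_{d,n}\geq L$ and lets $L\to\infty$ after $n\to\infty$, while you truncate at $N_{d,n}\geq n\rho_d/2$.
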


\newpage
\subsection{Practical lower bound Constructions for the share of always-takers}

Theorem \ref{thm1:coverage} preserves conditional $1-\alpha$ coverage whenever the calibration value satisfies $\widehat\pi\leq\pi$. This subsection gives two one-sided confidence bounds that can be used as $\widehat\pi$. If either treatment status is empty, both constructions set $\widehat\pi=0$, which gives $k=m+1$ and the uninformative prediction set. Hence, the formulas below assume $N_0,N_1\geq1$.

\paragraph{Clopper--Pearson construction.}
Define
\[
N_d=\sum_{i=1}^n\mathbf 1\{D_i=d\},
\qquad
M_d=\sum_{i=1}^n\mathbf 1\{D_i=d,S_i=1\},
\qquad d\in\{0,1\}.
\]

Under Condition \ref{cond:selection-counts}, conditional on the assignment vector, and therefore conditional on $N_d$, $M_d$ is binomial with size $N_d$ and success probability $p_d$. Fix $\delta_\pi\in(0,1)$ and choose $\delta_0,\delta_1>0$ such that $\delta_0+\delta_1=\delta_\pi$. A one-sided Clopper--Pearson lower confidence bound for $p_0$ is

\[
p_{0,L}^{\rm CP}
=
\begin{cases}
0, & M_0=0,\\[4pt]
B^{-1}(\delta_0;M_0,N_0-M_0+1), & M_0>0,
\end{cases}
\]
where $B^{-1}(q;a,b)$ is the $q$ quantile of a $\operatorname{Beta}(a,b)$ distribution. A one-sided upper confidence bound for $p_1$ is
\[
p_{1,U}^{\rm CP}
=
\begin{cases}
1, & M_1=N_1,\\[4pt]
B^{-1}(1-\delta_1;M_1+1,N_1-M_1), & M_1<N_1.
\end{cases}
\]
Set
\[
\widehat\pi_{\rm CP}
=
\min\left\{1,\frac{p_{0,L}^{\rm CP}}{p_{1,U}^{\rm CP}}\right\}.
\]
The simultaneous event
\[
p_{0,L}^{\rm CP}\leq p_0,
\qquad
p_1\leq p_{1,U}^{\rm CP}
\]
has probability at least $1-\delta_\pi$. On this event, $\widehat\pi_{\rm CP}\leq p_0/p_1=\pi$. Therefore, using $\widehat\pi=\widehat\pi_{\rm CP}$ in Algorithm 1 gives conditional coverage of at least $1-\alpha$ on an event with probability at least $1-\delta_\pi$. Averaging over this event gives the unconditional guarantee
\[
\Pr\{Y_{m+1}\in\widehat C_1(X_{m+1})\}
\geq
1-\alpha-\delta_\pi.
\]
To target unconditional coverage $1-\alpha$, the conformal step may instead use miscoverage $\alpha-\delta_\pi$, provided $\delta_\pi<\alpha$.

\paragraph{Hoeffding construction.}
A closed-form alternative follows from Hoeffding's inequality. For $d\in\{0,1\}$, define
\[
\widehat p_d=\frac{M_d}{N_d},
\qquad
\varepsilon_d
=
\sqrt{\frac{\log(1/\delta_d)}{2N_d}}.
\]
Set
\[
p_{0,L}^{\rm H}
=\max\{0,\widehat p_0-\varepsilon_0\},
\qquad
p_{1,U}^{\rm H}
=\min\{1,\widehat p_1+\varepsilon_1\},
\]
and
\[
\widehat\pi_{\rm H}
=
\min\left\{1,\frac{p_{0,L}^{\rm H}}{p_{1,U}^{\rm H}}\right\}.
\]
The union bound gives
\[
\Pr(\widehat\pi_{\rm H}\leq\pi)\geq1-\delta_\pi.
\]
Thus, the same conditional and unconditional coverage statements apply when Algorithm 1 uses $\widehat\pi_{\rm H}$. The Hoeffding bound is easier to compute but is generally smaller than the Clopper--Pearson bound in moderate samples. 

\newpage
\subsection{Full observed distribution Identification}

Propositions \ref{prop:ident} and \ref{prop:sharp} use only the reduced observables $(P,p_0,p_1)$. The full observed distribution further identifies the covariate distribution, covariate specific selection probabilities, and outcome distributions within the two selected treatment status. These additional objects reduce the identification region for the always-selected treated distribution.

We maintain Assumptions \ref{ass:random}--\ref{ass:positivity} and the standard Borel conditions on $\mathcal X$ and $\mathcal Y$. Let $F_X=\mathcal F(X)$ be the marginal covariate distribution. For every unit $i$, unit-level random assignment implied by Assumption \ref{ass:random} identifies
\[
s_d(x)
=\Pr\{S_i(d)=1\mid X_i=x\}
=\Pr(S_i=1\mid D_i=d,X_i=x),
\qquad d\in\{0,1\}.
\]
Because the potential outcome vectors have a common distribution, the left side is independent of $i$, and we suppress the unit index.

Assumption \ref{ass:monosel} implies $0\leq s_0(x)\leq s_1(x)\leq1$ for $F_X$-almost every $x$, and
\[
p_d=\int s_d(x)\,F_X(dx), \qquad d\in\{0,1\}.
\]
On $\{s_1(x)>0\}$, define the conditional share of always-selected units among treated-selected units with covariates $X=x$:
\[
\pi(x)=\frac{s_0(x)}{s_1(x)}
=\Pr\{\mathcal A\mid S(1)=1,X=x\},
\]
where $\pi(x)=0$ when $s_1(x)=0$.

The three principal strata are
\[
\mathcal A=\{S(0)=S(1)=1\},
\qquad
\mathcal M=\{S(0)=0,S(1)=1\},
\qquad
\mathcal N=\{S(0)=S(1)=0\},
\]
where $\mathcal A$ is the always-selected stratum, $\mathcal M$ is the marginal-in stratum, and $\mathcal N$ is the never-selected stratum. Selected controls belong to $\mathcal A$. Hence, their covariate distribution identifies the target covariate marginal:
\[
H_X
=\mathcal F(X\mid D=0,S=1)
=\mathcal F(X\mid\mathcal A),
\qquad
H_X(dx)=\frac{s_0(x)}{p_0}F_X(dx).
\]
The treated-selected covariate distribution is
\[
P_X
=\mathcal F(X\mid D=1,S=1),
\qquad
P_X(dx)=\frac{s_1(x)}{p_1}F_X(dx).
\]
Thus, $H_X\ll P_X$ and
\[
\frac{dH_X}{dP_X}(x)=\frac{\pi(x)}{\pi}
\quad P_X\text{-almost surely}.
\]

Because $\mathcal X$ and $\mathcal Y$ are standard Borel spaces, regular conditional distributions exist. Let
\[
P_x
=\mathcal F\{Y(1)\mid S(1)=1,X=x\}
=\mathcal F(Y\mid D=1,S=1,X=x)
\]
and
\[
G_x
=\mathcal F\{Y(0)\mid\mathcal A,X=x\}
=\mathcal F(Y\mid D=0,S=1,X=x).
\]
be the conditional treated-selected distribution of $Y(1)$ and the conditional distribution of $Y(0)$ among selected controls, respectively. Both distributions are identified from the full observed distribution. The unidentified conditional target distribution is
\[
Q_x=\mathcal F\{Y(1)\mid\mathcal A,X=x\}.
\]
For $H_X$-almost every $x$, $s_0(x)>0$ and therefore $\pi(x)>0$. Conditional on such an $x$, treated-selected units are a mixture of always-selected and marginal-in units. Hence,
\[
P_x=\pi(x)Q_x+\{1-\pi(x)\}R_x
\]
for some residual distribution 
\[
R_x = \mathcal F (Y(1) \mid \mathcal M, X=x),
\]
which implies the conditional likelihood ratio restriction
\[
Q_x\ll P_x,
\qquad
0\leq\frac{dQ_x}{dP_x}(y)\leq\frac{1}{\pi(x)}
\quad P_x\text{-almost surely}.
\]

For a candidate target distribution, write
\[
    Q(dx,dy) = H_X(dx) Q_x(dy),
\]
and define the full observed distribution identification class
\[
\mathcal Q_{\rm full}
=\Bigl\{H_X(dx)Q_x(dy):\;Q_x\ll P_x, 
0\leq\frac{dQ_x}{dP_x}(y)\leq\frac{1}{\pi(x)} \quad P_x\text{-a.s. for }H_X\text{-a.e. }x\Bigr\}.
\]
Proposition \ref{prop:full-equivalence} shows that the $\mathcal Q_{\rm full}$ can be written as the Lee ambiguity set based on the reduced observables intersected with the identified covariate marginal restriction.

\begin{prop}[Equivalent joint and conditional representations]
\label{prop:full-equivalence}
Let $P(dx,dy)=P_X(dx)P_x(dy)$. Then
\[
\mathcal Q_{\rm full}
=\{Q\in\mathcal Q(P,\pi):Q_X=H_X\}.
\]
If $Q(dx,dy)=H_X(dx)Q_x(dy)\in\mathcal Q_{\rm full}$, choose a jointly measurable version $q_x=dQ_x/dP_x$ on the set $\{dH_X/dP_X>0\}$ and define the product below as zero on the set $\{dH_X/dP_X=0\}$. Then a version of the joint density is
\[
\frac{dQ}{dP}(x,y)=\frac{\pi(x)}\pi q_x(y)
\quad P\text{-almost surely}.
\]
Conversely, if $Q\in\mathcal Q(P,\pi)$ and $Q_X=H_X$, then for $H_X$-almost every $x$,
\[
\frac{dQ_x}{dP_x}(y)
=\frac{\pi}{\pi(x)}\frac{dQ}{dP}(x,y)
\quad P_x\text{-almost surely}.
\]
\end{prop}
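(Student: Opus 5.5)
The plan is a disintegration argument. Every joint measure with first marginal $H_X$ factors as $H_X(dx)Q_x(dy)$, and $P$ factors as $P_X(dx)P_x(dy)$. The joint density should then factor as the product of the covariate density $dH_X/dP_X=\pi(x)/\pi$ and the conditional density $q_x=dQ_x/dP_x$. We prove the two inclusions separately. The density formulas come out of each direction along the way.

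\emph{Inclusion $\mathcal Q_{\rm full}\subseteq\{Q\in\mathcal Q(P,\pi):Q_X=H_X\}$.} Take $Q=H_X(dx)Q_x(dy)\in\mathcal Q_{\rm full}$. Clearly $Q_X=H_X$. We need a jointly measurable version $(x,y)\mapsto q_x(y)$ of the conditional densities. Since $\mathcal Y$ is standard Borel, we obtain one by applying the Radon--Nikodym theorem to the joint measures $H_X\otimes Q_x$ and $H_X\otimes P_x$ on $\mathcal Z$. This gives a single $\mathcal B$-measurable density $g$. By Fubini, $g(x,\cdot)$ is a version of $dQ_x/dP_x$ for $H_X$-a.e.\ $x$. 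On the set where $\pi(x)>0$, we may then truncate $g$ to $[0,1/\pi(x)]$ without changing it a.e.

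Set $f(x,y)=\{\pi(x)/\pi\}\,q_x(y)$, with the convention $f=0$ where $dH_X/dP_X=0$. For a measurable rectangle $B\times C$, Fubini gives
\[
\int_{B\times C}f\,dP=\int_B\frac{\pi(x)}{\pi}\int_C q_x\,dP_x\,P_X(dx)=\int_B Q_x(C)\,H_X(dx)=Q(B\times C).
\]
A $\pi$-$\lambda$ argument extends this identity from rectangles to all of $\mathcal B$, so $Q\ll P$ with $dQ/dP=f$. For the bound, $q_x\le 1/\pi(x)$ gives $f\le 1/\pi$ $P$-a.s. The set where $\pi(x)=0$ is $H_X$-null, and there $f=0$. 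Hence $Q\in\mathcal Q(P,\pi)$.

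\emph{Converse inclusion.} Let $Q\in\mathcal Q(P,\pi)$ with $Q_X=H_X$, and put $h=dQ/dP\le 1/\pi$. Integrating $h$ over $y$ yields the covariate density:
\[
\int h(x,y)\,P_x(dy)=\frac{dQ_X}{dP_X}(x)=\frac{\pi(x)}{\pi}\quad P_X\text{-a.s.}
\]
Now define, for $H_X$-a.e.\ $x$,
\[
q_x(y)=\frac{\pi}{\pi(x)}\,h(x,y).
\]
Then $q_x$ is a probability density with respect to $P_x$. By the rectangle computation above, the measure $H_X(dx)\,q_x(y)P_x(dy)$ agrees with $Q$. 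Uniqueness of disintegration then gives $Q_x=q_x\,P_x$ for $H_X$-a.e.\ $x$, which is the stated formula.

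\emph{Main obstacle.} The remaining requirement is the bound $q_x\le 1/\pi(x)$, and this is the delicate step. From $h\le 1/\pi$ we get only $q_x\le 1/\pi(x)$, which is exactly what is needed. The rescaling constants happen to cancel, but this must be checked carefully on null sets where $\pi(x)$ is small. The measurability and version issues in the first step also need care and will be handled through the joint Radon--Nikodym construction described there.
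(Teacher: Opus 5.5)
Your proposal is correct and follows the paper's route. In the forward direction you factor $dQ/dP$ as $(dH_X/dP_X)(x)\,q_x(y)=\{\pi(x)/\pi\}\,q_x(y)$, and in the converse you integrate $dQ/dP$ against $P_x$ to recover $\pi(x)/\pi$, then normalize through disintegration. Your rectangle and $\pi$-$\lambda$ verification and your joint Radon--Nikodym construction of a measurable version spell out what the paper compresses into ``the chain rule for Radon--Nikodym derivatives and disintegrations,'' and the final bound $q_x\le 1/\pi(x)$ follows immediately from $h\le 1/\pi$ rather than being a delicate step.
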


Therefore, the full observed distribution separates an identified covariate shift from $P_X$ to $H_X$ and a partially identified conditional outcome shift from $P_x$ to $Q_x$. In consequence, $\mathcal Q_{\rm full}$ is no larger than the reduced-information class $\mathcal Q(P,\pi)$ and can support sharper prediction rules.

Propositions \ref{prop:full-likelihood} and \ref{prop:full-sharpness} are the full observed distribution analogue of Propositions \ref{prop:ident} and \ref{prop:sharp}. Together with Proposition \ref{prop:full-likelihood}, the result in Proposition  \ref{prop:full-sharpness} implies that $\mathcal Q_{\rm full}$ is the sharp identification region given the full observed distribution.

\begin{prop}[Full observed distribution likelihood ratio domination]
\label{prop:full-likelihood}
Suppose Assumptions \ref{ass:random}--\ref{ass:positivity} hold. For any data-generating process matching the full observed distribution, let
\[
Q_0=\mathcal F\{X,Y(1)\mid\mathcal A\}.
\]
Then $Q_0\in\mathcal Q_{\rm full}$. Equivalently, if
\[
Q_0(dx,dy)=H_X(dx)Q_{0,x}(dy),
\]
then, for $H_X$-almost every $x$,
\[
Q_{0,x}\ll P_x,
\qquad
0\leq\frac{dQ_{0,x}}{dP_x}(y)\leq\frac{1}{\pi(x)}
\quad P_x\text{-almost surely}.
\]
If $\pi(x)<1$, there exists a probability distribution $R_{0,x}$ such that
\[
P_x=\pi(x)Q_{0,x}+\{1-\pi(x)\}R_{0,x}.
\]
If $\pi(x)=1$, then $Q_{0,x}=P_x$.
\end{prop}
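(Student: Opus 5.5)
The plan is to mirror the proof of Proposition \ref{prop:ident}, carried out conditionally on $X=x$, and then to assemble the conditional statements into membership in $\mathcal Q_{\rm full}$.

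\emph{Step 1: the covariate marginal.} Because $\mathcal A=\{S(0)=1\}$ under Assumption \ref{ass:monosel}, we have
\[
Q_{0,X}(dx)=\frac{s_0(x)}{p_0}F_X(dx)=H_X(dx).
\]
Since $\mathcal Z$ is standard Borel, $Q_0$ disintegrates as $Q_0(dx,dy)=H_X(dx)Q_{0,x}(dy)$, with $Q_{0,x}=\mathcal F\{Y(1)\mid\mathcal A,X=x\}$.

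\emph{Step 2: the conditional mixture identity.} Fix a bounded measurable $g$ on $\mathcal X$ and a set $B\in\mathcal B_{\mathcal Y}$. On $\{S(1)=1\}$ we have $\mathbbm 1\{S(1)=1\}=\mathbbm 1_{\mathcal A}+\mathbbm 1_{\mathcal M}$, so
\[
\ep[g(X)\mathbbm 1\{Y(1)\in B\}\mathbbm 1\{S(1)=1\}]
=\ep[g(X)\mathbbm 1\{Y(1)\in B\}\mathbbm 1_{\mathcal A}]+\ep[g(X)\mathbbm 1\{Y(1)\in B\}\mathbbm 1_{\mathcal M}].
\]
Write each term through conditional kernels:
\[
\int g(x)\,s_1(x)P_x(B)\,F_X(dx)
=\int g(x)\,s_0(x)Q_{0,x}(B)\,F_X(dx)+\int g(x)\,\{s_1(x)-s_0(x)\}R_x(B)\,F_X(dx).
\]
Here $R_x=\mathcal F\{Y(1)\mid\mathcal M,X=x\}$, defined arbitrarily where $s_1(x)=s_0(x)$. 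Since $g$ is arbitrary, we get, for $F_X$-a.e. $x$,
\[
s_1(x)P_x(B)=s_0(x)Q_{0,x}(B)+\{s_1(x)-s_0(x)\}R_x(B).
\]
We need this to hold for all $B$ simultaneously, off a single null set. For that, restrict $B$ to a countable generating $\pi$-system of $\mathcal B_{\mathcal Y}$ and extend by the $\pi$--$\lambda$ theorem. On $\{s_0>0\}$, which has full $H_X$ measure, dividing by $s_1(x)$ gives
\[
P_x=\pi(x)Q_{0,x}+\{1-\pi(x)\}R_x.
\]

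\emph{Step 3: domination and the bound.} Nonnegativity in the mixture gives $\pi(x)Q_{0,x}(B)\le P_x(B)$ for all $B$. Hence $Q_{0,x}\ll P_x$, and Radon--Nikodym applied on $\sigma$-finite spaces gives a density in $[0,1/\pi(x)]$, $P_x$-a.s. The case split then follows. If $\pi(x)<1$, set $R_{0,x}=(P_x-\pi(x)Q_{0,x})/(1-\pi(x))$; this is a nonnegative measure of total mass one. If $\pi(x)=1$, then $Q_{0,x}\le P_x$ setwise with both being probability measures, so complementation gives equality.

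Together with Step 1, $Q_0$ has the form $H_X(dx)Q_{0,x}(dy)$ with the required conditional restriction, so $Q_0\in\mathcal Q_{\rm full}$. As a cross-check, Proposition \ref{prop:full-equivalence} then also yields $Q_0\in\mathcal Q(P,\pi)$, which is consistent with Proposition \ref{prop:ident}.

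\emph{Main obstacle.} The delicate step is the measure-theoretic one in Step 2: obtaining the kernel identity for $H_X$-a.e. $x$ uniformly in $B$, and handling the versions of the conditional distributions on the set where $s_0$ or $s_1$ vanishes. I would handle this with the countable-generator argument above and by noting that $\{s_0=0\}$ is $H_X$-null, so modifications there do not affect the claim.
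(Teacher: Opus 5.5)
Your proposal is correct and follows essentially the same route as the paper's proof. Both identify the covariate marginal of $Q_0$ as $H_X$ via $\mathcal A=\{S(0)=1\}$, write the conditional mixture $P_x=\pi(x)Q_{0,x}+\{1-\pi(x)\}R_{0,x}$ on $\{s_0>0\}$, and read off domination and the $1/\pi(x)$ density bound from the setwise inequality $\pi(x)Q_{0,x}\leq P_x$. Your countable generating-class argument supplies the simultaneous-in-$B$ almost-sure step that the paper passes over, and your complementation argument for $\pi(x)=1$ is a slightly cleaner substitute for the paper's density argument.
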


\begin{prop}[Sharpness of $\mathcal Q_{\rm full}$]
\label{prop:full-sharpness}
For every
\[
\widetilde Q(dx,dy)=H_X(dx)\widetilde Q_x(dy)
\in\mathcal Q_{\rm full},
\]
there exists a data-generating process for $(Y(1),Y(0),S(1),S(0),X)$ that satisfies monotone selection, matches
\[
F_X,\qquad s_0(x),\qquad s_1(x),\qquad P_x,\qquad G_x,
\]
and satisfies
\[
\mathcal F\{X,Y(1)\mid\mathcal A\}=\widetilde Q.
\]
Any prescribed joint assignment-vector distribution with positive unit-specific margins may be appended independently of independent copies of the latent construction. The resulting observed process satisfies Assumption \ref{ass:random} and matches that assignment distribution.
\end{prop}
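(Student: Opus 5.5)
The statement to prove is Proposition \ref{prop:full-sharpness}. The plan is to build the latent process in the usual sequence: covariates, then principal stratum given covariates, then potential outcomes given stratum and covariates. It mirrors Proposition \ref{prop:sharp}, but now works conditionally on $x$.

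\emph{Covariates and strata.} Draw $X\sim F_X$. Given $X=x$, assign the stratum independently:
\begin{itemize}
\item $\mathcal A$ with probability $s_0(x)$;
\item $\mathcal M$ with probability $s_1(x)-s_0(x)$;
\item $\mathcal N$ with probability $1-s_1(x)$.
\end{itemize}
These are valid probabilities because $0\le s_0\le s_1\le 1$ for $F_X$-almost every $x$. On the null exceptional set we replace $s_0,s_1$ by measurable versions that satisfy the ordering. Set $S(0)=\mathbb 1\{\mathcal A\}$ and $S(1)=\mathbb 1\{\mathcal A\cup\mathcal M\}$. Monotonicity then holds by construction, and $\Pr\{S(d)=1\mid X=x\}=s_d(x)$ reproduces $F_X$, $s_0$ and $s_1$. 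The same integration gives $p_0>0$.

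\emph{Outcomes.} Fix a jointly measurable density $\tilde q_x=d\widetilde Q_x/dP_x\in[0,1/\pi(x)]$. Such a version exists because the spaces are standard Borel; one can take the density of $\widetilde Q$ with respect to $P$ through Proposition \ref{prop:full-equivalence}, or use a measurable Radon–Nikodym version for kernels. The outcomes are then drawn as follows.
\begin{itemize}
\item On $\mathcal A$ with $X=x$: draw $Y(1)\sim\widetilde Q_x$, and independently $Y(0)\sim G_x$.
\item On $\mathcal M$ with $X=x$ and $\pi(x)<1$: draw $Y(1)\sim R_x:=\{P_x-\pi(x)\widetilde Q_x\}/\{1-\pi(x)\}$, which is a probability kernel because $\pi(x)\tilde q_x\le 1$. $Y(0)$ is drawn from an arbitrary fixed law.
\item On $\mathcal N$: draw both outcomes from arbitrary fixed laws.
\end{itemize}
When $\pi(x)=1$ the stratum $\mathcal M$ has conditional probability zero, so the choice of $R_x$ there is irrelevant.

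\emph{Checks.} Among treated-selected units with $X=x$, the conditional law of $Y(1)$ is
\[
\frac{s_0(x)\widetilde Q_x+\{s_1(x)-s_0(x)\}R_x}{s_1(x)}=\pi(x)\widetilde Q_x+\{1-\pi(x)\}R_x=P_x .
\]
Among $\mathcal A$ units, $Y(0)\mid X=x\sim G_x$. The covariate law on $\mathcal A$ is $s_0(x)F_X(dx)/p_0=H_X$, so $\mathcal F\{X,Y(1)\mid\mathcal A\}=H_X\widetilde Q_x=\widetilde Q$. The identity for $P_x$ need only hold for $P_X$-almost every $x$. If $s_1(x)>0$ but $s_0(x)=0$, the defining condition of $\mathcal Q_{\rm full}$ places no restriction because $H_X$ gives that set measure zero. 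There we set $\pi(x)=0$ and $R_x=P_x$, which handles those $x$ directly.

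\emph{Assignment.} Append the assignment vector independently of i.i.d.\ copies of this latent construction. Assumption \ref{ass:random} then holds, and the observed conditional laws $\mathcal F(Y\mid D=d,S=1,X=x)$ equal $P_x$ and $G_x$ by consistency, exactly as in Proposition \ref{prop:sharp}.

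\emph{Main obstacle.} The hard step is measure-theoretic: the kernels $\widetilde Q_x$, $R_x$ and $\tilde q_x$ must be jointly measurable, and the inequality $\tilde q_x\le 1/\pi(x)$ must hold simultaneously for $H_X$-almost every $x$, with the $P_X$-versus-$H_X$ null sets handled carefully. I would resolve this by taking the joint density $dQ/dP$ given by Proposition \ref{prop:full-equivalence}, truncating it to the admissible range, and defining every kernel through that density.
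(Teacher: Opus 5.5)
Your proposal is correct and follows the paper's own construction essentially step for step. It draws $X\sim F_X$, then the stratum with probabilities $s_0(x)$, $s_1(x)-s_0(x)$, $1-s_1(x)$; it puts $\widetilde Q_x$ on $\mathcal A$, coupled independently with $G_x$ for $Y(0)$, and the residual $\{P_x-\pi(x)\widetilde Q_x\}/\{1-\pi(x)\}$ (equal to $P_x$ when $\pi(x)=0$) on $\mathcal M$, and it appends the assignment independently. Your extra care about versions of $s_d$ and of the kernels on $F_X$- and $H_X$-null sets is a refinement the paper leaves implicit when it says it chooses measurable versions of the kernels.
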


\newpage
\subsection{Proofs}

\begin{proof}[\textbf{Proof of Proposition \ref{prop:ident}}]

The measure theoretic arguments use the Radon--Nikodym theorem and disintegration results developed in \citet{kallenberg1997foundations}. Let
\[
Z=(X,Y(1)),\qquad
P=\mathcal F\{Z\mid S(1)=1\},\qquad
Q_0=\mathcal F\{Z\mid S(0)=1\}.
\]
Also, let $p_d=\Pr\{S(d)=1\}$ for $d\in\{0,1\}$ and $\pi=p_0/p_1$. Assumption \ref{ass:positivity} gives $p_0>0$, and Assumption \ref{ass:monosel} gives $0<p_0\leq p_1$. Hence, $\pi\in(0,1]$.

For every $B\in\mathcal B$,
\[
\begin{aligned}
\pi Q_0(B)
&=\frac{p_0}{p_1}\Pr(Z\in B\mid S(0)=1)\\
&=\frac{1}{p_1}\Pr\{Z\in B,S(0)=1\}\\
&\leq\frac{1}{p_1}\Pr\{Z\in B,S(1)=1\}\\
&=\Pr(Z\in B\mid S(1)=1)\\
&=P(B).
\end{aligned}
\]
The inequality follows because monotone selection implies $\{S(0)=1\}\subseteq\{S(1)=1\}$ almost surely. Thus,
\begin{equation}
\label{prop1:eqn1}
\pi Q_0\leq P
\end{equation}
as finite measures.

To establish absolute continuity, take any $B\in\mathcal B$ with $P(B)=0$.  \eqref{prop1:eqn1} gives $0\leq\pi Q_0(B)\leq P(B)=0$. Since $\pi>0$, $Q_0(B)=0$, and therefore $Q_0\ll P$. Let
\[
w=\frac{dQ_0}{dP}.
\]
Because $Q_0$ is a probability measure, $w\geq0$ $P$-almost surely and $\int_{\mathcal Z}w\,dP=1$.

Define the finite measure $\mu=P-\pi Q_0$. \eqref{prop1:eqn1} shows that $\mu$ is nonnegative, and for every $B \in \mathcal B$,
\[
\mu(B)
=P(B)-\pi Q_0(B)
=\int_B\{1-\pi w(z)\}\,P(dz).
\]
The Radon--Nikodym derivative of a nonnegative measure is nonnegative. Hence,
\[
1-\pi w(z)\geq0
\quad P\text{-almost surely},
\]
which implies
\[
0\leq\frac{dQ_0}{dP}(z)=w(z)\leq\frac{1}{\pi}
\quad P\text{-almost surely}.
\]
Thus, $Q_0\in\mathcal Q(P,\pi)$.

It remains to establish the mixture representation. If $\pi=1$, then $0\leq w\leq1$ and $\int w\,dP=1$. Therefore, $\int(1-w)\,dP=0$, and the nonnegativity of $1-w$ implies $w=1$ $P$-almost surely. Hence, $Q_0=P$.

Suppose instead that $\pi<1$. The measure $\mu=P-\pi Q_0$ has total mass $1-\pi$. Define
\[
R_{\rm res}(B)
=\frac{\mu(B)}{1-\pi}
=\frac{P(B)-\pi Q_0(B)}{1-\pi},
\qquad B\in\mathcal B.
\]
Then $R_{\rm res}$ is a probability measure and
\[
P=\pi Q_0+(1-\pi)R_{\rm res}.
\]
\end{proof}

\begin{proof}[\textbf{Proof of Proposition \ref{prop:sharp}}]

Fix $P\in\mathcal P(\mathcal Z,\mathcal B)$, $0<p_0\leq p_1\leq1$, and $\pi=p_0/p_1$. Let $\widetilde Q\in\mathcal Q(P,\pi)$.

\paragraph{Case 1. $\pi<1$.} 

Let $w=d\widetilde Q/dP$. For $B\in\mathcal B$, define
\[
\mu(B)
=P(B)-\pi\widetilde Q(B)
=\int_B\{1-\pi w(z)\}\,P(dz).
\]
Since $0\leq w\leq1/\pi$ $P$-almost surely, $\mu$ is nonnegative. Its total mass is $1-\pi$, so
\[
R(B)=\frac{P(B)-\pi\widetilde Q(B)}{1-\pi},
\qquad B\in\mathcal B,
\]
defines a probability measure. By construction,
\begin{equation}
\label{prop2:eqn1}
P=\pi\widetilde Q+(1-\pi)R.
\end{equation}

We now construct the latent population. Choose any $z_N\in\mathcal Z$. Define a joint distribution for the principal-stratum variable $T\in\{\mathcal A,\mathcal M,\mathcal N\}$ and $Z=(X,Y(1))$ by
\[
\begin{aligned}
\Pr(T=\mathcal A,Z\in B)&=p_0\widetilde Q(B),\\
\Pr(T=\mathcal M,Z\in B)&=(p_1-p_0)R(B),\\
\Pr(T=\mathcal N,Z\in B)&=(1-p_1)\delta_{z_N}(B),
\end{aligned}
\qquad B\in\mathcal B,
\]
where $\delta_{z_N}$ is the point mass at $z_N$. In particular, 
\[
\Pr(T = \mathcal A)= p_0, \qquad \Pr(T = \mathcal M)= p_1 - p_0, \qquad \Pr(T = \mathcal N)= 1 - p_1.
\]
The three masses are nonnegative and sum to one. Define the potential selection indicators
\[
(S(0),S(1))=
\begin{cases}
(1,1), & T=\mathcal A,\\
(0,1), & T=\mathcal M,\\
(0,0), & T=\mathcal N.
\end{cases}
\]
Then $S(1)\geq S(0)$ almost surely and
\[
\Pr\{S(0)=1\}=p_0,
\qquad
\Pr\{S(1)=1\}=p_1.
\]
Hence, Assumptions \ref{ass:monosel} and \ref{ass:positivity} hold. 

Choose any fixed $y_0\in\mathcal Y$ and set $Y(0)=y_0$. We first verify the two latent distributions. For every $B\in\mathcal B$,
\[
\begin{aligned}
\Pr(Z\in B\mid S(1)=1)
&=\frac{p_0\widetilde Q(B)+(p_1-p_0)R(B)}{p_1}\\
&=\pi\widetilde Q(B)+(1-\pi)R(B)\\
&=P(B),
\end{aligned}
\]
where the last equality follows from \eqref{prop2:eqn1}. Hence,
\[
\mathcal F\{X,Y(1)\mid S(1)=1\}=P.
\]
Moreover, $S(0)=1$ if and only if $T=\mathcal A$, so
\[
\mathcal F\{X,Y(1)\mid S(0)=1\}
=\mathcal F(Z\mid T=\mathcal A)
=\widetilde Q.
\]
Let $D$ be any binary variable that is independent of the constructed potential outcome vector and satisfies $0<\Pr(D=1)<1$. Define
\[
S=DS(1)+(1-D)S(0),
\qquad
Y=DY(1)+(1-D)Y(0).
\]
Unit-level random assignment and consistency then give
\[
\mathcal F(X,Y\mid D=1,S=1)
=\mathcal F\{X,Y(1)\mid S(1)=1\}
=P.
\]
For a sample, independent copies of the latent construction can be combined with any prescribed distribution for $\mathbf D_n$ having positive unit-specific margins by appending that distribution independently. Hence, Assumption \ref{ass:random} is satisfied. This proves attainability when $\pi<1$.

\paragraph{Case 2. $\pi=1$.}
In this case, $p_0=p_1$. Since $\widetilde Q\in\mathcal Q(P,1)$, its density $w=d\widetilde Q/dP$ satisfies $0\leq w\leq1$ and $\int w\,dP=1$. Thus, $w=1$ $P$-almost surely, and
\begin{equation}
\label{prop2:eqn2}
\widetilde Q=P.
\end{equation}

Choose any $z_N\in\mathcal Z$ and define
\[
\Pr(T=\mathcal A,Z\in B)=p_1P(B),
\qquad
\Pr(T=\mathcal N,Z\in B)=(1-p_1)\delta_{z_N}(B).
\]
Set $(S(0),S(1))=(1,1)$ when $T=\mathcal A$ and $(S(0),S(1))=(0,0)$ when $T=\mathcal N$, and choose a fixed $Y(0)$. Then
\[
\mathcal F(Z\mid S(1)=1)
=\mathcal F(Z\mid S(0)=1)
=P=\widetilde Q.
\]
Appending any binary assignment variable independent of the latent construction, with assignment probability in $(0,1)$, and applying consistency gives
\[
\mathcal F(X,Y\mid D=1,S=1)=P.
\]
As in the first case, any prescribed distribution for $\mathbf D_n$ with positive unit-specific margins may be appended independently in a sample. Therefore, the construction matches the required observables and target distribution.

Proposition \ref{prop:ident} gives necessity. The two constructions give sufficiency because every $\widetilde Q\in\mathcal Q(P,\pi)$ is attainable while $(P,p_0,p_1)$ remains fixed. Hence, $\mathcal Q(P,\pi)$ is the sharp identification region based on the reduced observables.
\end{proof}

\begin{proof}[\textbf{Proof of Lemma \ref{lem:calibration-design}}]
Let $P_W = \mathcal F(W_1)$ denote the distribution of potential outcome vector. $\mathbf D_n\indep\mathbf W_n$ by Assumption \ref{ass:random} and $\mathcal F(\mathbf W_n)=P_W^{\otimes n}$ because $W_1, \ldots, W_n$ are independent and identically distributed. Hence,
\[
\mathcal F(\mathbf W_n\mid\mathbf D_n)=P_W^{\otimes n}
\qquad\text{almost surely}.
\]
For $d,s\in\{0,1\}$, define the conditional kernel
\[
K_{d,s}
=\mathcal F\big(W_i\mid S_i(d)=s\big),
\]
using an arbitrary version when $\Pr\{S_i(d)=s\}=0$. Since
\[
S_i=S_i(D_i)
=D_iS_i(1)+(1-D_i)S_i(0)
\]
is a coordinatewise function of $(D_i,W_i)$, conditioning the preceding product distribution on $\mathbf S_n = (S_1, \ldots, S_n)$ gives
\begin{equation}
\label{eq:calibration-product-W}
\mathcal F(\mathbf W_n\mid\mathbf D_n,\mathbf S_n)
=
\bigotimes_{i=1}^n K_{D_i,S_i}
\qquad\text{almost surely}.
\end{equation}
Since $R_{\rm sp}\indep(\mathbf D_n,\mathbf W_n)$, (\ref{eq:calibration-product-W}) remains valid after conditioning on the sigma-field $\mathcal G_n=\sigma(\mathbf D_n,\mathbf S_n,R_{\rm sp})$.

Fix $r\in\{0,\ldots,n\}$ with $\Pr(m=r)>0$. On $\{m=r\}$, the indices $i_1<\cdots<i_r$ are $\mathcal G_n$-measurable. For every $j\leq r$, $i_j\in I_{\rm cal}\subseteq I_1$, so $D_{i_j}=S_{i_j}=1$. By consistency and the common distribution of the $W_i$'s,
\[
\mathcal F(Z_j\mid\mathcal G_n)
=
\mathcal F\big((X_{i_j},Y_{i_j}(1))\mid S_{i_j}(1)=1\big)
=
\mathcal F\big((X,Y(1))\mid S(1)=1\big)
=P.
\]
The product kernel in \eqref{eq:calibration-product-W} also yields conditional independence across the coordinates indexed by $I_{\rm cal}$. Hence
\begin{equation}
\label{eq:calibration-product-Z}
\mathcal F\big((Z_1,\ldots,Z_r)\mid\mathcal G_n\big)
=P^{\otimes r}
\qquad\text{almost surely on }\{m=r\}.
\end{equation}

Because $I_{\rm train}\cap I_{\rm cal}=\varnothing$, the same product kernel implies
\[
\sigma(Z_1,\ldots,Z_r)
\indep
\mathcal T_n
\mid\mathcal G_n
\qquad\text{on }\{m=r\},
\]
where $\mathcal T_n = \sigma((X_i,Y_i)_{i \in I_{\rm train}})$. The restriction
$R_{\rm fit}\indep(\mathbf D_n,\mathbf W_n,R_{\rm sp})$
then gives
\[
\sigma(Z_1,\ldots,Z_r)
\indep
\mathcal T_n\vee\sigma(R_{\rm fit})
\mid\mathcal G_n
\qquad\text{on }\{m=r\}.
\]
Since $\mathcal H=\mathcal G_n\vee\mathcal T_n\vee\sigma(R_{\rm fit})$,  (\ref{eq:calibration-product-Z}) implies
\[
\mathcal F\big((Z_1,\ldots,Z_r)\mid\mathcal H\big)
=P^{\otimes r}
\qquad\text{almost surely on }\{m=r\}.
\]
\end{proof}

\begin{proof}[\textbf{Proof of Theorem \ref{thm1:coverage}}]

Fix an arbitrary $Q\in\mathcal Q(P,\pi)$. By Lemma \ref{lem:calibration-design} and the target-sampling condition in the theorem, the regular conditional distribution under $\mathbb P$, given $\mathcal H$ and each event $\{m=r\}$, is $P^{\otimes r}\otimes Q$. The score function $V$, the calibration size $m$, the share used for calibration $\widehat\pi$, and the index $k$ are fixed conditional on $\mathcal H$. Thus the calibration observations are independent draws from $P$, while the test observation $Z_{m+1}=(X_{m+1},Y_{m+1})$ is independent of the calibration sample and has distribution $Q$. We suppress the event $\{m=r\}$ in the notation below.

Let $V_{m+1}=V(Z_{m+1})$. By definition of the prediction set,
\[
\{Y_{m+1}\notin\widehat C_1(X_{m+1})\}
=
\{V_{m+1}>\widehat t\}.
\]
For $t\in\overline{\mathbb R}$, define
\[
A_t=\{z\in\mathcal Z:V(z)>t\},
\qquad A_{+\infty}=\varnothing.
\]
Let $\mathcal C=\sigma(Z_1,\ldots,Z_m)$. The threshold $\widehat t=V_{(k)}$ is $\mathcal H\vee\mathcal C$ measurable, and the test observation is conditionally independent of $\mathcal C$ given $\mathcal H$. Therefore,
\[
\begin{aligned}
\mathbb P(V_{m+1}>\widehat t\mid\mathcal H)
&=\mathbb E_{\mathbb P}
\left[
\mathbb P(V_{m+1}>\widehat t\mid\mathcal H,\mathcal C)
\mid\mathcal H
\right]\\
&=\mathbb E_{\mathbb P}\{Q(A_{\widehat t})\mid\mathcal H\}.
\end{aligned}
\]

The likelihood ratio restriction gives, for every $B\in\mathcal B$,
\[
Q(B)\leq\frac{1}{\pi}P(B).
\]
Applying this inequality to $A_{\widehat t}$ and taking conditional expectations yields
\begin{equation}
\label{thm1:EPAt}
\mathbb P(V_{m+1}>\widehat t\mid\mathcal H)
\leq
\frac{1}{\pi}
\mathbb E_{\mathbb P}\{P(A_{\widehat t})\mid\mathcal H\}.
\end{equation}

Introduce an auxiliary observation $Z_{m+1}^{\circ}\sim P$ that is conditionally independent of $Z_1,\ldots,Z_m$ given $\mathcal H$, and define $V_{m+1}^{\circ}=V(Z_{m+1}^{\circ})$. Conditional on $\mathcal H$ and the calibration sample, the threshold is fixed. Hence,
\begin{equation}
\label{thm1:Vm+1circ}
\mathbb E_{\mathbb P}\{P(A_{\widehat t})\mid\mathcal H\}
=
\mathbb P(V_{m+1}^{\circ}>\widehat t\mid\mathcal H)
=
\mathbb P(V_{m+1}^{\circ}>V_{(k)}\mid\mathcal H).
\end{equation}
Combining \eqref{thm1:EPAt} and \eqref{thm1:Vm+1circ} gives
\begin{equation}
\label{thm1:combining1}
\mathbb P(V_{m+1}>\widehat t\mid\mathcal H)
\leq
\frac{1}{\pi}
\mathbb P(V_{m+1}^{\circ}>V_{(k)}\mid\mathcal H).
\end{equation}

We next handle ties. The rank step
is the standard split-conformal argument
(see, e.g., \citet{vovk2005algorithmic,lei2018distribution}). Let $U_1,\ldots,U_m,U_{m+1}$ be independent $\operatorname{Unif}(0,1)$ variables that are independent of all other variables. Define
\[
W_i=(V_i,U_i),\quad i=1,\ldots,m,
\qquad
W_{m+1}=(V_{m+1}^{\circ},U_{m+1}),
\]
and order these pairs using lexicographic order:
\[
(v,u) <_{\text{lex}}(v',u') \iff [v < v'] \text{ or } [v = v' \text{ and } u < u'].
\]
Conditional on $\mathcal H$, the $m+1$ augmented scores are exchangeable and almost surely distinct. Therefore, their lexicographic rank
\[
\mathcal R
=1+\sum_{i=1}^m\mathbf 1\{W_i<_{\rm lex}W_{m+1}\}
\]
is uniform:
\begin{equation}
\label{thm1:uniform_rank}
\mathbb P(\mathcal R=r\mid\mathcal H)=\frac{1}{m+1},
\qquad r=1,\ldots,m+1.
\end{equation}

If $V_{m+1}^{\circ}>V_{(k)}$, then at least $k$ calibration scores satisfy $V_i\leq V_{(k)}<V_{m+1}^{\circ}$. Each corresponding augmented score is lexicographically smaller than $W_{m+1}$. Thus,
\begin{equation}
\label{thm1:subset}
\{V_{m+1}^{\circ}>V_{(k)}\}
\subseteq
\{\mathcal R\geq k+1\}.
\end{equation}
The inclusion also holds when $k=m+1$, since the event on the left is then empty. Equations \eqref{thm1:uniform_rank} and \eqref{thm1:subset} imply
\begin{equation}
\label{thm1:combining2}
\mathbb P(V_{m+1}^{\circ}>V_{(k)}\mid\mathcal H)
\leq
\frac{m+1-k}{m+1}.
\end{equation}
Combining \eqref{thm1:combining1} and \eqref{thm1:combining2} gives
\[
\mathbb P\{Y_{m+1}\notin\widehat C_1(X_{m+1})\mid\mathcal H\}
\leq
\frac{m+1-k}{(m+1)\pi}.
\]
Therefore,
\begin{equation}
\label{thm1:combining3}
\mathbb P\{Y_{m+1}\in\widehat C_1(X_{m+1})\mid\mathcal H\}
\geq
1-\frac{m+1-k}{(m+1)\pi}
\geq
1-\alpha-\widehat\Delta.
\end{equation}

It remains to bound $\widehat\Delta$. Since
\[
k=\left\lceil(m+1)(1-\alpha\widehat\pi)\right\rceil
\geq
(m+1)(1-\alpha\widehat\pi),
\]
we have
\[
\frac{m+1-k}{(m+1)\pi}
\leq
\alpha\frac{\widehat\pi}{\pi}.
\]
Consequently,
\[
\widehat\Delta
\leq
\left[\alpha\frac{\widehat\pi}{\pi}-\alpha\right]_+
=
\alpha\left(\frac{\widehat\pi}{\pi}-1\right)_+
=
\frac{\alpha}{\pi}(\widehat\pi-\pi)_+.
\]
In particular, $\widehat\Delta=0$ whenever $\widehat\pi\leq\pi$.
\end{proof}

\begin{proof}[\textbf{Proof of Corollary
\ref{coro:plugin-asymptotic-coverage}}]
For each $n$ and every $Q\in\mathcal Q(P,\pi)$, Theorem
\ref{thm1:coverage} gives
\[
\mathbb P_{n,Q}\!
\left\{
Y_{n,m_n+1}\in
\widehat C_{1,n}(X_{n,m_n+1})
\mid\mathcal H_n
\right\}
\geq 1-\alpha-\widehat\Delta_n,
\]
where
\[
0\leq\widehat\Delta_n
\leq
\frac{\alpha}{\pi}(\widehat\pi_n-\pi)_+.
\]
Since $\widehat\pi_n \overset{p}{\longrightarrow} \pi$,
\[
    \frac{\alpha}{\pi}(\widehat \pi_n - \pi)_+ \overset{p}{\longrightarrow} 0.
\]
This proves the conditional shortfall statement. Since $0\leq\widehat\pi_n\leq1$, the error satisfies
$0\leq\widehat\Delta_n\leq\alpha/\pi$. For every $\epsilon>0$,
\[
E(\widehat\Delta_n)
\leq\epsilon+\frac{\alpha}{\pi}
\Pr(\widehat\Delta_n>\epsilon).
\]
Hence, convergence in probability implies
$E(\widehat\Delta_n)\to0$. This expectation is common to all target distributions under the experiment specified above.  Taking expectations in the finite-sample inequality gives, uniformly over $Q\in\mathcal Q(P,\pi)$,
\[
\mathbb P_{n,Q}\!
\left\{
Y_{n,m_n+1}\in
\widehat C_{1,n}(X_{n,m_n+1})
\right\}
\geq 1-\alpha-E(\widehat\Delta_n).
\]
Taking the infimum over $Q$ and then the lower limit proves the result.
\end{proof}

\begin{proof}[\textbf{Proof of Corollary \ref{coro:iteband}}]
For any set $C$ and scalars $a$ and $b$, $a-b\in C-b$ if and only if $a\in C$. Applying this identity pathwise gives
\[
\begin{aligned}
\left\{\tau_{m+1}\in
\widehat C_\tau\bigl(X_{m+1},Y_{m+1}(0)\bigr)\right\}
&=
\left\{Y_{m+1}(1)-Y_{m+1}(0)
\in
\widehat C_1(X_{m+1})-Y_{m+1}(0)\right\}\\
&=
\left\{Y_{m+1}(1)\in\widehat C_1(X_{m+1})\right\}.
\end{aligned}
\]
This equality does not require independence between $Y(0)$ and $Y(1)$.

Under $\mathbb P_\Gamma$, the calibration observations have conditional distribution $P^{\otimes m}$, and the independent test unit has joint potential-outcome distribution $\Gamma$. Its $(X,Y(1))$ marginal is $Q_\Gamma\in\mathcal Q(P,\pi)$. Theorem \ref{thm1:coverage} therefore gives
\[
\mathbb P_\Gamma
\left\{
\tau_{m+1}\in
\widehat C_\tau\bigl(X_{m+1},Y_{m+1}(0)\bigr)
\mid\mathcal H
\right\}
\geq
1-\alpha-\widehat\Delta.
\]

It remains to connect $\Gamma$ to selected controls. Let $W=(X,Y(0),Y(1))$. For a control unit, $S=S(0)$. Monotone selection implies $\{S(0)=1\}=\mathcal A$ almost surely, and unit-level random assignment implied by Assumption \ref{ass:random} gives $D\indep(W,S(0),S(1))$. Hence,

\[
\begin{aligned}
\mathcal F(W\mid D=0,S=1)
&=\mathcal F(W\mid S(0)=1)\\
&=\mathcal F(W\mid\mathcal A)\\
&=\Gamma.
\end{aligned}
\]
Thus, an independent selected-control draw has distribution $\Gamma$, and its untreated outcome $Y(0)$ is observed. The shifted set consequently has the stated marginal ITE coverage.
\end{proof}

\begin{proof}[\textbf{Proof of Proposition \ref{prop:sharpscore}}]
Fix $t\in\mathbb R$ and define
\[
A_t=\{z\in\mathcal Z:V(z)>t\},
\qquad
a=P(A_t).
\]
Because $V$ is measurable, $A_t\in\mathcal B$. For any $Q\in\mathcal Q(P,\pi)$, let $w=dQ/dP$. Then
\[
Q(A_t)
=\int_{A_t}w\,dP
\leq
\frac{1}{\pi}P(A_t)
=\frac{a}{\pi}.
\]
Since $Q(A_t)\leq1$,
\begin{equation}
\label{prop3:eqn1}
\sup_{Q\in\mathcal Q(P,\pi)}Q(A_t)
\leq
\min\left\{1,\frac{a}{\pi}\right\}.
\end{equation}
We show that this upper bound is attainable.

\paragraph{Case 1. $a\leq\pi$.}
Suppose first that $a<1$. Define
\[
c=\frac{1-a/\pi}{1-a}
\]
and
\[
w_t(z)
=\frac{1}{\pi}\mathbbm{1}_{A_t}(z)
+c\mathbbm{1}_{A_t^c}(z).
\]
The inequality $a\leq\pi$ gives $c\geq0$, and
\[
\frac{1}{\pi}-c
=\frac{1-\pi}{\pi(1-a)}
\geq0.
\]
Hence, $0\leq w_t\leq1/\pi$ $P$-almost surely. Moreover,
\[
\int_{\mathcal Z}w_t\,dP
=\frac{a}{\pi}+c(1-a)
=1.
\]
Define $Q_t$ by $Q_t(B)=\int_Bw_t\,dP$. Then $Q_t\in\mathcal Q(P,\pi)$ and
\[
Q_t(A_t)=\frac{a}{\pi}
=\min\left\{1,\frac{a}{\pi}\right\}.
\]
If $a=1$ and $a\leq\pi\leq1$, then $\pi=1$. Taking $Q_t=P$ gives the same equality.

\paragraph{Case 2. $a>\pi$.}
In this case, $a>0$. Define
\[
Q_t(B)
=P(B\mid A_t)
=\frac{P(B\cap A_t)}{a},
\qquad B\in\mathcal B.
\]
Its Radon--Nikodym derivative is $dQ_t/dP=\mathbbm{1}_{A_t}/a$. Since $a>\pi$,
\[
0\leq\frac{dQ_t}{dP}\leq\frac{1}{\pi}
\quad P\text{-almost surely}.
\]
Thus, $Q_t\in\mathcal Q(P,\pi)$, and $Q_t(A_t)=1$. Because $a/\pi>1$,
\[
Q_t(A_t)
=1
=\min\left\{1,\frac{a}{\pi}\right\}.
\]

Both cases attain the upper bound in \eqref{prop3:eqn1}. Therefore,
\[
\sup_{Q\in\mathcal Q(P,\pi)}Q(V>t)
=
\min\left\{1,\frac{P(V>t)}{\pi}\right\}.
\]
\end{proof}

\begin{proof}[\textbf{Proof of Proposition \ref{prop:sharpthreshold}}]
Fix $t\in\mathbb R$. Proposition \ref{prop:sharpscore} implies
\begin{equation}
\label{prop4:eqn1}
\inf_{Q\in\mathcal Q(P,\pi)}Q(V\leq t)
=
\max\left\{0,1-\frac{P(V>t)}{\pi}\right\}.
\end{equation}
The attaining distribution in Proposition \ref{prop:sharpscore} also attains the infimum in \eqref{prop4:eqn1}.

Because $1-\alpha>0$, equation \eqref{prop4:eqn1} gives
\[
\inf_{Q\in\mathcal Q(P,\pi)}Q(V\leq t)\geq1-\alpha
\quad\Longleftrightarrow\quad
P(V>t)\leq\alpha\pi.
\]
Define
\[
\mathcal F_\alpha
=\{t\in\mathbb R:P(V>t)\leq\alpha\pi\}.
\]
Since $V$ is real valued and $0<\alpha\pi<1$, $\mathcal F_\alpha$ is nonempty and bounded below. Let $t^*=\inf\mathcal F_\alpha$.

Write $g(t)=P(V>t)$. The function $g$ is nonincreasing and right continuous. Since $\mathcal F_\alpha$ is an upper set, $t^*+1/n\in\mathcal F_\alpha$ for every $n\geq1$. Therefore,
\[
g(t^*)
=\lim_{n\to\infty}g(t^*+1/n)
\leq\alpha\pi,
\]
so $t^*\in\mathcal F_\alpha$. For every $t<t^*$, by contrast, $P(V>t)>\alpha\pi$.

Let $F(t)=P(V\leq t)$. Since $P(V>t)=1-F(t)$,
\[
\begin{aligned}
t^*
&=\inf\{t:P(V>t)\leq\alpha\pi\}\\
&=\inf\{t:F(t)\geq1-\alpha\pi\}\\
&=F^{-1}(1-\alpha\pi).
\end{aligned}
\]
Thus, $t^*$ is the lower $(1-\alpha\pi)$ quantile, including when $F$ has atoms.

Finally, take any $t<t^*$. Then $P(V>t)/\pi>\alpha$, and Proposition \ref{prop:sharpscore} provides $Q_t\in\mathcal Q(P,\pi)$ such that
\[
Q_t(V>t)
=\min\left\{1,\frac{P(V>t)}{\pi}\right\}
>\alpha.
\]
Equivalently, $Q_t(V\leq t)<1-\alpha$. Therefore, no threshold below $t^*$ can guarantee coverage uniformly over $\mathcal Q(P,\pi)$. This proves the minimax claim.
\end{proof}

\begin{proof}[\textbf{Proof of Proposition
\ref{prop:threshold-consistency}}]
The quantile argument follows the inversion logic discussed in \citet{van2000asymptotic}. Conditional on $\mathcal H_n$, we use \citet{hoeffding1963probability} at two
$\mathcal H_n$-measurable evaluation points. 

Because $\widehat\pi_n \overset{p}{\longrightarrow} \pi>0$ and $m_n \overset{p}{\longrightarrow} \infty$,
\[
\Pr\!\left\{
\alpha\widehat\pi_n\geq\frac{1}{m_n+1}
\right\}
\longrightarrow1.
\]
On this event,
$(m_n+1)(1-\alpha\widehat\pi_n)\leq m_n$, so $k_n\leq m_n$.
Hence $\Pr(k_n=m_n+1) \longrightarrow 0$.  On $\{k_n\leq m_n\}$, define $u_n=k_n/m_n$.  The definition of the ceiling function gives
\[
(m_n+1)(1-\alpha\widehat\pi_n)
\leq k_n
<
(m_n+1)(1-\alpha\widehat\pi_n)+1.
\]
Consequently,
\[
\left|u_n-(1-\alpha\widehat\pi_n)\right|
\leq\frac{2}{m_n}.
\]
It follows that
\[
|u_n-q|
\leq
\alpha|\widehat\pi_n-\pi|+\frac{2}{m_n}
\overset{p}{\longrightarrow} 0.
\]

For $m_n\geq1$, define the empirical CDF
\[
\widehat F_n(t)
=
\frac{1}{m_n}
\sum_{i=1}^{m_n}\mathbbm 1\{V_{n,i}\leq t\}.
\]
The lower quantile $t_n^*$ is $\mathcal H_n$-measurable.  Consequently, for
any fixed $\varepsilon>0$, conditional Hoeffding bounds can be applied at the
random points $t_n^*\pm\varepsilon$.

For part (i), fix $\epsilon > 0$ and  write $\eta=\eta_\varepsilon$.  Conditional on $\mathcal H_n$,
Hoeffding's inequality gives
\[
\Pr\!\left(
\left|
\widehat F_n(t_n^*\pm\varepsilon)
-F_n(t_n^*\pm\varepsilon)
\right|>\frac{\eta}{3}
\,\middle|\,
\mathcal H_n
\right)
\leq
2\exp\!\left(-\frac{2m_n\eta^2}{9}\right).
\]
Since $m_n \overset{p}{\longrightarrow} \infty$, the expectation of the right-hand side converges to
zero.  Thus both empirical deviations are at most $\eta/3$ with probability
tending to one.

Now consider the event on which
\[
F_n(t_n^*-\varepsilon)\leq q-\eta,
\qquad
F_n(t_n^*+\varepsilon)\geq q+\eta,
\]
\[
|u_n-q|<\eta/3,
\qquad k_n\leq m_n,
\]
and both empirical deviations are at most
$\eta/3$.  If $\widehat t_n\leq t_n^*-\varepsilon$, then at least $k_n$
calibration scores are no larger than $t_n^*-\varepsilon$. Thus,
\[
\widehat F_n(t_n^*-\varepsilon)\geq u_n.
\]
On the same event, however,
\[
\widehat F_n(t_n^*-\varepsilon)
\leq q-\frac{2\eta}{3}
<q-\frac{\eta}{3}
<u_n,
\]
which is a contradiction.  Similarly, if
$\widehat t_n>t_n^*+\varepsilon$, then fewer than $k_n$ scores are no larger
than $t_n^*+\varepsilon$. Thus,
\[
\widehat F_n(t_n^*+\varepsilon)
\leq\frac{k_n-1}{m_n}<u_n.
\]
But the same event also gives
\[
\widehat F_n(t_n^*+\varepsilon)
\geq q+\frac{2\eta}{3}
>q+\frac{\eta}{3}
>u_n,
\]
which is a contradiction.  Therefore,
\[
\Pr\{|\widehat t_n-t_n^*|>\varepsilon\}\longrightarrow0.
\]

For part (ii), fix $\varepsilon>0$ and define
\[
\eta
=
\frac{1}{3}
\min\left\{
q-F(t^*-\varepsilon),
F(t^*+\varepsilon)-q
\right\}>0.
\]
Uniform convergence of $F_n$ to $F$ implies that, with probability tending to
one,
\[
F_n(t^*-\varepsilon)\leq q-2\eta,
\qquad
F_n(t^*+\varepsilon)\geq q+2\eta.
\]
By the definition of the lower quantile, on
this event,
\[
t^*-\varepsilon<t_n^*\leq t^*+\varepsilon.
\]
Therefore, $t_n^* \overset{p}{\longrightarrow} t^*$.  Conditional Hoeffding bounds at the fixed points
$t^*\pm\varepsilon$ imply that both empirical deviations are at most
$\eta/2$ with probability tending to one.  Intersecting this event with
$|u_n-q|<\eta$ gives
\[
\widehat F_n(t^*-\varepsilon)
\leq q-\frac{3\eta}{2}
<q-\eta
<u_n
\]
and
\[
\widehat F_n(t^*+\varepsilon)
\geq q+\frac{3\eta}{2}
>q+\eta
>u_n
\]
with probability tending to one.
The same order statistic arguments used in part (i) then imply
\[
t^*-\varepsilon<\widehat t_n\leq t^*+\varepsilon.
\]
Therefore, $\widehat t_n \overset{p}{\longrightarrow} t^*$.
\end{proof}

\begin{proof}[\textbf{Proof of Proposition
\ref{prop:plugin-pi-consistency}}]

Condition \ref{cond:selection-counts} gives
\[
M_{d,n}\mid(D_1,\ldots,D_n)
\sim
\operatorname{Binomial}(N_{d,n},p_d).
\]
Therefore, on $\{N_{d,n}>0\}$,
\[
E(\widehat p_{d,n}\mid D_1,\ldots,D_n)=p_d,
\qquad
\operatorname{Var}(\widehat p_{d,n}\mid D_1,\ldots,D_n)
=
\frac{p_d(1-p_d)}{N_{d,n}}.
\]

For every
$\varepsilon>0$ and every positive integer $L$, conditional Chebyshev's
inequality gives
\[
\Pr\{|\widehat p_{d,n}-p_d|>\varepsilon\}
\leq
\Pr(N_{d,n}<L)
+
\frac{p_d(1-p_d)}{\varepsilon^2L}.
\]
Since $N_{d,n} / n \overset{p}{\longrightarrow} \rho_d > 0$, $\Pr(N_{d,n} < L) \longrightarrow 0$ for every fixed $L$. Taking the upper limit in $n$ and then letting $L\to\infty$ proves
$\widehat p_{d,n}\overset{p}{\longrightarrow} p_d$.  Positivity and monotone selection imply
$p_1\geq p_0>0$. Thus, the event $\{\widehat p_{1,n}=0\}$ has
probability tending to zero, and the continuous mapping theorem gives
\[
\widehat\pi_n
\overset{p}{\longrightarrow}
\min\{1,p_0/p_1\}
=p_0/p_1
=\pi.
\]
Finally,
\[
\frac{M_{1,n}}{n}
=
\frac{N_{1,n}}{n}\widehat p_{1,n}
\overset{p}{\longrightarrow}\rho_1p_1.
\]
Combining this result with $m_n/M_{1,n}\overset{p}{\longrightarrow}\lambda$ gives $m_n/n\overset{p}{\longrightarrow}\lambda\rho_1p_1>0$. Hence $m_n\overset{p}{\longrightarrow}\infty$.
\end{proof}

\begin{proof}[\textbf{Proof of Proposition \ref{prop:full-equivalence}}]
Suppose first that $Q=H_XQ_x\in\mathcal Q_{\rm full}$. The chain rule for Radon--Nikodym derivatives and disintegrations gives
\[
\frac{dQ}{dP}(x,y)
=\frac{dH_X}{dP_X}(x)\frac{dQ_x}{dP_x}(y)
=\frac{\pi(x)}\pi\frac{dQ_x}{dP_x}(y)
\leq\frac1\pi \quad P\text{-almost surely},
\]
using the version specified in the proposition and defining the product as zero on $\{dH_X/dP_X=0\}$. Hence $Q\in\mathcal Q(P,\pi)$ and $Q_X=H_X$.

Conversely, let $w=dQ/dP$, assume $0\leq w\leq1/\pi$, and suppose $Q_X=H_X$. Set
\[
r(x)=\int w(x,y)P_x(dy).
\]
Then 
\[
r= \frac{dQ_X}{dP_X}=\frac{dH_X}{dP_X}=\frac{\pi(x)}{\pi} \quad  P_X\text{-almost surely}.
\]
For $H_X$-almost every $x$, $r(x)>0$, and disintegration gives
\[
Q_x(dy)=\frac{w(x,y)}{r(x)}P_x(dy).
\]
Therefore
\[
\frac{dQ_x}{dP_x}(y)
=\frac{\pi}{\pi(x)}w(x,y)
\leq\frac1{\pi(x)} \quad P_x\text{-almost surely}.
\]
This proves the reverse inclusion.
\end{proof}

\begin{proof}[\textbf{Proof of Proposition \ref{prop:full-likelihood}}]
Unit-level random assignment implied by Assumption \ref{ass:random} gives
\[
\mathcal F(X\mid D=0,S=1)
=\mathcal F\{X\mid S(0)=1\}.
\]
Under monotone selection, $S(0)=1$ if and only if the unit belongs to $\mathcal A$, almost surely. Therefore,
\[
\mathcal F(X\mid\mathcal A)
=\mathcal F(X\mid D=0,S=1)
=H_X.
\]
Thus, the $X$ marginal of $Q_0$ is $H_X$.

Fix $x$ with $s_0(x)>0$. Such covariate values have full $H_X$ measure. Monotone selection gives $s_1(x)>0$, and
\[
\Pr\{\mathcal A\mid S(1)=1,X=x\}
=\frac{s_0(x)}{s_1(x)}
=\pi(x).
\]
Conditional on $X=x$ and $S(1)=1$, the population contains only the always-selected and marginal-in strata. Hence,
\[
P_x=\pi(x)Q_{0,x}+\{1-\pi(x)\}R_{0,x},
\]
where $R_{0,x}=\mathcal F\{Y(1)\mid\mathcal M,X=x\}$ when the marginal-in stratum has positive conditional probability. Otherwise, choose $R_{0,x}$ arbitrarily.

For every measurable $B\subseteq\mathcal Y$,
\[
\pi(x)Q_{0,x}(B)\leq P_x(B).
\]
Since $\pi(x)>0$, this measure inequality implies
\[
Q_{0,x}\ll P_x,
\qquad
0\leq\frac{dQ_{0,x}}{dP_x}\leq\frac{1}{\pi(x)}
\quad P_x\text{-almost surely}.
\]
Thus, $Q_0\in\mathcal Q_{\rm full}$. If $\pi(x)<1$, the residual distribution is
\[
R_{0,x}(B)
=\frac{P_x(B)-\pi(x)Q_{0,x}(B)}{1-\pi(x)}.
\]
If $\pi(x)=1$, the density bound becomes $0\leq dQ_{0,x}/dP_x\leq1$. Since both measures have total mass one, the density equals one $P_x$-almost surely, and $Q_{0,x}=P_x$.
\end{proof}

\begin{proof}[\textbf{Proof of Proposition \ref{prop:full-sharpness}}]
Fix $\widetilde Q\in\mathcal Q_{\rm full}$ and choose measurable versions of the kernels $\widetilde Q_x$, $P_x$, and $G_x$. We construct a compatible data-generating process.

First, draw $X\sim F_X$. Conditional on $X=x$, draw $T\in\{\mathcal A,\mathcal M,\mathcal N\}$ with probabilities
\[
\begin{aligned}
\Pr(T=\mathcal A\mid X=x)&=s_0(x),\\
\Pr(T=\mathcal M\mid X=x)&=s_1(x)-s_0(x),\\
\Pr(T=\mathcal N\mid X=x)&=1-s_1(x).
\end{aligned}
\]
These probabilities are nonnegative and sum to one. Define
\[
(S(0),S(1))=
\begin{cases}
(1,1), & T=\mathcal A,\\
(0,1), & T=\mathcal M,\\
(0,0), & T=\mathcal N.
\end{cases}
\]
Then monotone selection holds and
\[
\Pr\{S(d)=1\mid X=x\}=s_d(x),
\qquad d\in\{0,1\}.
\]

Next, define the conditional distribution of $Y(1)$. For $T=\mathcal A$, set
\[
Y(1)\mid X=x,T=\mathcal A\sim\widetilde Q_x.
\]
For $T=\mathcal M$ and $0<\pi(x)<1$, define
\[
\widetilde R_x(B)
=\frac{P_x(B)-\pi(x)\widetilde Q_x(B)}{1-\pi(x)}.
\]
The conditional density restriction in $\mathcal Q_{\rm full}$ makes the numerator a nonnegative measure of total mass $1-\pi(x)$, so $\widetilde R_x$ is a probability distribution. If $\pi(x)=0$, set $\widetilde R_x=P_x$. If $\pi(x)=1$, the marginal-in stratum has zero conditional probability; choose $\widetilde R_x$ arbitrarily. For $H_X$-almost every $x$ with $\pi(x)=1$, the density restriction also gives $\widetilde Q_x=P_x$. For $T=\mathcal N$, define $Y(1)$ arbitrarily.

For $T=\mathcal A$, choose a coupling of $Y(0)$ and $Y(1)$ with marginals $G_x$ and $\widetilde Q_x$. One admissible choice is conditional independence:
\[
(Y(0),Y(1))\mid X=x,T=\mathcal A
\sim
G_x\otimes\widetilde Q_x.
\]
For $T\in\{\mathcal M,\mathcal N\}$, define $Y(0)$ arbitrarily because these units are not selected under control.

Finally, for a generic unit, append a binary assignment variable independently of the constructed latent variables. For a sample, take independent copies of the latent construction and append the prescribed joint distribution of $\mathbf D_n$ independently of those copies. 

We verify the observed distribution. Consistency and unit-level random assignment implied by Assumption \ref{ass:random} give

\[
\Pr(S=1\mid D=d,X=x)=s_d(x),
\qquad d\in\{0,1\}.
\]
For $x$ with $s_1(x)>0$,
\[
\begin{aligned}
\mathcal F(Y\mid D=1,S=1,X=x)
&=\mathcal F\{Y(1)\mid S(1)=1,X=x\}\\
&=\pi(x)\widetilde Q_x+\{1-\pi(x)\}\widetilde R_x\\
&=P_x.
\end{aligned}
\]
For $x$ with $s_0(x)>0$, selected controls satisfy $T=\mathcal A$, so
\[
\mathcal F(Y\mid D=0,S=1,X=x)
=\mathcal F\{Y(0)\mid\mathcal A,X=x\}
=G_x.
\]
The construction also draws $X$ from $F_X$, so it matches every component of the full observed distribution.

It remains to verify the target. Since $T=\mathcal A$ with conditional probability $s_0(x)$,
\[
\begin{aligned}
\mathcal F\{X,Y(1)\mid\mathcal A\}(dx,dy)
&=\frac{s_0(x)}{p_0}F_X(dx)\widetilde Q_x(dy)\\
&=H_X(dx)\widetilde Q_x(dy)\\
&=\widetilde Q(dx,dy).
\end{aligned}
\]
Thus, every element of $\mathcal Q_{\rm full}$ is attainable while the full observed distribution remains fixed. Proposition \ref{prop:full-likelihood} gives the reverse inclusion. Hence, $\mathcal Q_{\rm full}$ is the sharp identification region.
\end{proof}

\newpage
\bibliographystyle{unsrtnat}
\bibliography{CLI_references}
\end{document}